\documentclass[11pt]{article}

\usepackage[margin=1in]{geometry}
\usepackage{amsmath,amssymb,amsthm,mathtools}
\usepackage{booktabs}
\usepackage{needspace}
\usepackage{array}
\usepackage{enumitem}
\usepackage{setspace}
\usepackage{microtype}
\usepackage{natbib}
\usepackage{xcolor}
\usepackage{hyperref}
\hypersetup{hidelinks}

\numberwithin{equation}{section}

\newtheorem{proposition}{Proposition}[section]

\newtheorem{corollary}[proposition]{Corollary}
\newtheorem{lemma}[proposition]{Lemma}
\newtheorem{assumption}{Assumption}[section]
\newtheorem{remark}{Remark}[section]

\newcommand{\E}{\mathbb{E}}
\newcommand{\Var}{\operatorname{Var}}
\newcommand{\Cov}{\operatorname{Cov}}
\newcommand{\1}{\mathbf{1}}
\newcommand{\cU}{\mathcal U}
\newcommand{\cS}{\mathcal S}
\newcommand{\cC}{\mathcal C}
\newcommand{\R}{\mathbb R}

\newcommand{\sate}{\tau_{\cS}}

\newcommand{\anchor}[1]{}
\newcounter{linkageexample}

\title{Randomized Experiments with Ambiguous Record Linkage}
\author{Bruno Ferman \qquad Lucas Finamor\\
\small Sao Paulo School of Economics - FGV}
\date{First draft: October 9, 2026}

\begin{document}
\maketitle

\begin{abstract}
Randomized experiments are often linked to administrative data to measure outcomes, but coarse identifiers can leave multiple administrative records consistent with each experimental unit. We show that unbiased estimation of the sample average treatment effect is possible without resolving these ambiguous links or specifying an outcome model. Our key linkage requirement is that the true administrative outcome is contained among the candidate outcomes for each experimental unit. We characterize the precision cost of linkage ambiguity and study two complementary ways to mitigate it: regression adjustment and reweighting based on observed linkage characteristics. We establish finite-sample and asymptotic properties of these estimators and develop valid randomization-based and large-sample inference.
\end{abstract}

\section{Introduction}
\label{sec:intro}

Administrative records provide an attractive way to measure outcomes in randomized experiments. They can reduce survey costs, avoid attrition from follow-up surveys, and measure outcomes that are difficult to collect directly. A practical obstacle is that the experimental file and the administrative file may not share a unique identifier. If the available identifiers are coarse---for example, names shared by multiple individuals or only a subset of the digits of a national identification number---one experimental unit may be compatible with several administrative records. These compatible records form the unit's \emph{candidate set}.

We show that the sample average treatment effect (SATE) can be identified without resolving these ambiguous links. For candidate sets defined by fixed coarse identifiers, we require that the true administrative outcome of each experimental unit be contained among its candidate outcomes. Rather than selecting or imputing the true record, we construct an alternative outcome by summing the outcomes of all administrative records consistent with the observed identifier. This sum contains the true outcome together with nuisance components contributed by false candidates. We exploit the known treatment-assignment mechanism to account for these components and construct unbiased estimators of the SATE without identifying the true record or assigning probabilities to candidate identities.

Our setting is particularly relevant when the administrative data contain the outcome of interest for a population that is broader than the experimental sample, but individual records can only be linked to experimental units through coarse identifiers. Examples include earnings from tax records, hospitalizations from health records, and birth outcomes from vital-statistics records. When absence from the administrative outcome file implies that the true outcome is zero, the database need only contain records for units with nonzero outcomes. For mortality or participation in a social program, for example, absence from the administrative outcome file can then be interpreted as a zero outcome, provided all nonzero outcomes are recorded.

We establish both finite-sample and large-sample results in a finite-population framework. For expositional clarity, our main analysis focuses on independent Bernoulli treatment assignment. The baseline estimator applies the Horvitz--Thompson contrast to the candidate-sum outcomes and is exactly unbiased for the SATE. When the experimental sample is drawn representatively from a well-defined target subpopulation, it is also unbiased for the corresponding subpopulation average treatment effect.  We also characterize the precision cost associated with ambiguous linkage and develop valid randomization-based and large-sample inference that accounts for the dependence induced by shared coarse identifiers.

The precision loss from ambiguous linkage can be mitigated in two complementary ways. First, reweighting assigns less weight to linkage cells expected to contribute more to the estimator’s variance. For example, cells with more candidate records may generate noisier candidate sums. Under heterogeneous treatment effects, however, reweighting generally changes the target from the SATE to a different weighted average of treatment effects. Second, regression adjustment can remove predictable differences in candidate-sum levels. When larger linkage cells have larger expected candidate sums, adjusting for cell size can reduce variation caused by chance imbalances between treatment and control. Sample-based adjustment can introduce finite-sample bias but preserves consistency for the SATE. The two strategies can also be combined.

Finally, we extend the analysis to stratified randomization with fixed treatment counts, including complete randomization as a special case. Under these designs, dependence between assignments can generate bias when experimental units share both a coarse identifier and a randomization stratum. We characterize this bias and provide a design-based correction that restores exact unbiasedness under suitable conditions. We establish asymptotic theory and variance estimators both for a fixed number of growing strata and for a growing number of small strata. The analysis also accommodates reweighting and regression adjustment.

There is a large literature on the analysis of imperfectly linked data
(see \cite{kamatgutman2026} for a recent survey). However,  existing methods need not consistently estimate the
SATE in randomized experiments. For instance, methods that rely on
unambiguous matches can fail when those matches disproportionately
represent individuals with different treatment effects, while pooled
corrections for linkage error can fail when treatment effects vary with
linkage ambiguity. Appendix~\ref{app:alternative-methods} analyzes a
range of methods, providing examples of settings in which they may fail to recover the SATE
\citep{lahiri2005,chambers2009,kimchambers2012a,kimchambers2012b,hofzwinderman2012,goldstein2012,hofzwinderman2015,steorts2015,sadinle2017,wortman2018,hanlahiri2019,chipperfield2019,chamberssilva2020,shan2021,zhangtuoto2021,zhangtuoto2024,slawskietal2025,bukkeslawski2025,hureiter2027}.
Other approaches use joint Bayesian models for record linkage and relationships among variables in the
linked files, including models for causal inference
\citep{gutman2013,guha2022,kamatshan2023}. Their validity for SATE
estimation, though, depends on the specification and identification of the joint
model. In contrast, under the stated assignment and candidate-set
conditions, we propose estimators that recover the SATE without
specifying an outcome-distribution model, estimating correct-link
probabilities, or restricting how treatment effects vary with linkage
ambiguity. We also develop a theory for precision-motivated weighted
estimators, making explicit the trade-off between precision and retaining
the SATE as the target estimand in settings with ambiguous linkage.

Our candidate-sum approach also accommodates outcomes measured using administrative files containing only nonzero records, whenever absence from the file implies that the outcome is zero, a data structure not directly covered by the implementations compared above. \citet{tahamont2021} study linkage errors in this setting for binary outcomes. Correcting the attenuation characterized in their analysis, however, requires knowledge of true-positive and false-positive rates, which are not generally identified from the positive-record file and treatment assignments alone, even under random assignment. In contrast, our candidate-sum SATE estimators exploit positive-record counts without requiring knowledge of these rates. Finally, other approaches use additional information, such as verified links from clerical review or an audit sample \citep{dasylva2018,gutman2016}. Our candidate-sum SATE estimators do not require such information.

\section{Setup and oracle benchmark}
\label{sec:setup}

\subsection{Finite administrative population and experimental sample}

Consider a fixed finite administrative population
\[
\cU=\{1,\ldots,M\}.
\]

We assume a well-defined treatment and no interference across units in $\cU$ (SUTVA), so each unit's outcome depends only on its own treatment assignment. Unit $i\in\cU$ has fixed potential outcomes $Y_i(0)$ and $Y_i(1)$ and treatment effect
\[
\tau_i=Y_i(1)-Y_i(0).
\]

The realized experimental sample is a fixed subset $\cS\subset\cU$ of size $N$. We do not require $\cS$ to be a random sample from $\cU$. This reflects the setting in which the administrative file may cover a broad population while the experiment is conducted in a particular region or among a selected set of participants. We take the SATE as our target.
\begin{equation}
\tau_{\cS}=\frac{1}{N}\sum_{i\in\cS}\tau_i.
\label{eq:sample-ate}
\end{equation}

{If the experimental sample is randomly drawn from a well-defined subpopulation, the corresponding subpopulation average treatment effect can instead be taken as the target by incorporating the sampling design.}

\subsection{Treatment assignment and administrative outcomes}
\label{sec:assignment}

For simplicity, the main analysis uses independent Bernoulli treatment assignment. For every $i\in\cS$,
\begin{equation}
D_i\sim\operatorname{Bernoulli}(p),
\qquad 0<p<1,
\label{eq:bern-design}
\end{equation}
independently across experimental units. Let $\mathbf D=(D_i)_{i\in\cS}$. For notational convenience set $D_i=0$ for $i\notin\cS$, because treatment is not available outside the experiment. We consider complete and stratified randomization with fixed treatment counts in Section~\ref{sec:other-designs}.

In the full-register setting, the administrative file contains one outcome record per unit, reporting $Y_i^{\mathrm{adm}}$.
 The administrative outcome observed after assignment is
\begin{equation}
Y_i^{\mathrm{adm}}
=
Y_i(D_i)
=
Y_i(0)+D_i\tau_i,
\qquad i\in\cU.
\label{eq:admin-y}
\end{equation}

\subsection{Coarse identifiers and candidate outcomes}

The observed coarse identifier partitions the administrative population into linkage cells $\cC_1,\ldots,\cC_{G}$. Let
\[
M_g=|\cC_g|
\]
be the number of administrative records in linkage cell $g$. For each experimental unit $i$, let $g(i)$ denote the index of the administrative linkage cell to which the coarse identifier recorded in the experimental file maps that unit. Our coverage requirement is that $i\in\cC_{g(i)}$ for every $i\in\cS$, so that each unit's true administrative record is included among its candidates.

The partition and the linkage-cell sizes may be arbitrarily related to potential outcomes and treatment effects. We impose no restriction on the number, composition, or outcomes of the false candidates. However, we assume that both the administrative partition and the mapping $g(i)$ are unaffected by treatment assignment. 

For each linkage cell define the number of experimental units and treated experimental units,
\begin{equation}
 N_g^{\mathrm{cell}}=|\cC_g\cap\cS|,
 \qquad
 T_g^{\mathrm{cell}}=\sum_{i\in\cC_g\cap\cS}D_i,
\label{eq:ng-tg}
\end{equation}
and define the untreated administrative total in linkage cell $g$
\begin{equation}
B_g=\sum_{j\in\cC_g}Y_j(0).
\label{eq:bg}
\end{equation}

The total observed administrative outcome in linkage cell $g$ is
\begin{equation}
\widetilde Y_g(\mathbf D)
=
\sum_{j\in\cC_g}Y_j^{\mathrm{adm}}
=
B_g+
\sum_{j\in\cC_g\cap\cS}D_j\tau_j.
\label{eq:linked-cell-outcome}
\end{equation}

Every experimental unit $i\in\cC_g$ is assigned the same candidate-sum outcome,
\begin{equation}
\widetilde Y_i(\mathbf D)=\widetilde Y_{g(i)}(\mathbf D),
\label{eq:linked-unit-outcome}
\end{equation}
where this observed outcome may depend on the treatment assignments of other experimental units sharing the same coarse identifier.

\begin{remark}[Administrative files omitting zero outcomes]
Suppose the administrative outcome file contains only records for which $Y_i^{\mathrm{adm}}\neq 0$, with absence from the file implying $Y_i^{\mathrm{adm}}=0$. The sum in \eqref{eq:linked-cell-outcome} is still observed because records with zero outcomes contribute nothing to it. The relevant coverage condition is that every realized nonzero outcome is recorded and retained by the candidate rule.
The set of observed nonzero-outcome records may change with assignment, but the underlying linkage partition and the mapping $g(i)$ remain fixed.
\end{remark}
\subsection{Oracle benchmark}

If exact linkage were available, we would observe each experimental unit's own outcome $Y_i(D_i)$. As a benchmark, consider the infeasible Horvitz--Thompson estimator \citep{horvitz1952} under Bernoulli assignment:\footnote{Compared with the usual difference-in-means estimator, the Horvitz--Thompson estimator uses the known assignment probabilities $p$ and $1-p$, rather than the realized treatment shares $N_1/N$ and $1-N_1/N$, where $N_1=\sum_{i\in\cS}D_i$. Under Bernoulli assignment, both estimators are unbiased for the oracle estimand, conditional on nondegenerate realized treatment counts in the case of difference-in-means. We use the Horvitz--Thompson form here because its extension to the candidate-sum outcomes introduced below remains exactly unbiased in the presence of ambiguous linkage, which simplifies the analysis. Section~\ref{sec:adjustment} studies a broader class of candidate-sum estimators that includes the corresponding difference-in-means estimator as a special case.}

\begin{equation}
\widehat\tau^{\mathrm O}
=
\frac{1}{N}\sum_{i\in\cS}
\left\{
\frac{D_iY_i(D_i)}{p}
-
\frac{(1-D_i)Y_i(D_i)}{1-p}
\right\}.
\label{eq:oracle}
\end{equation}

This estimator satisfies
\begin{equation}
\E_D[\widehat\tau^{\mathrm O}]=\tau_{\cS},
\label{eq:oracle-unbiased}
\end{equation}
and independence across units gives
\begin{equation}
\Var_D(\widehat\tau^{\mathrm O})
=
\frac{1}{N^2}\sum_{i\in\cS}
\left\{
\frac{Y_i(1)^2}{p}
+
\frac{Y_i(0)^2}{1-p}
-
\tau_i^2
\right\},
\label{eq:oracle-var}
\end{equation}
where expectations and variances indexed by $D$ are taken over the treatment assignment, conditional on the fixed administrative population, experimental sample, potential outcomes, and linkage cells.

The oracle will serve as a precision benchmark. Our objective is to preserve the causal identification that exact linkage would deliver while quantifying the additional randomization variance created by ambiguous linkage and mitigating the resulting loss of precision.

\section{Finite-sample results under Bernoulli assignment}
\label{sec:finite}

\subsection{Baseline estimator}
\label{sec:baseline}

We now construct an estimator that is exactly unbiased for the SATE under independent Bernoulli assignment, even when individual links are unresolved and several experimental units share the same candidate set. Define the baseline estimator
\begin{align}
\widehat\tau^{\mathrm B}
&=
\frac{1}{N}\sum_{i\in\cS}
\left\{
\frac{D_i\widetilde Y_i(\mathbf D)}{p}
-
\frac{(1-D_i)\widetilde Y_i(\mathbf D)}{1-p}
\right\}
\notag\\
&=
\frac{1}{N}\sum_{g=1}^{G}
\frac{T_g^{\mathrm{cell}}-pN_g^{\mathrm{cell}}}{p(1-p)}\,\widetilde Y_g(\mathbf D).
\label{eq:baseline}
\end{align}

The first expression compares candidate-sum outcomes across treatment arms, dividing the treated and control totals by the expected numbers of treated and control units, $Np$ and $N(1-p)$, respectively. The second expression writes the same estimator in terms of linkage-cell contributions. Each linkage cell's administrative outcome total is weighted by $T_g^{\mathrm{cell}}-pN_g^{\mathrm{cell}}$, the difference between its realized and expected numbers of treated experimental units, and the sum is normalized by $Np(1-p)$. Thus, linkage cells with more treated units than expected receive positive weights, whereas linkage cells with fewer treated units than expected receive negative weights. Note that linkage cells containing no experimental units have $N_g^{\mathrm{cell}}=T_g^{\mathrm{cell}}=0$ and therefore make no contribution to the estimator.

We next characterize the expectation and variance of the baseline estimator.

\begin{proposition}[Exact unbiasedness and variance]
\label{prop:baseline}
Under independent Bernoulli assignment,
\begin{equation}
\E_D[\widehat\tau^{\mathrm B}]=\tau_{\cS}.
\label{eq:baseline-unbiased}
\end{equation}

Moreover,
\begin{equation}
\Var_D(\widehat\tau^{\mathrm B})
=
\frac{1}{N^2}\sum_{g=1}^{G}V_g,
\label{eq:baseline-var}
\end{equation}
where
\begin{equation}
V_g
=
\theta_g^2
+
\frac{
 N_g^{\mathrm{cell}}(B_g+p\theta_g)^2
 +2(1-2p)(B_g+p\theta_g)\theta_g
 +\{1+(N_g^{\mathrm{cell}}-6)p(1-p)\}Q_g
}{p(1-p)},
\label{eq:vg}
\end{equation}
with
\[
\theta_g=\sum_{i\in\cC_g\cap\cS}\tau_i,
\qquad
Q_g=\sum_{i\in\cC_g\cap\cS}\tau_i^2.
\]
\end{proposition}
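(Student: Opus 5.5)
The plan is to reduce the estimator to a sum of independent linkage-cell contributions and then compute the first two moments of each contribution using the central moments of a centered Bernoulli variable. For each $j\in\cS$ write $e_j=D_j-p$, so that $\E_D[e_j]=0$ and $e_j$ is independent across units. Set $v=p(1-p)$. The centered moments are
\[
\E_D[e_j^2]=v,\qquad \E_D[e_j^3]=v(1-2p),\qquad \E_D[e_j^4]=v(1-3v).
\]
Using the second line of \eqref{eq:baseline}, we have $T_g^{\mathrm{cell}}-pN_g^{\mathrm{cell}}=S_g$, where $S_g=\sum_{i\in\cC_g\cap\cS}e_i$. By \eqref{eq:linked-cell-outcome}, $\widetilde Y_g(\mathbf D)=A_g+W_g$, where
\[
A_g=B_g+p\theta_g,\qquad W_g=\sum_{j\in\cC_g\cap\cS}e_j\tau_j.
\]
Hence
\[
\widehat\tau^{\mathrm B}=\frac{1}{Nv}\sum_{g}Z_g,\qquad Z_g=A_gS_g+S_gW_g .
\]
Each $Z_g$ depends only on the assignments of the experimental units in $\cC_g$. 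The cells are disjoint and assignment is independent Bernoulli, so the $Z_g$ are mutually independent. Cells with no experimental units have $Z_g=0$.

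\textbf{Unbiasedness.} $A_g$ is fixed and $\E_D[S_g]=0$. Moreover, $\E_D[S_gW_g]=\sum_{i,j}\tau_j\E_D[e_ie_j]=v\theta_g$. Summing over cells gives $\E_D[\widehat\tau^{\mathrm B}]=\frac{1}{Nv}\sum_g v\theta_g=\tau_{\cS}$, because the cells partition $\cS$.

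\textbf{Variance.} By independence across cells, $\Var_D(\widehat\tau^{\mathrm B})=\frac{1}{N^2v^2}\sum_g\Var_D(Z_g)$. It therefore suffices to show that $\Var_D(Z_g)=v^2V_g$. I will expand
\[
\Var_D(Z_g)=A_g^2\Var_D(S_g)+2A_g\Cov_D(S_g,S_gW_g)+\Var_D(S_gW_g)
\]
and compute each term.

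The first term is $A_g^2N_g^{\mathrm{cell}}v$.

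For the covariance, $\E_D[S_g^2W_g]$ only receives contributions from all-equal indices, so it equals $\sum_i\tau_i\E_D[e_i^3]=v(1-2p)\theta_g$. This yields the cross term $2(1-2p)A_g\theta_g v$.

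The main computational step is $\Var_D(S_gW_g)=\E_D[S_g^2W_g^2]-v^2\theta_g^2$. I will expand $\E_D[S_g^2W_g^2]$ as $\sum_{i,j,k,l}\tau_k\tau_l\E_D[e_ie_je_ke_l]$ and classify index patterns. Only patterns with every index appearing at least twice survive. Writing $n=N_g^{\mathrm{cell}}$, the surviving patterns are:
\begin{itemize}
\item all four indices equal, contributing $v(1-3v)Q_g$;
\item $i=j\neq k=l$, contributing $v^2(n-1)Q_g$;
\item the two pairings $i=k\neq j=l$ and $i=l\neq j=k$, together contributing $2v^2(\theta_g^2-Q_g)$.
\end{itemize}
Subtracting $v^2\theta_g^2$ leaves
\[
\Var_D(S_gW_g)=vQ_g\{1+(n-6)v\}+v^2\theta_g^2 .
\]

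Adding the three pieces and dividing by $v^2$ reproduces \eqref{eq:vg} exactly. The $\theta_g^2$ term comes from the last display, and the remaining terms come with the factor $1/\{p(1-p)\}$.

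I expect the only real obstacle to be the bookkeeping in the fourth-moment expansion. In particular, one must avoid double-counting overlapping index cases, which is what produces the coefficient $-6$. I would double-check this step on a single-unit cell ($n=1$), where $Z_g=e_1(A_g+e_1\tau_1)$ and the variance can be computed directly.
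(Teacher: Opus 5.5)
Your proposal is correct and follows essentially the same route as the paper: both write $T_g^{\mathrm{cell}}-pN_g^{\mathrm{cell}}$ and $\widetilde Y_g(\mathbf D)$ in terms of centered Bernoulli variables, use independence across linkage cells, and evaluate the second-, third-, and fourth-order centered moments cell by cell. The differences are cosmetic: you decompose $\Var_D(Z_g)$ directly rather than computing the raw second moment and subtracting $\theta_g^2$, and you spell out the index-pattern count behind the $Q_g$ coefficient, which the paper states without detail.
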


Proofs of all propositions and corollaries are provided in Appendix~\ref{app:proofs}.

The expectation result is the central design-based property. Linkage-cell size can be highly informative about potential outcomes, false candidates can have systematically different outcomes from true candidates, and ambiguity can vary arbitrarily across experimental units. None of these features creates bias under independent treatment assignment.

A key potential cost of ambiguous linkage is reduced precision. To make the comparison with exact linkage transparent,  consider the case in which no linkage cell contains more than one experimental unit. 

\begin{corollary}[Variance relative to exact linkage when $N_g^{\mathrm{cell}}\leq1$]
\label{cor:variance-gap}
Suppose $N_g^{\mathrm{cell}}\leq1$ for every linkage cell. Then
\begin{align}
&\Var_D(\widehat\tau^{\mathrm B})
-
\Var_D(\widehat\tau^{\mathrm O})
\notag\\
&\quad=
\frac{1}{N^2p(1-p)}\sum_{i\in\cS}
\left[
\{B_{g(i)}+(1-p)\tau_i\}^2
-
\{Y_i(0)+(1-p)\tau_i\}^2
\right].
\label{eq:variance-gap}
\end{align}

If \(Y_i(0),Y_i(1)\geq0\) for all units, then the variance difference in \eqref{eq:variance-gap} is nonnegative.

\end{corollary}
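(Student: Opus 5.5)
The plan is to specialize the exact variance formula of Proposition~\ref{prop:baseline} to the case $N_g^{\mathrm{cell}}\le 1$. I will then rewrite both $V_g$ and the oracle summand in \eqref{eq:oracle-var} as perfect squares divided by $p(1-p)$, and match them unit by unit.

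\emph{Step 1 (reduce the sum over cells to a sum over units).} Cells with $N_g^{\mathrm{cell}}=0$ have $\theta_g=Q_g=0$, so \eqref{eq:vg} gives $V_g=0$ and these cells drop out. Each cell with $N_g^{\mathrm{cell}}=1$ contains exactly one experimental unit $i$ with $g(i)=g$. The map $i\mapsto g(i)$ is therefore a bijection between $\cS$ and the nonempty cells. For such a cell, $\theta_g=\tau_i$ and $Q_g=\tau_i^2$. Hence $\sum_g V_g=\sum_{i\in\cS}V_{g(i)}$.

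\emph{Step 2 (complete the square in $V_g$).} Put $B=B_{g(i)}$ and $\tau=\tau_i$. With $N_g^{\mathrm{cell}}=1$, the numerator in \eqref{eq:vg} is
\[
(B+p\tau)^2+2(1-2p)(B+p\tau)\tau+\{1-5p(1-p)\}\tau^2 .
\]
I will use $(1-2p)^2=1-4p(1-p)$ to write it as $\{B+(1-p)\tau\}^2-p(1-p)\tau^2$. This uses
\[
\{B+(1-p)\tau\}^2=\{(B+p\tau)+(1-2p)\tau\}^2 .
\]
Dividing by $p(1-p)$ and adding $\theta_g^2=\tau^2$, the $\tau^2$ terms cancel. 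This leaves
\[
V_{g(i)}=\frac{\{B_{g(i)}+(1-p)\tau_i\}^2}{p(1-p)}.
\]

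\emph{Step 3 (same form for the oracle).} Substitute $Y_i(1)=Y_i(0)+\tau_i$ into the summand of \eqref{eq:oracle-var}. Over the common denominator $p(1-p)$ its numerator is
\[
(1-p)(Y_i(0)+\tau_i)^2+pY_i(0)^2-p(1-p)\tau_i^2 .
\]
Expanding gives $Y_i(0)^2+2(1-p)Y_i(0)\tau_i+(1-p)^2\tau_i^2=\{Y_i(0)+(1-p)\tau_i\}^2$. Subtracting the two variances then yields \eqref{eq:variance-gap}.

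\emph{Step 4 (sign under nonnegative outcomes).} Write $Y_i(0)+(1-p)\tau_i=pY_i(0)+(1-p)Y_i(1)$, which is $\ge 0$. By the coverage condition $i\in\cC_{g(i)}$, we have
\[
B_{g(i)}+(1-p)\tau_i=\{Y_i(0)+(1-p)\tau_i\}+\sum_{j\in\cC_{g(i)},\,j\ne i}Y_j(0),
\]
and the last sum is $\ge0$. So each bracketed difference compares squares of two nonnegative numbers, the first at least the second. Each difference is therefore nonnegative, and so is the sum.

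The only step needing care is the algebra in Step 2, namely getting the coefficient $1+(N_g^{\mathrm{cell}}-6)p(1-p)=1-5p(1-p)$ to combine correctly with $(1-2p)^2$ and $\theta_g^2$. Once both sides are written in the squared form, the rest is immediate.
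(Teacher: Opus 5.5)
Your proof is correct, but for the baseline side it takes a different route from the paper. The paper never invokes Proposition~\ref{prop:baseline}. It writes both $\widehat\tau^{\mathrm B}$ and $\widehat\tau^{\mathrm O}$ as sums of independent unit-level terms $\frac{D_i-p}{p(1-p)}\{b+D_i\tau_i\}$ that differ only in the fixed scalar $b$ ($B_{g(i)}$ versus $Y_i(0)$). It then computes the two-point variance $\frac{(b+\tau_i)^2}{p}+\frac{b^2}{1-p}-\tau_i^2=\frac{\{b+(1-p)\tau_i\}^2}{p(1-p)}$ once and applies it to both. Your Step~3 is exactly this computation with $b=Y_i(0)$. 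For the baseline you instead specialize the general cell formula \eqref{eq:vg} and complete the square using $(1-2p)^2=1-4p(1-p)$. That algebra checks out: the numerator becomes $\{B+(1-p)\tau\}^2-p(1-p)\tau^2$, and the $\theta_g^2$ term cancels the leftover.

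The paper's route is shorter and self-contained: it needs only Bernoulli second moments and makes the ``same contribution, different fixed baseline'' structure transparent. Yours leans on the fourth-moment calculation behind Proposition~\ref{prop:baseline} plus the bookkeeping in Step~1. In return, it confirms that the general formula reduces correctly when $N_g^{\mathrm{cell}}=1$.

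Your Step~4 is a genuine addition. The paper's written proof stops at \eqref{eq:variance-gap} and never argues the sign claim. You supply it by writing $Y_i(0)+(1-p)\tau_i=pY_i(0)+(1-p)Y_i(1)\ge0$. You then use coverage to express $B_{g(i)}+(1-p)\tau_i$ as this quantity plus the nonnegative baseline outcomes of the false candidates. This step needs $Y_j(0)\ge0$ for the nonexperimental records $j\in\cC_{g(i)}\setminus\{i\}$. So the hypothesis must be read as covering all of $\cU$, which is what the statement's ``all units'' indicates.
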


For unrestricted real-valued outcomes there is no universal pointwise variance ordering: false-candidate outcomes could, by chance, generate sufficiently strong negative cancellation with the true outcome. Such a variance reduction requires systematic cancellation within linkage cells. In the absence of such cancelation, ambiguous linkage increases variance, which motivates the efficiency methods below.

\subsection{Precision weighting}
\label{sec:weighting}

The precision loss from ambiguous linkage need not be uniform across experimental units. The number and composition of administrative candidates can help predict how much each linkage cell contributes to the variance of the baseline estimator. We investigate whether precision can be improved by reweighting units based on these pre-treatment linkage characteristics. Because reweighting can also change the treatment-effect target, we characterize both the variance and the estimand of the weighted estimator. Let $w_g>0$ be a weight determined only by pre-treatment linkage information, such as $M_g$, name prevalence, or other features of the coarse identifier, and define
\[
W_{\cS}=\sum_g N_g^{\mathrm{cell}}w_g.
\]

The weighted estimator is
\begin{equation}
\widehat\tau^{\mathrm W}
=
\frac{1}{W_{\cS}}
\sum_g
w_g\frac{T_g^{\mathrm{cell}}-pN_g^{\mathrm{cell}}}{p(1-p)}\widetilde Y_g(\mathbf D).
\label{eq:weighted}
\end{equation}

Its estimand is the weighted sample effect
\begin{equation}
\tau_{w,\cS}
=
\frac{1}{W_{\cS}}
\sum_g w_g\theta_g.
\label{eq:weighted-target}
\end{equation}

\begin{proposition}[Fixed precision weights]
\label{prop:weighted}
If the weights are fixed with respect to treatment assignment, then
\begin{equation}
\E_D[\widehat\tau^{\mathrm W}]=\tau_{w,\cS}
\label{eq:weighted-unbiased}
\end{equation}
and
\begin{equation}
\Var_D(\widehat\tau^{\mathrm W})
=
\frac{1}{W_{\cS}^2}\sum_gw_g^2V_g,
\label{eq:weighted-var}
\end{equation}
with $V_g$ given by \eqref{eq:vg}. If treatment effects are homogeneous, $\tau_i=\tau$ for every $i\in\cS$, then $\tau_{w,\cS}=\tau_{\cS}=\tau$ for every positive weighting scheme.
\end{proposition}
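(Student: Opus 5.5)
The plan is to reduce Proposition~\ref{prop:weighted} to the cell-level structure already used for Proposition~\ref{prop:baseline}. Define the cell contribution
\[
Z_g=\frac{T_g^{\mathrm{cell}}-pN_g^{\mathrm{cell}}}{p(1-p)}\,\widetilde Y_g(\mathbf D),
\]
so that $\widehat\tau^{\mathrm B}=N^{-1}\sum_g Z_g$ and $\widehat\tau^{\mathrm W}=W_{\cS}^{-1}\sum_g w_gZ_g$. The weights $w_g$ and the normalizer $W_{\cS}$ depend only on the fixed partition and on $N_g^{\mathrm{cell}}$, so they are deterministic with respect to $\mathbf D$. It therefore suffices to establish two cell-level facts: $\E_D[Z_g]=\theta_g$, and $\Var_D(Z_g)=V_g$ together with $\Cov_D(Z_g,Z_h)=0$ for $g\neq h$. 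I expect both are exactly what the proof of Proposition~\ref{prop:baseline} delivers, since $\sum_g\theta_g=N\tau_{\cS}$ and the variance there has the form $N^{-2}\sum_g V_g$. I would either quote that argument or re-derive these two facts directly, as follows.

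\textbf{Expectation.} Write $T_g^{\mathrm{cell}}-pN_g^{\mathrm{cell}}=\sum_{i\in\cC_g\cap\cS}(D_i-p)$ and $\widetilde Y_g=B_g+\sum_{j\in\cC_g\cap\cS}D_j\tau_j$. Then
\[
\E_D[Z_g]=\frac{1}{p(1-p)}\left\{ B_g\,\E[D_i-p]+\sum_{i,j}\tau_j\,\E[(D_i-p)D_j]\right\}.
\]
The first term vanishes. For $i\neq j$, independence gives $\E[(D_i-p)D_j]=0$, and for $i=j$ we have $\E[(D_i-p)D_i]=p(1-p)$. Hence $\E_D[Z_g]=\theta_g$, and linearity yields $\E_D[\widehat\tau^{\mathrm W}]=W_{\cS}^{-1}\sum_g w_g\theta_g=\tau_{w,\cS}$.

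\textbf{Variance.} The cells partition $\cU$, so $Z_g$ is a function only of $\{D_i:i\in\cC_g\cap\cS\}$. These blocks of assignments are disjoint, hence independent across $g$ under Bernoulli assignment, which gives $\Cov_D(Z_g,Z_h)=0$ for $g\neq h$. Therefore
\[
\Var_D(\widehat\tau^{\mathrm W})=W_{\cS}^{-2}\sum_g w_g^2\Var_D(Z_g).
\]
The identity $\Var_D(Z_g)=V_g$ is the within-cell computation underlying \eqref{eq:vg}. It follows from $\Var_D(\widehat\tau^{\mathrm B})=N^{-2}\sum_g\Var_D(Z_g)$ combined with the proof of Proposition~\ref{prop:baseline}, where $V_g$ is obtained precisely as $\Var_D(Z_g)$.

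\textbf{Main obstacle.} The only nontrivial step is verifying $\Var_D(Z_g)=V_g$ if it must be redone from scratch rather than quoted. In that case I would center as $\widetilde Y_g=(B_g+p\theta_g)+\sum_j(D_j-p)\tau_j$ and write $S=\sum_i(D_i-p)$. Then
\[
p(1-p)Z_g=(B_g+p\theta_g)S+S\sum_j(D_j-p)\tau_j,
\]
and the second moment can be computed from the central moments of Bernoulli variables:
\[
\E(D-p)^2=p(1-p),\qquad \E(D-p)^3=p(1-p)(1-2p),\qquad \E(D-p)^4=p(1-p)\{1-3p(1-p)\}.
\]
Subtracting $\theta_g^2$ and collecting the index patterns ($i=j=k=l$, pairs, and so on) should reproduce the $N_g^{\mathrm{cell}}$, $(1-2p)$ and $Q_g$ terms in \eqref{eq:vg}.

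\textbf{Homogeneous effects.} If $\tau_i=\tau$, then $\theta_g=N_g^{\mathrm{cell}}\tau$. Hence $\tau_{w,\cS}=W_{\cS}^{-1}\sum_g w_gN_g^{\mathrm{cell}}\tau=\tau$, and likewise $\tau_{\cS}=\tau$.
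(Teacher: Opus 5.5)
Your proposal is correct and follows essentially the same route as the paper: the paper multiplies each cell contribution $H_g$ from the proof of Proposition~\ref{prop:baseline} by the fixed weight $w_g$, which changes its mean to $w_g\theta_g$ and its variance to $w_g^2V_g$, and then sums over the independent cells and normalizes by $W_{\cS}$. The only blemish is notational: the $B_g$ term in your expectation display should carry a sum over $i\in\cC_g\cap\cS$, and this does not affect the conclusion.
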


The proposition makes the tradeoff explicit. Under heterogeneous effects, precision weighting changes the target. Under homogeneous effects, all positive weighting schemes have the same target, so weights can be chosen solely to reduce variance.

A simple outcome-free choice is
\begin{equation}
w_g=\frac{1}{M_g}.
\label{eq:inverse-k}
\end{equation}

Since candidate-sum outcomes aggregate $M_g$ administrative
outcomes, larger linkage cells may contribute more to the
estimator's variance. Inverse-size weighting downweights these
cells without using observed outcomes.
Section~\ref{sec:efficiency} characterizes asymptotically optimal
weights and discusses conditions under which inverse-size
weighting is optimal when combined with regression adjustment.

\begin{remark}[Linkage-cell-size weighting when zero outcomes are omitted]
When the administrative outcome file omits records with $Y=0$, the full linkage-cell size $M_g$ is generally not observed. Using instead the number of candidates with $Y\neq0$ may be problematic, since this quantity may itself be affected by treatment. An alternative in this case is to use other pre-treatment proxies for $M_g$ or for the degree of linkage ambiguity. For example, one could use the prevalence of the relevant coarse identifier in an auxiliary population register or other pre-treatment data.
\end{remark}

\subsection{Regression adjustment}
\label{sec:adjustment}

We can also seek to improve precision by adjusting for pre-treatment characteristics that predict the level of the candidate-sum outcome. Because this outcome aggregates administrative records, characteristics such as $M_g$ can generate substantial differences in its level across linkage cells. Random treatment assignment may then produce chance imbalances in the distribution of such characteristics between treated and control units, adding variance to the estimator. Regression adjustment can reduce this source of variation by removing the component of the candidate-sum outcome that is predictable from pre-treatment characteristics. Let $x_g\in\R^L$ denote the characteristics used for adjustment. For example, $x_g$ may include $M_g$ and $N_g^{\mathrm{cell}}$. Other pre-treatment covariates can also be included, provided they are defined at, or aggregated to, the linkage-cell level.

For the theory it is useful first to consider a fixed adjustment $h_g$. Define the general weighted-and-adjusted estimator
\begin{equation}
\widehat\tau(w,h)
=
\frac{1}{W_{\cS}}
\sum_g
w_g\frac{T_g^{\mathrm{cell}}-pN_g^{\mathrm{cell}}}{p(1-p)}
\{\widetilde Y_g(\mathbf D)-h_g\}.
\label{eq:generic-estimator}
\end{equation}

The baseline estimator is $\widehat\tau(1,0)$ and the weighted estimator is $\widehat\tau(w,0)$. Let $V_g(h)$ denote the expression in \eqref{eq:vg} with $B_g$ replaced by $B_g-h_g$. That is,
\[
V_g(h)
=
\theta_g^2
+
\frac{
 N_g^{\mathrm{cell}}(B_g-h_g+p\theta_g)^2
 +2(1-2p)(B_g-h_g+p\theta_g)\theta_g
 +\{1+(N_g^{\mathrm{cell}}-6)p(1-p)\}Q_g
}{p(1-p)}.
\]

\begin{proposition}[Fixed adjustment]
\label{prop:fixed-adjustment}
For any fixed adjustment $h_g$,
\begin{equation}
\E_D[\widehat\tau(w,h)]=\tau_{w,\cS},
\qquad
\Var_D\{\widehat\tau(w,h)\}
=
\frac{1}{W_{\cS}^2}\sum_gw_g^2V_g(h).
\label{eq:fixed-adjustment-result}
\end{equation}
\end{proposition}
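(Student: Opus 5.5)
The plan is to reduce the statement to Propositions~\ref{prop:baseline} and~\ref{prop:weighted} by a relabeling of the untreated totals. For fixed $h_g$, define a modified administrative population in which the untreated total of cell $g$ is $B_g-h_g$ rather than $B_g$, with the treatment-effect terms unchanged. By \eqref{eq:linked-cell-outcome},
\[
\widetilde Y_g(\mathbf D)-h_g=(B_g-h_g)+\sum_{j\in\cC_g\cap\cS}D_j\tau_j .
\]
This is exactly the candidate-sum outcome generated by untreated total $B_g-h_g$ and the same $\tau_j$. Because $h_g$ does not depend on $\mathbf D$, the proofs of Propositions~\ref{prop:baseline} and~\ref{prop:weighted} go through verbatim with $B_g$ replaced by $B_g-h_g$. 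Those proofs only use that $B_g$ is a fixed constant and that $\theta_g$ and $Q_g$ are formed from the cell's experimental units.

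To make this self-contained rather than appeal to ``verbatim,'' I would write the estimator cell by cell as
\[
\widehat\tau(w,h)=\frac{1}{W_{\cS}}\sum_g w_g Z_g,
\qquad
Z_g=\frac{T_g^{\mathrm{cell}}-pN_g^{\mathrm{cell}}}{p(1-p)}\Big\{(B_g-h_g)+\sum_{j\in\cC_g\cap\cS}D_j\tau_j\Big\}.
\]
The proof then has three steps.

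\textbf{Step 1 (mean).} Let $e_i=D_i-p$, which has mean zero and is independent across units. The term multiplying $B_g-h_g$ has mean zero. For the remaining term,
\[
\E\Big[\sum_i e_i\sum_j D_j\tau_j\Big]=\sum_i\E[e_iD_i]\,\tau_i=p(1-p)\,\theta_g .
\]
Hence $\E[Z_g]=\theta_g$, and summing with weights gives $\tau_{w,\cS}$.

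\textbf{Step 2 (independence across cells).} Distinct cells involve disjoint sets of assignments $D_i$. Under independent Bernoulli assignment the $Z_g$ are therefore mutually independent. Since the weights $w_g$ are fixed, this gives
\[
\Var\{\widehat\tau(w,h)\}=\frac{1}{W_{\cS}^2}\sum_g w_g^2\Var(Z_g).
\]

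\textbf{Step 3 (within-cell variance).} It remains to show $\Var(Z_g)=V_g(h)$. Since $Z_g$ depends on $h_g$ only through $B_g-h_g$, this is exactly the single-cell variance computation behind \eqref{eq:vg} with $B_g$ replaced by $B_g-h_g$.

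Step~3 is the only real computation. If I redid it rather than citing Proposition~\ref{prop:baseline}, I would first write $\sum_j D_j\tau_j=p\theta_g+\sum_j e_j\tau_j$ and set $c_g=B_g-h_g+p\theta_g$. Then
\[
Z_g=\frac{1}{p(1-p)}\Big(c_g\sum_i e_i+\sum_{i,j}e_ie_j\tau_j\Big).
\]
Next I would compute second moments using the centered Bernoulli moments
\[
\E e^2=p(1-p),\qquad \E e^3=p(1-p)(1-2p),\qquad \E e^4=p(1-p)\{1-3p(1-p)\}.
\]
Only pairings with matching indices survive in these expectations. The cross term contributes $2(1-2p)c_g\theta_g$, the $c_g^2$ term contributes $N_g^{\mathrm{cell}}c_g^2$, and the quartic terms produce the $Q_g$ and $\theta_g^2$ pieces after subtracting $\theta_g^2$. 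The bookkeeping in these quartic terms is the likely obstacle. The coefficient $1+(N_g^{\mathrm{cell}}-6)p(1-p)$ must come out from combining the $i=j=k=l$ terms with the paired terms.

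Since the earlier proposition is assumed, I would simply invoke it with the substitution $B_g\mapsto B_g-h_g$. The justification is that $B_g$ enters \eqref{eq:vg} and the mean calculation only as an additive constant in the cell outcome.
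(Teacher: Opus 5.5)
Your proposal is correct and follows essentially the paper's own argument. The paper likewise notes that a fixed $h_g$ only shifts the cell constant from $B_g$ to $B_g-h_g$, leaving the mean unchanged because $\E_D[(T_g^{\mathrm{cell}}-pN_g^{\mathrm{cell}})h_g]=0$, and that fixed weights scale each independent cell contribution's mean and variance by $w_g$ and $w_g^2$; your optional re-derivation of the fourth-moment bookkeeping agrees with the paper's calculation yielding the coefficient $1+(N_g^{\mathrm{cell}}-6)p(1-p)$.
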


In particular, with equal weights, $\widehat\tau(1,h)$ remains exactly unbiased for the SATE for any fixed adjustment $h_g$.

In practice, the adjustment function must be estimated.
For linear adjustment, include an intercept in $x_g$
and write $x_g=(1,z_g')'$.
Let $\widehat\beta_1$ and $\widehat\beta_0$ denote the
coefficients from separate weighted least-squares
regressions of the candidate-sum outcome on $x_{g(i)}$
in the treatment and control arms, respectively,
using weights $w_{g(i)}$. We then set
\[
\widehat h_g
=
(1-p)x_g'\widehat\beta_1
+
p x_g'\widehat\beta_0.
\]

The resulting estimator can equivalently be computed
from a single fully interacted weighted regression.
Define the weighted covariate mean and centered covariates
\[
\overline z_w
=
\frac{1}{W_{\cS}}
\sum_{i\in\cS}w_{g(i)}z_{g(i)},
\qquad z_g^c=z_g-\overline z_w,
\]
and let
\begin{equation}
\begin{aligned}
(\widehat a,\widehat t,\widehat b,\widehat c)
&=
\underset{a,t,b,c}{\arg\min}
\sum_{i\in\cS}w_{g(i)}\\
&\quad\times
\left[
\widetilde Y_{g(i)}(\mathbf D)
-a-tD_i
-(z_{g(i)}^c)'b
-D_i(z_{g(i)}^c)'c
\right]^2.
\end{aligned}
\label{eq:interacted-wls}
\end{equation}
Whenever these regressions are identified, the coefficient
on treatment satisfies
$\widehat t=\widehat\tau(w,\widehat h)$ exactly.
Thus, the adjusted estimator is the treatment coefficient
from weighted least squares of the candidate-sum outcome
on an intercept, treatment, centered covariates, and
their interactions with treatment. With equal weights,
this is the fully interacted regression adjustment of
\citet{lin2013}, applied to the candidate-sum outcome.

If \(x_g\) contains only an intercept, the adjusted estimator reduces exactly to the difference in weighted mean candidate-sum outcomes between treated and control units, using weights $w_{g(i)}$. With equal weights, this is the usual difference in means. More generally, \(x_g\) can include pre-treatment characteristics that predict the level of the candidate-sum outcome, such as \(M_g\), and flexible functions of these characteristics. For example, indicators for different values or ranges of \(M_g\) allow the relationship between linkage-cell size and the candidate-sum outcome to be nonlinear. Sparsely represented categories can be pooled or modeled more parsimoniously. Other pre-treatment covariates can also be included in \(x_g\).
Because \(\widehat h_g\) depends on the realized treatment assignment, the exact finite-sample unbiasedness result in Proposition~\ref{prop:fixed-adjustment} no longer applies. The asymptotic analysis below shows that estimating the adjustment does not affect consistency or the first-order limiting distribution.

\subsection{Randomization inference}
\label{sec:ri}

Consider the sharp null
\begin{equation}
H_0^{\mathrm{sharp}}:
Y_i(1)=Y_i(0)
\quad\text{for every }i\in\cS,
\label{eq:sharp-null}
\end{equation}
and a test statistic
$\mathcal T(\widetilde{\mathbf Y},\mathbf D)$, where
$\widetilde{\mathbf Y}
=(\widetilde Y_g(\mathbf D))_{g=1}^G$
is the vector of observed candidate-sum outcomes and
$\mathbf D=(D_i)_{i\in\cS}$ is the treatment-assignment vector.
We adopt the convention that larger values of $\mathcal T$
provide stronger evidence against the null. 

Under \eqref{eq:sharp-null}, every candidate-sum outcome is
fixed with respect to treatment assignment:
\[
\widetilde Y_g(\mathbf d)=B_g
\quad\text{for every }g
\text{ and every }\mathbf d\in\{0,1\}^N.
\]
Consequently, the null distribution of the statistic can be
obtained by holding $\widetilde{\mathbf Y}$ fixed and redrawing
treatment assignments according to the known design.

For two-sided tests, possible choices include
$|\widehat\tau(w,\widehat h)|$ or its studentized version
using the asymptotic variance estimator developed in
Section~\ref{sec:variance-estimation}.
These choices encompass the baseline, weighted, adjusted,
and weighted-and-adjusted estimators.

\begin{proposition}[Exact Fisher randomization inference]
\label{prop:ri}
Let
\[
\mathcal A(\widetilde{\mathbf Y})
=
\left\{
\mathbf d\in\{0,1\}^N:
\mathcal T(\widetilde{\mathbf Y},\mathbf d)
\text{ is defined}
\right\},
\]
and suppose this set has positive probability under the
Bernoulli design in \eqref{eq:bern-design}.
For an observed assignment
$\mathbf D\in\mathcal A(\widetilde{\mathbf Y})$,
define the randomization $p$-value
\begin{equation}
p_{\mathrm{RI}}
=
\Pr_{\mathbf D^*}\!\left\{
\mathcal T(\widetilde{\mathbf Y},\mathbf D^*)
\geq
\mathcal T(\widetilde{\mathbf Y},\mathbf D)
\,\middle|\,
\widetilde{\mathbf Y},\mathbf D,\,
\mathbf D^*\in\mathcal A(\widetilde{\mathbf Y})
\right\},
\label{eq:ri-pvalue}
\end{equation}
where $\mathbf D^*$ is an independent draw from the
Bernoulli design. Under \eqref{eq:sharp-null},
\begin{equation} \label{eq:ri-level}
\Pr_D\!\left\{
p_{\mathrm{RI}}\leq\alpha
\,\middle|\,
\mathbf D\in\mathcal A(\widetilde{\mathbf Y})
\right\}
\leq\alpha
\qquad\text{for every }\alpha\in(0,1).
\end{equation}
Thus, rejecting when $p_{\mathrm{RI}}\leq\alpha$
gives a finite-sample level-$\alpha$ test conditional
on the statistic being defined. Conditional rejection
probability exactly equal to $\alpha$ can be obtained
by randomization at the boundary.
\end{proposition}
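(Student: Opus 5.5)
Under the sharp null \eqref{eq:sharp-null}, the observed outcome vector does not depend on the assignment: $\widetilde{\mathbf Y}=\mathbf B:=(B_g)_{g=1}^G$ for every $\mathbf d$, as shown just before the proposition. So I will replace $\widetilde{\mathbf Y}$ by the fixed vector $\mathbf B$ throughout. This makes $\mathcal A:=\mathcal A(\mathbf B)$ a fixed subset of $\{0,1\}^N$, and $\mathcal T(\mathbf B,\cdot)$ a fixed function on it. The test statistic then becomes $\mathcal T(\mathbf B,\mathbf D)$, a deterministic function of the random assignment alone. This reduction is the conceptual heart of the argument. The only point needing care is that the candidate-sum outcomes of all units in a cell, including the false candidates, are assignment-invariant. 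That holds because $B_g$ involves only $Y_j(0)$, and $\tau_j=0$ for $j\in\cS$ under the null, while $D_j=0$ for $j\notin\cS$.

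Next, let $P$ be the Bernoulli design law on $\{0,1\}^N$ and $Q=P(\cdot\mid\mathcal A)$, which is well defined because $P(\mathcal A)>0$. Define $F(t)=Q\{\mathcal T(\mathbf B,\mathbf D^*)\ge t\}$ for $t\in\R$. Since $\mathbf D^*$ is independent of $\mathbf D$ and has law $P$, conditioning on $\mathbf D^*\in\mathcal A$ gives $p_{\mathrm{RI}}=F(T)$, where $T=\mathcal T(\mathbf B,\mathbf D)$. Conditional on $\mathbf D\in\mathcal A$, the observed $\mathbf D$ also has law $Q$. Hence $T$ and $T^*=\mathcal T(\mathbf B,\mathbf D^*)$ have the same distribution $\mu$ under $Q$.

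The claim now reduces to a standard fact: if $T\sim\mu$ and $F(t)=\mu[t,\infty)$, then $\Pr\{F(T)\le\alpha\}\le\alpha$. Since $\mathcal A$ is finite, $T$ takes finitely many values $t_1<\dots<t_m$, and $F$ is nonincreasing. Let $k$ be the smallest index with $F(t_k)\le\alpha$; if no such $k$ exists, the rejection probability is $0$. Then $\{F(T)\le\alpha\}=\{T\ge t_k\}$, which has probability $F(t_k)\le\alpha$. This gives \eqref{eq:ri-level}. The main obstacle is only bookkeeping: I must check that conditioning on the event $\mathcal A$ is applied identically to $\mathbf D$ and $\mathbf D^*$, which is legitimate precisely because $\mathcal A$ does not depend on the observed data under the null.

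For the exact-size statement, I will use the randomized $p$-value
$$\tilde p = Q(T^*>T)+U\,Q(T^*=T),$$
with $U\sim\mathrm{Uniform}(0,1)$ independent of everything else. The standard argument shows that $\tilde p$ is exactly uniform under $Q$: on each atom of $T$, $\tilde p$ is uniform on $[Q(T^*>t_k),F(t_k)]$, and these intervals tile $[0,1]$. Consequently, rejecting when $\tilde p\le\alpha$ has conditional rejection probability exactly $\alpha$.
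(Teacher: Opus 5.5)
Your proposal is correct and follows essentially the same route as the paper: you reduce to the fixed vector $\mathbf B$ under the sharp null, note that the observed and redrawn assignments share the same conditional law on the fixed set $\mathcal A$, and use the finite support to show that the rejection event is an upper tail of probability at most $\alpha$. Your randomized $p$-value $Q(T^*>T)+U\,Q(T^*=T)$ is exactly the paper's $Q_{j-1}+Uq_j$, which gives the exact-size statement.
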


The conditioning on
$\mathbf D\in\mathcal A(\widetilde{\mathbf Y})$
in \eqref{eq:ri-level} means that the level guarantee
applies to realizations of the treatment assignment
for which the test statistic can be computed.
For example, for statistics that require observations
in both treatment arms, this excludes assignments
in which all units are treated or all units are controls.
To compute the randomization $p$-value in
\eqref{eq:ri-pvalue}, draws of $\mathbf D^*$ outside
$\mathcal A(\widetilde{\mathbf Y})$ are discarded
and redrawn. When the statistic uses
$\widehat\tau(w,\widehat h)$ with an estimated adjustment,
$\widehat h$ must be re-estimated under each reassignment.
Any variance estimate used for studentization must
likewise be recomputed.

Randomization inference conditions on the fixed outcomes
under the sharp null and reproduces the known assignment
mechanism. The same argument applies to the stratified and
complete randomization designs considered in
Appendix~\ref{app:fixed-count}, provided treatment assignments
are redrawn according to the corresponding randomization
design. Its limitation is that it tests the sharp null that
every individual treatment effect is zero. In many
applications the object of interest is instead a hypothesis
about the SATE while allowing treatment effects to be
heterogeneous. The large-sample theory below provides
inference for such weak null hypotheses.

\section{Asymptotic theory}
\label{sec:asymptotics}

We next study the large-sample behavior of the estimators along sequences of fixed administrative populations and experimental samples, with the experimental sample size $N$ tending to infinity. The asymptotic analysis provides inference for weak null hypotheses, characterizes the first-order behavior of the baseline, weighted, adjusted, and weighted-and-adjusted estimators, and clarifies the role of ambiguous linkage in precision.

\subsection{Asymptotic framework}

For each $N$, the administrative population, experimental sample $\cS$, potential outcomes, and linkage structure are treated as fixed, and randomness arises only from the treatment-assignment mechanism. The coarse identifiers partition the administrative population into linkage cells $\cC_g$ of size $M_g$, while
\[
N_g^{\mathrm{cell}}=|\cC_g\cap\cS|
\]
denotes the number of experimental units in linkage cell $g$. Let
\[
\mathcal G_N=\{g:N_g^{\mathrm{cell}}>0\},
\qquad
G_N=|\mathcal G_N|
\]
denote the represented linkage cells and their number.

\begin{assumption}[Bounded linkage cells]
\label{ass:bounded-cells}
There exists a finite constant $\overline M$ such that
\[
\max_g M_g\leq \overline M
\]
along the sequence.
\end{assumption}

Assumption~\ref{ass:bounded-cells} requires linkage cells to remain uniformly bounded in size as the administrative population grows. It therefore rules out coarse identifiers whose number of matches grows without bound, as could occur, for example, with very common names.\footnote{If some linkage cells grow unusually large, their candidate sums can also grow with linkage-cell size and may dominate the asymptotic variance. In applications, such cases would typically need to be handled separately, for example by excluding extremely large linkage cells or assigning them substantially smaller weights.} The assumption still allows persistent linkage ambiguity: a linkage cell may contain multiple candidate records even asymptotically.

Because $N_g^{\mathrm{cell}}\leq M_g$, Assumption~\ref{ass:bounded-cells} also implies that the number of experimental units in any linkage cell is uniformly bounded. Moreover,
\[
G_N\geq \frac{N}{\overline M},
\]
so the number of linkage cells represented in the experimental sample necessarily diverges as $N\to\infty$. The large-sample approximation is therefore driven by an increasing number of non-dominating linkage cells.

We impose no sampling model for the experimental sample and no additional restriction on the realized degree of overlap among experimental units within linkage cells. Some sequences may have $N_g^{\mathrm{cell}}\leq1$ for every represented linkage cell, while in others multiple experimental units may share linkage cells even asymptotically. The results below cover both cases.

\subsection{A general estimator class}

Let $w_g=w(x_g)>0$ be a bounded pre-treatment weight and let $h_g=h(x_g)$ be a fixed adjustment. Define the linkage-cell contribution
\begin{equation}
H_g(w,h)
=
w_g\frac{T_g^{\mathrm{cell}}-pN_g^{\mathrm{cell}}}{p(1-p)}
\{\widetilde Y_g(\mathbf D)-h_g\}.
\label{eq:Hg}
\end{equation}

Then
\begin{equation}
\widehat\tau(w,h)
=
\frac{\sum_gH_g(w,h)}{W_{\cS}},
\qquad
W_{\cS}=\sum_gw_gN_g^{\mathrm{cell}}.
\label{eq:generic-again}
\end{equation}

The corresponding target is $\tau_{w,\cS}$ in \eqref{eq:weighted-target}. The four estimators considered in the paper are nested as follows:
\begin{center}
\begin{tabular}{lcc}
\toprule
Estimator & Weight & Adjustment \\
\midrule
Baseline $\widehat\tau^{\mathrm B}$ & $w_g=1$ & $h_g=0$ \\
Weighted $\widehat\tau^{\mathrm W}$ & $w_g=w(x_g)$ & $h_g=0$ \\
Adjusted $\widehat\tau^{\mathrm A}$ & $w_g=1$ & $h_g=\widehat h_g$ \\
Weighted + adjusted $\widehat\tau^{\mathrm{WA}}$ & $w_g=w(x_g)$ & $h_g=\widehat h_g$ \\
\bottomrule
\end{tabular}
\end{center}

\begin{assumption}[Regularity]
\label{ass:cells}
Along the sequence:
\begin{enumerate}[label=(\roman*)]
\item $W_{\cS}/N\to w^\ast\in(0,\infty)$ and $0<\underline w\leq w_g\leq\overline w<\infty$;
\item the candidate sums and fixed adjustments have bounded fourth moments on average,
\[
\frac1N\sum_{g\in\mathcal G_N}
\left\{
\E_D\left[\left|\widetilde Y_g(\mathbf D)\right|^4\right]
+|h_g|^4
\right\}=O(1);
\]
\item the exact design variance satisfies
\begin{equation}
V_N(w,h)
\equiv
\frac{N}{W_{\cS}^2}\sum_{g\in\mathcal G_N}w_g^2V_g(h)
\longrightarrow
V(w,h)\in(0,\infty).
\label{eq:VN}
\end{equation}
\end{enumerate}
\end{assumption}

Assumption~\ref{ass:cells} bounds weights and the average fourth moments of candidate sums and fixed adjustments, and requires the root-$N$ design variance to converge to a nondegenerate limit. It does not require uniformly bounded outcomes or a uniform moment bound for every linkage cell. With bounded linkage-cell sizes, part (ii) also bounds the average fourth moments of $H_g(w,h)$, both centered and noncentered, and the average second moments of the candidate sums. Together with Assumption~\ref{ass:bounded-cells}, it gives a triangular array of independent, non-dominating linkage-cell contributions under Bernoulli assignment.

For the adjusted estimators, let
\[
\widehat h_g=h(x_g;\widehat\beta),
\qquad
h_g^*=h(x_g;\beta^*),
\]
where $\beta^*$ is a nonrandom finite-population target. Dependence of
$\beta^*$ on $N$ is left implicit.

\begin{assumption}[Estimated adjustment]
\label{ass:adjustment}
The dimension of $\beta$ is fixed, and:
\begin{enumerate}[label=(\roman*)]
\item
\[
\widehat\beta-\beta^*=O_p(N^{-1/2});
\]
\item $h(x;\beta)$ is twice continuously differentiable in $\beta$ in a
neighborhood of $\beta^*$, with first and second derivatives uniformly
bounded over the realized values of $x_g$ along the sequence.
\end{enumerate}
\end{assumption}

Assumption~\ref{ass:adjustment} accommodates both interacted and
non-interacted regression adjustment. The separate treatment- and
control-arm regressions described above correspond to one particular
parameterization of $h(x_g;\widehat\beta)$. 

We can now state a common asymptotic result for this estimator class, allowing both weighting and estimated adjustment.

\begin{proposition}[General design-based central limit theorem]
\label{thm:general-clt}
Suppose Assumptions~\ref{ass:bounded-cells}, \ref{ass:cells}, and
\ref{ass:adjustment} hold, with Assumption~\ref{ass:cells} evaluated at
$h=h^*$. Then
\[
\sqrt N
\{\widehat\tau(w,\widehat h)-\tau_{w,\cS}\}
\Rightarrow
\mathcal N\{0,V(w,h^*)\}.
\]
\end{proposition}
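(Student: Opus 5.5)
The proof splits $\widehat\tau(w,\widehat h)$ into the fixed-adjustment estimator evaluated at $h^*$ plus a remainder due to estimating the adjustment. Writing $\Delta_g=T_g^{\mathrm{cell}}-pN_g^{\mathrm{cell}}$, the definition \eqref{eq:generic-estimator} gives
\[
\widehat\tau(w,\widehat h)-\widehat\tau(w,h^*)
=
-\frac{1}{W_{\cS}}\sum_g w_g\frac{\Delta_g}{p(1-p)}\,(\widehat h_g-h_g^*).
\]
The plan is to establish (a) a CLT for $\sqrt N\{\widehat\tau(w,h^*)-\tau_{w,\cS}\}$ and (b) that this remainder is $o_p(N^{-1/2})$. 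Slutsky's lemma then combines them.

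For (a), observe that $h^*$ is nonrandom. Proposition~\ref{prop:fixed-adjustment} therefore gives $\E_D[\widehat\tau(w,h^*)]=\tau_{w,\cS}$ and $N\Var_D=V_N(w,h^*)\to V(w,h^*)>0$. Under Bernoulli assignment the cell contributions $H_g(w,h^*)$ in \eqref{eq:Hg} depend only on the assignments within cell $g$, so they are independent across represented cells. Cells with $N_g^{\mathrm{cell}}=0$ contribute zero. We apply the Lyapunov CLT for triangular arrays to $\sum_{g\in\mathcal G_N}\{H_g-\E_D H_g\}/\sqrt{N}$, using the fourth-moment version of Lyapunov's condition:
\[
\frac{1}{N^2}\sum_{g\in\mathcal G_N}\E_D|H_g-\E_D H_g|^4\to 0 .
\]
Assumption~\ref{ass:bounded-cells} gives $|\Delta_g|\le\overline M$, and $w_g\le\overline w$. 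Hence $|H_g|^4\lesssim |\widetilde Y_g|^4+|h_g^*|^4$, and Assumption~\ref{ass:cells}(ii) shows the sum is $O(N)$, so the ratio is $O(1/N)\to0$. Since $W_{\cS}/N\to w^*$, the normalization by $W_{\cS}$ rather than $N$ only rescales the sum, and this rescaling is already built into $V_N$.

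For (b), we Taylor-expand $h(x_g;\widehat\beta)$ around $\beta^*$ using Assumption~\ref{ass:adjustment}(ii). This gives $\widehat h_g-h_g^*=\dot h(x_g;\beta^*)'(\widehat\beta-\beta^*)+r_g$ with $|r_g|\le C\|\widehat\beta-\beta^*\|^2$ uniformly in $g$, on an event whose probability tends to one. The remainder then splits into two terms.

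The first term is
\[
\Big\{\frac{1}{W_{\cS}}\sum_g w_g\frac{\Delta_g}{p(1-p)}\,\dot h(x_g;\beta^*)\Big\}'(\widehat\beta-\beta^*).
\]
The bracket is a sum of independent mean-zero terms, because $\E_D\Delta_g=0$ and $\dot h$ and $w_g$ are fixed. These terms are uniformly bounded, so the bracket has variance $O(N)/N^2=O(1/N)$ and is therefore $O_p(N^{-1/2})$. Multiplied by $\widehat\beta-\beta^*=O_p(N^{-1/2})$, the first term is $O_p(N^{-1})$.

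The second term is bounded by $C\|\widehat\beta-\beta^*\|^2\cdot W_{\cS}^{-1}\sum_g w_g|\Delta_g|/\{p(1-p)\}$. The sum is $O(N)$, so after dividing by $W_{\cS}$ the factor is $O(1)$ and the second term is $O_p(N^{-1})$. Both terms are $o_p(N^{-1/2})$, which completes (b).

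The main obstacle is the interplay between randomness in $\widehat\beta$ and randomness in $\Delta_g$: $\widehat\beta$ depends on $\mathbf D$, so $\E[\Delta_g(\widehat h_g-h_g^*)]\neq 0$ in general. The argument above avoids this issue because the linearization factors the dependence into a product of separately controlled $O_p$ terms, and $\dot h$ is evaluated at the fixed $\beta^*$, with finite dimension of $\beta$ essential. A secondary technical point is that the Taylor bound only holds on the event that $\widehat\beta$ lies in the neighborhood of $\beta^*$, whose probability tends to one by Assumption~\ref{ass:adjustment}(i). The boundedness of the derivatives must hold uniformly over $x_g$, exactly as Assumption~\ref{ass:adjustment}(ii) states.
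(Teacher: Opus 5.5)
Your proposal is correct and follows essentially the same route as the paper. Both arguments use a fourth-moment Lyapunov CLT for the independent cell contributions $H_g(w,h^*)-w_g\theta_g$, then a second-order Taylor expansion of $h(x_g;\widehat\beta)$ in which the linear term is a mean-zero sum with bounded normalized second moment multiplied by $\widehat\beta-\beta^*=O_p(N^{-1/2})$, the quadratic remainder is controlled through $\sum_g|T_g^{\mathrm{cell}}-pN_g^{\mathrm{cell}}|\le N$, and Slutsky's theorem combines the two steps (your explicit restriction to the event where $\widehat\beta$ lies in the Taylor neighborhood is a detail the paper leaves implicit).
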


The proposition covers all four estimators introduced above. The baseline and adjusted estimators are consistent and asymptotically normal for the SATE. This includes the usual difference in means of candidate-sum outcomes between treated and control experimental units, which is obtained with equal weights and intercept-only treatment- and control-arm regressions for the feasible adjustment in Section~\ref{sec:adjustment}. In particular, the finite-sample departure from exact unbiasedness induced by estimating the adjustment does not affect consistency or the first-order limiting distribution. The weighted and weighted-and-adjusted estimators are consistent and asymptotically normal for the weighted target in \eqref{eq:weighted-target}. Under homogeneous treatment effects this target coincides with the SATE.

The result does not depend on the degree of experimental overlap within linkage cells or on the limiting behavior of \(N/M\). In particular, it applies both when \(N_g^{\mathrm{cell}}\leq1\) for every represented linkage cell and when multiple experimental units may share a linkage cell.

\subsection{Adjustment and precision weighting}
\label{sec:efficiency}

\subsubsection{Optimal regression adjustment based on $x_g$}

We first characterize the variance-minimizing adjustment for a given fixed weighting rule. Suppose $x_g$ has finite support and let $w_g>0$ denote fixed pre-treatment weights. For each value $x$, define
\[
W_{2,x}
=
\sum_{i\in\cS:x_{g(i)}=x} w_{g(i)}^2
=
\sum_{g:x_g=x} w_g^2 N_g^{\mathrm{cell}}.
\]
For $d\in\{0,1\}$, define the finite-population weighted arm-specific mean
\begin{equation}
\mu_{d,w,N}(x)
=
\frac{1}{W_{2,x}}
\sum_{i\in\cS:x_{g(i)}=x}
w_{g(i)}^2
\E_D\!\left[
\widetilde Y_{g(i)}(\mathbf D)\mid D_i=d
\right].
\label{eq:oracle-arm-mean-weighted}
\end{equation}
The dependence on $N$ emphasizes that these are finite-population quantities and need not converge to fixed limits.

Define
\begin{equation}
h_{w,N}^*(x)
=
(1-p)\mu_{1,w,N}(x)
+
p\mu_{0,w,N}(x).
\label{eq:optimal-h-weighted}
\end{equation}

\begin{proposition}[Optimal adjustment based on $x_g$]
\label{prop:optimal-adjustment}
Fix any positive pre-treatment weighting rule $\{w_g\}$. Among all fixed adjustments of the form $h_g=h(x_g)$, the choice
\[
h_g=h_{w,N}^*(x_g)
\]
minimizes the exact design variance of $\widehat\tau(w,h)$.
\end{proposition}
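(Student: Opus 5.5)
The approach is to reduce the problem to minimizing a separable quadratic. By Proposition~\ref{prop:fixed-adjustment}, for any fixed adjustment the exact design variance is $W_{\cS}^{-2}\sum_g w_g^2V_g(h)$. Since $W_{\cS}$ and the weights do not depend on $h$, it suffices to minimize $\sum_g w_g^2V_g(h)$ over functions $h(\cdot)$ on the finite support of $x_g$.

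In $V_g(h)$ the terms $\theta_g^2$ and the $Q_g$ term do not involve $h_g$. Write $a_g=B_g+p\theta_g$. The only $h$-dependent part is then
\[
\frac{N_g^{\mathrm{cell}}(a_g-h_g)^2+2(1-2p)(a_g-h_g)\theta_g}{p(1-p)}.
\]
Cells with $N_g^{\mathrm{cell}}=0$ have $\theta_g=0$ and contribute nothing, so they can be ignored. Imposing $h_g=h(x)$ for all cells with $x_g=x$, the objective splits into a sum over support points $x$ of
\[
\phi_x(h)=\sum_{g:x_g=x}w_g^2\{N_g^{\mathrm{cell}}(a_g-h)^2+2(1-2p)(a_g-h)\theta_g\},
\]
and each $\phi_x$ can be minimized separately. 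When $x$ is represented in the sample, $\phi_x$ is a strictly convex quadratic with leading coefficient $W_{2,x}>0$. Its unique minimizer follows from the first-order condition:
\[
h(x)=\frac{1}{W_{2,x}}\sum_{g:x_g=x}w_g^2\{N_g^{\mathrm{cell}}a_g+(1-2p)\theta_g\}.
\]
When $x$ is not represented, $h(x)$ is irrelevant.

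The remaining step, which I expect to be the only real computation, is to show that this minimizer equals $h^*_{w,N}(x)$ in \eqref{eq:optimal-h-weighted}. I will use \eqref{eq:linked-cell-outcome} and independence of Bernoulli assignments. For $i\in\cC_g\cap\cS$,
\[
\E_D[\widetilde Y_g(\mathbf D)\mid D_i=d]=B_g+d\tau_i+p(\theta_g-\tau_i).
\]
Summing over $i$ in the cell gives $N_g^{\mathrm{cell}}a_g+(1-p)\theta_g$ for $d=1$ and $N_g^{\mathrm{cell}}a_g-p\theta_g$ for $d=0$. Hence
\[
(1-p)\sum_{i}\E_D[\widetilde Y_g\mid D_i=1]+p\sum_{i}\E_D[\widetilde Y_g\mid D_i=0]=N_g^{\mathrm{cell}}a_g+\{(1-p)^2-p^2\}\theta_g=N_g^{\mathrm{cell}}a_g+(1-2p)\theta_g.
\]

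Multiplying by $w_g^2$, summing over cells with $x_g=x$, and dividing by $W_{2,x}$ turns the right-hand side into exactly the first-order-condition solution above. On the left-hand side, the same operations give $(1-p)\mu_{1,w,N}(x)+p\mu_{0,w,N}(x)$ by definition \eqref{eq:oracle-arm-mean-weighted}, since the weight $w_{g(i)}^2$ is constant within a cell. This is $h^*_{w,N}(x)$, which proves the claim, with uniqueness on represented values of $x$.
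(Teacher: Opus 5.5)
Your proposal is correct and follows the paper's proof essentially step for step. Like the paper, you reduce to minimizing $\sum_g w_g^2V_g(h)$, separate the problem by support point of $x_g$, and solve the strictly convex first-order condition with leading coefficient $W_{2,x}$. You then identify the minimizer with $(1-p)\mu_{1,w,N}(x)+p\mu_{0,w,N}(x)$ using $\E_D[\widetilde Y_g(\mathbf D)\mid D_i=d]=B_g+p\theta_g+(d-p)\tau_i$ summed over units in each cell.
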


The oracle adjustment in \eqref{eq:optimal-h-weighted} is generally infeasible because the conditional expectations in \eqref{eq:oracle-arm-mean-weighted} are not observed. A natural saturated sample analogue is obtained by estimating the weighted arm-specific means separately within each category of $x_g$. Define
\begin{equation}
\widehat\mu_{d,w}(x)
=
\frac{
\sum_{i\in\cS}
w_{g(i)}^2
\1\{x_{g(i)}=x,D_i=d\}
\widetilde Y_{g(i)}(\mathbf D)
}{
\sum_{i\in\cS}
w_{g(i)}^2
\1\{x_{g(i)}=x,D_i=d\}
},
\qquad d\in\{0,1\},
\label{eq:muhat-saturated-weighted}
\end{equation}
whenever the denominator is positive, and set
\begin{equation}
\widehat h_w(x)
=
(1-p)\widehat\mu_{1,w}(x)
+
p\widehat\mu_{0,w}(x).
\label{eq:hhat-saturated-weighted}
\end{equation}
Equivalently, this adjustment can be obtained from separate saturated treatment- and control-arm regressions using $w_g^2$ as regression weights.

The following result verifies that this feasible saturated adjustment satisfies Assumption~\ref{ass:adjustment}.

\begin{corollary}[Feasible saturated adjustment]
\label{cor:saturated-adjustment}
Suppose Assumption~\ref{ass:bounded-cells} holds and the weight bounds and
moment condition in Assumption~\ref{ass:cells}(i)--(ii) hold with $h=0$.
Suppose $x_g$ has fixed finite support and, for every retained category $x$,
\begin{equation}
\frac{N_x}{N}\longrightarrow\pi_x>0,
\qquad
N_x=\sum_{i\in\cS}\1\{x_{g(i)}=x\}.
\label{eq:positive-category-share}
\end{equation}
Then
\[
\widehat\mu_{d,w}(x)-\mu_{d,w,N}(x)
=
O_p(N^{-1/2}),
\qquad d\in\{0,1\},
\]
for every retained category $x$. Hence
\[
\widehat h_w(x)-h_{w,N}^*(x)
=
O_p(N^{-1/2}),
\]
and the saturated adjustment satisfies Assumption~\ref{ass:adjustment} with finite-population target $h_{w,N}^*$. Also, Assumption~\ref{ass:cells}(ii) holds with
$h_g=h_{w,N}^*(x_g)$.
\end{corollary}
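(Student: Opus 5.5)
The plan is to treat each $\widehat\mu_{d,w}(x)$ as a ratio of two sums over linkage cells, $\widehat\mu_{d,w}(x)=\widehat A_{d}(x)/\widehat C_{d}(x)$. Write $A_d(x)$ and $C_d(x)$ for their design expectations. Then show that numerator and denominator, normalized by $N$, concentrate at rate $N^{-1/2}$ around these expectations, and conclude with a delta-method argument. Concretely, set
\[
\widehat C_d(x)=\frac1N\sum_{g:x_g=x} w_g^2 \sum_{i\in\cC_g\cap\cS}\1\{D_i=d\},
\qquad
\widehat A_d(x)=\frac1N\sum_{g:x_g=x} w_g^2\sum_{i\in\cC_g\cap\cS}\1\{D_i=d\}\widetilde Y_g(\mathbf D).
\]
Under Bernoulli assignment, $\E_D[\widehat C_d(x)]=\pi_d W_{2,x}/N$, with $\pi_1=p$ and $\pi_0=1-p$. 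Also
\[
\E_D[\1\{D_i=d\}\widetilde Y_g(\mathbf D)]=\pi_d\,\E_D[\widetilde Y_g(\mathbf D)\mid D_i=d].
\]
Hence $A_d(x)/C_d(x)=\mu_{d,w,N}(x)$ exactly, by definition \eqref{eq:oracle-arm-mean-weighted}. This identity is the key algebraic step linking the sample analogue to the finite-population target.

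Next I bound the variances. Each of $\widehat A_d(x)$ and $\widehat C_d(x)$ is $N^{-1}$ times a sum of cell-level terms that are independent across cells, because assignments in distinct cells are independent. Hence
\[
\Var_D\{\widehat A_d(x)\}\le N^{-2}\sum_g w_g^4\,\E_D\Bigl[\bigl(N_g^{\mathrm{cell}}\widetilde Y_g\bigr)^2\Bigr]\le N^{-2}\,\overline w^4\,\overline M^2\sum_g\E_D[\widetilde Y_g^2].
\]
By Assumption~\ref{ass:cells}(ii) with $h=0$ and Lyapunov/Jensen, this is $O(N^{-1})$. The same bound, with trivially bounded terms, holds for $\widehat C_d(x)$. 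Chebyshev then gives $\widehat A_d-A_d=O_p(N^{-1/2})$ and $\widehat C_d-C_d=O_p(N^{-1/2})$.

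For the denominator, I use $W_{2,x}/N\ge \underline w^2 N_x/N\to\underline w^2\pi_x>0$. So $C_d(x)$ is bounded away from zero, and the denominator in \eqref{eq:muhat-saturated-weighted} is positive with probability tending to one. The numerator $A_d(x)$ is $O(1)$ by Cauchy--Schwarz and the moment bound. Writing
\[
\widehat A/\widehat C-A/C=\bigl\{(\widehat A-A)C-A(\widehat C-C)\bigr\}/(\widehat C C)
\]
yields the $O_p(N^{-1/2})$ rate for $\widehat\mu_{d,w}(x)$. The rate for $\widehat h_w(x)-h^*_{w,N}(x)$ then follows by linearity. With finite support, parameterize $\beta$ as the finite vector $\{\mu_{d,w}(x)\}_{d,x}$. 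Then $h(x;\beta)$ is linear in $\beta$, with first derivatives in $\{0,p,1-p\}$ and zero second derivatives. Together with the fixed dimension, this gives Assumption~\ref{ass:adjustment}.

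Finally, Assumption~\ref{ass:cells}(ii) at $h_g=h^*_{w,N}(x_g)$ requires $\frac1N\sum_g|h^*_{w,N}(x_g)|^4=O(1)$. Since $h^*$ takes finitely many values, it suffices to show each $|\mu_{d,w,N}(x)|$ is bounded. By Jensen,
\[
|\mu_{d,w,N}(x)|^4\le \frac{1}{W_{2,x}}\sum w_{g(i)}^2\,\E_D\bigl[\widetilde Y_{g(i)}^4\mid D_i=d\bigr]\le \frac{\overline w^2\overline M}{\pi_d\,\underline w^2 N_x}\sum_g \E_D[\widetilde Y_g^4].
\]
This is $O(1)$ by the moment condition and $N_x/N\to\pi_x>0$. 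Here I use $\E[\cdot\mid D_i=d]\le \E[\cdot]/\pi_d$ for nonnegative variables. I expect the main obstacle to be the bookkeeping of dependence: $\widetilde Y_g$ depends on all assignments in the cell, including that of $i$. This is handled by working at the cell level, where independence across cells holds and bounded $N_g^{\mathrm{cell}}$ controls within-cell aggregation.
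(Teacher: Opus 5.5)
Your proposal is correct and follows essentially the same route as the paper: a ratio representation of $\widehat\mu_{d,w}(x)$ with a denominator bounded away from zero in probability, a cell-level independence argument giving an $O(N^{-1})$ variance bound from bounded cells, bounded weights and the averaged fourth-moment condition, a linear parameterization of the saturated adjustment for Assumption~\ref{ass:adjustment}, and Jensen's inequality to bound $\mu_{d,w,N}(x)$. The only cosmetic difference is that the paper centers the numerator directly at $\mu_{d,w,N}(x)$, so that it has mean zero, whereas you control $\widehat A_d$ and $\widehat C_d$ separately and recombine them via $A_d=\mu_{d,w,N}(x)\,C_d$. These two formulations are algebraically identical.
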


Condition~\eqref{eq:positive-category-share} also implies that each retained category is represented in both treatment arms with probability approaching one under Bernoulli assignment. Categories that are too sparse for a saturated specification can instead be pooled or modeled more parsimoniously.

Combining Proposition~\ref{prop:optimal-adjustment}, Corollary~\ref{cor:saturated-adjustment}, and Proposition~\ref{thm:general-clt}, the feasible saturated adjustment has the same first-order asymptotic distribution as the infeasible variance-minimizing adjustment $h_{w,N}^*$. Thus, for any given fixed weighting rule, saturated regression adjustment weakly reduces first-order asymptotic variance relative to using the same weights without adjustment.

When the weights are themselves functions only of $x_g$, $w_g^2$ is constant within each category and cancels from \eqref{eq:oracle-arm-mean-weighted} and \eqref{eq:muhat-saturated-weighted}. In that case, the oracle and feasible adjustments reduce to the corresponding unweighted arm-specific means within each category.

\subsubsection{Optimal weights}

Weighting addresses a different source of imprecision. Suppose $x_g$ has finite support and, for each value $x$ in its support,
\begin{equation}
\frac{N_x}{N}
\longrightarrow
\pi_x>0.
\label{eq:Nx-pix}
\end{equation}

For a given fixed adjustment $h$, define the asymptotic design-variance contribution per experimental unit in category $x$ by
\begin{equation}
\overline V_h(x)
=
\lim
\frac{1}{N_x}
\sum_{g\in\mathcal G_N:x_g=x}V_g(h),
\label{eq:Vbarx}
\end{equation}
whenever the limit exists and lies in $(0,\infty)$.

\begin{proposition}[First-order precision weighting]
\label{prop:efficiency}
Suppose the conditions of Proposition~\ref{thm:general-clt} hold and the limits in \eqref{eq:Nx-pix}--\eqref{eq:Vbarx} exist. For a given fixed adjustment $h$, among positive weights that are functions of $x_g$, asymptotic variance is minimized by
\begin{equation}
w^*(x)
\propto
\frac{1}{\overline V_h(x)}.
\label{eq:optimal-weight}
\end{equation}

In particular, let $x_g=M_g$ and suppose that there exists a constant $c>0$
such that
\begin{equation}
\frac{1}{N_m}
\sum_{g\in\mathcal G_N:M_g=m}V_g(h)
\longrightarrow
cm
\label{eq:homoskedastic-benchmark}
\end{equation}
for each $m$ in the support of $M_g$, where
\[
N_m=\sum_{g:M_g=m}N_g^{\mathrm{cell}}.
\]
Then
\[
w^*(m)\propto\frac{1}{m}.
\]
\end{proposition}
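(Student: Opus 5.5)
The argument reduces the asymptotic variance to a weighted ratio over the finitely many categories of $x_g$ and then minimizes that ratio. By Proposition~\ref{thm:general-clt}, the asymptotic variance of $\widehat\tau(w,h)$ equals $V(w,h)=\lim N\sum_g w_g^2V_g(h)/W_{\cS}^2$ from \eqref{eq:VN}. Take $w_g=w(x_g)$ with $x_g$ of finite support, and group the sums by category:
\[
W_{\cS}=\sum_x w(x)N_x,\qquad \sum_g w_g^2V_g(h)=\sum_x w(x)^2\sum_{g\in\mathcal G_N:x_g=x}V_g(h).
\]
Divide numerator and denominator by the appropriate powers of $N$. Using \eqref{eq:Nx-pix} and \eqref{eq:Vbarx}, which is needed only for finitely many categories, we obtain
\[
V(w,h)=\frac{\sum_x w(x)^2\pi_x\overline V_h(x)}{\left(\sum_x w(x)\pi_x\right)^2}.
\]
The adjustment $h$ enters only through $\overline V_h$, which does not depend on $w$. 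So the problem becomes a finite-dimensional minimization over positive vectors $(w(x))_x$.

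Next we minimize this ratio with Cauchy--Schwarz:
\[
\Bigl(\sum_x w(x)\pi_x\Bigr)^2=\Bigl(\sum_x w(x)\sqrt{\pi_x\overline V_h(x)}\cdot\sqrt{\pi_x/\overline V_h(x)}\Bigr)^2\le \sum_x w(x)^2\pi_x\overline V_h(x)\cdot\sum_x \pi_x/\overline V_h(x).
\]
Hence $V(w,h)\ge\{\sum_x\pi_x/\overline V_h(x)\}^{-1}$. Equality holds exactly when $w(x)\sqrt{\pi_x\overline V_h(x)}\propto\sqrt{\pi_x/\overline V_h(x)}$, i.e.\ $w(x)\propto1/\overline V_h(x)$, which is \eqref{eq:optimal-weight}. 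Because $\overline V_h(x)\in(0,\infty)$, these weights are positive and bounded, so they satisfy Assumption~\ref{ass:cells}(i). Any rescaling of $w$ leaves $V(w,h)$ unchanged, so only proportionality matters.

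For the special case $x_g=M_g$, the hypothesis \eqref{eq:homoskedastic-benchmark} gives $\overline V_h(m)=cm$, with $N_m$ playing the role of $N_x$. Substituting into \eqref{eq:optimal-weight} yields $w^*(m)\propto1/(cm)\propto1/m$.

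The main obstacle is the limit passage in the first step. Applying Proposition~\ref{thm:general-clt} at the candidate optimal weights requires Assumption~\ref{ass:cells}(iii) to hold for those weights, with $V(w,h)$ given by the displayed formula. I would argue that for any weights that are functions of $x$, the finitely many category limits determine the limit of $V_N(w,h)$ directly, so convergence holds automatically once \eqref{eq:Nx-pix}--\eqref{eq:Vbarx} exist. Nondegeneracy also follows, since each $\overline V_h(x)>0$ and each $\pi_x>0$. The statement compares first-order asymptotic variances within this class of weights, and under that reading the argument is complete.
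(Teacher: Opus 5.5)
Your proposal is correct and follows the paper's proof essentially step for step. You group $V_N(w,h)$ by category to obtain the limiting ratio $\sum_x\pi_x w(x)^2\overline V_h(x)/\{\sum_x\pi_x w(x)\}^2$, bound it below by Cauchy--Schwarz with the same factorization $\sqrt{\pi_x\overline V_h(x)}\,w(x)\cdot\sqrt{\pi_x/\overline V_h(x)}$, and read off $w^*(x)\propto 1/\overline V_h(x)$, then $w^*(m)\propto 1/m$ from $\overline V_h(m)=cm$. Your remarks on the equality case, and on Assumption~\ref{ass:cells}(i) and (iii) holding automatically for weights that are functions of $x_g$, are minor clarifications that the paper leaves implicit.
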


Appendix~\ref{app:weighting-primitives} provides primitive stochastic conditions under which the asymptotic proportionality condition in \eqref{eq:homoskedastic-benchmark} holds almost surely. While those conditions are relatively strong, they illustrate the intuition that linkage cells with larger \(M_g\) may contribute more to the design-based variance, providing a rationale for downweighting such cells.

Under homogeneous treatment effects, reweighting does not
change the causal target, so the variance gains described
above correspond to efficiency improvements for the same
target. With heterogeneous treatment effects, reweighting
may involve a trade-off: downweighting cells to improve
precision also changes their contribution to the causal
target, yielding a weighted average of treatment effects
that need not coincide with the SATE.

\subsection{Feasible asymptotic variance estimation}
\label{sec:variance-estimation}

Proposition~\ref{thm:general-clt} establishes root-\(N\) asymptotic normality of the estimators considered above. Therefore, a feasible estimator of the asymptotic variance, or an asymptotically conservative upper bound, is sufficient for asymptotic inference on \(\tau_{w,\cS}\).

Because experimental units that share a linkage cell also share the same candidate-sum outcome, the natural approach is to cluster the variance calculation at the linkage-cell level. Let
\begin{equation}
\widehat H_g
=
w_g\frac{T_g^{\mathrm{cell}}-pN_g^{\mathrm{cell}}}{p(1-p)}
\{\widetilde Y_g(\mathbf D)-\widehat h_g\},
\label{eq:Hhat}
\end{equation}
where $\widehat h_g=0$ for estimators without adjustment.
Let $\widehat\delta_g$ be a centering term for the treatment effect
in cell $g$. A simple choice is $\widehat\delta_g=\widehat\tau$
for every cell; cell-specific values can instead account for
predictable variation in treatment effects. Define
\begin{equation}
\widehat V_{\mathrm{cl}}
=
\frac{N}{W_{\cS}^2}
\sum_{g\in\mathcal G_N}
\left(\widehat H_g-w_gN_g^{\mathrm{cell}}\widehat\delta_g\right)^2.
\label{eq:cluster-vhat}
\end{equation}

For fixed centering targets $\delta_g^*$, define
\begin{equation}
\Delta_N(w,\delta^*)
=
\frac{N}{W_{\cS}^2}
\sum_{g\in\mathcal G_N}
w_g^2
\{\theta_g-N_g^{\mathrm{cell}}\delta_g^*\}^2.
\label{eq:cluster-gap-N}
\end{equation}

\begin{proposition}[Cluster variance and treatment-effect heterogeneity]
\label{prop:cluster-se}
Suppose the conditions of Proposition~\ref{thm:general-clt} hold.
Suppose there exist fixed targets $\delta_g^*$ such that
\[
\frac1N\sum_{g\in\mathcal G_N}|\delta_g^*|^4=O(1),
\qquad
\frac1N\sum_{g\in\mathcal G_N}
(\widehat\delta_g-\delta_g^*)^2=o_p(1),
\]
and $\Delta_N(w,\delta^*)\longrightarrow\Delta(w,\delta^*)$.
Then
\begin{equation}
\widehat V_{\mathrm{cl}}
\overset{p}{\longrightarrow}
V(w,h^*)+\Delta(w,\delta^*),
\label{eq:cluster-limit}
\end{equation}
where $\Delta(w,\delta^*)\geq0$. Thus
$\widehat V_{\mathrm{cl}}$ is asymptotically conservative for the
design variance and is consistent whenever $\Delta(w,\delta^*)=0$.
Under homogeneous treatment effects, the gap vanishes if
$\delta_g^*$ equals the common effect for every represented cell.
\end{proposition}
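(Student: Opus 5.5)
The plan is to reduce $\widehat V_{\mathrm{cl}}$ to an infeasible version with fixed adjustment $h^*$ and fixed centering $\delta^*$, and then compute that version's expectation and show it concentrates.

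\textbf{Step 1: the infeasible version.} Write $H_g^*=H_g(w,h^*)$ and define
\[
\widetilde V=\frac{N}{W_{\cS}^2}\sum_{g\in\mathcal G_N}\bigl(H_g^*-w_gN_g^{\mathrm{cell}}\delta_g^*\bigr)^2 .
\]

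\textbf{Step 2: the expectation of $\widetilde V$.} By Proposition~\ref{prop:fixed-adjustment} applied cell by cell (the cell-level calculation behind Proposition~\ref{prop:baseline}), each cell satisfies
\[
\E_D[H_g^*]=w_g\theta_g,\qquad \Var_D(H_g^*)=w_g^2V_g(h^*).
\]
Hence
\[
\E_D\bigl(H_g^*-w_gN_g^{\mathrm{cell}}\delta_g^*\bigr)^2=w_g^2V_g(h^*)+w_g^2\bigl(\theta_g-N_g^{\mathrm{cell}}\delta_g^*\bigr)^2 .
\]
Summing over cells and applying \eqref{eq:VN} and the assumed convergence of $\Delta_N(w,\delta^*)$ gives
\[
\E_D\widetilde V=V_N(w,h^*)+\Delta_N(w,\delta^*)\to V(w,h^*)+\Delta(w,\delta^*).
\]
Nonnegativity of $\Delta$ is immediate, since it is a limit of sums of squares.

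\textbf{Step 3: concentration of $\widetilde V$.} Under Bernoulli assignment the summands are independent across cells. Their variances are bounded by fourth moments, which are controlled on average: Assumption~\ref{ass:cells}(ii) bounds the average fourth moment of the candidate sums and $h^*$, the weights are bounded, $N_g^{\mathrm{cell}}\le\overline M$, and $\frac1N\sum|\delta_g^*|^4=O(1)$. Therefore
\[
\Var_D(\widetilde V)\le\frac{N^2}{W_{\cS}^4}\sum_g O\bigl(\E_D\text{(fourth powers)}\bigr)=O(N^{-1}),
\]
using $W_{\cS}\asymp N$. Chebyshev's inequality then gives $\widetilde V-\E_D\widetilde V=o_p(1)$.

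\textbf{Step 4: feasible versus infeasible.} This is the main obstacle. Write $\widehat H_g-w_gN_g^{\mathrm{cell}}\widehat\delta_g=a_g+r_g$, where $a_g$ is the infeasible term and
\[
r_g=-w_g\frac{T_g^{\mathrm{cell}}-pN_g^{\mathrm{cell}}}{p(1-p)}\bigl(\widehat h_g-h_g^*\bigr)-w_gN_g^{\mathrm{cell}}\bigl(\widehat\delta_g-\delta_g^*\bigr).
\]
By Cauchy--Schwarz it suffices to show $\frac1N\sum_g r_g^2=o_p(1)$, because $\frac1N\sum a_g^2=O_p(1)$ from Steps 2--3.

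For the adjustment part, use a mean-value expansion under Assumption~\ref{ass:adjustment}(ii). Within a neighborhood of $\beta^*$, the uniformly bounded derivatives give
\[
|\widehat h_g-h_g^*|\le C\|\widehat\beta-\beta^*\|=O_p(N^{-1/2})
\]
uniformly in $g$. The multiplier $T_g^{\mathrm{cell}}-pN_g^{\mathrm{cell}}$ is bounded by $\overline M$, so $\frac1N\sum_g(\cdot)^2=O_p(N^{-1})$.

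For the centering part, bounded $w_g$ and bounded $N_g^{\mathrm{cell}}$ together with the hypothesis $\frac1N\sum(\widehat\delta_g-\delta_g^*)^2=o_p(1)$ give $o_p(1)$ directly. The cross term $\frac2N\sum a_gr_g$ is then $o_p(1)$ by Cauchy--Schwarz. Multiplying by $N^2/W_{\cS}^2\to 1/w^{*2}$ yields \eqref{eq:cluster-limit}.

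\textbf{Step 5: homogeneous effects.} If $\tau_i=\tau$ for every $i\in\cS$, then $\theta_g=N_g^{\mathrm{cell}}\tau$. Taking $\delta_g^*=\tau$ makes every term of $\Delta_N$ vanish, so $\Delta=0$.

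The delicate point in Step 4 is that $\widehat\beta$ may leave the neighborhood of $\beta^*$. I would handle this by working on the event $\{\|\widehat\beta-\beta^*\|\le\epsilon\}$, whose probability tends to one.
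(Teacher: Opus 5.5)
Your proposal is correct and follows essentially the same route as the paper's proof. Both arguments use an infeasible statistic with fixed $h^*$ and $\delta^*$ whose mean is $V_N(w,h^*)+\Delta_N(w,\delta^*)$ and which concentrates by independence across cells and the average fourth-moment bounds, then compare it with the feasible statistic via Cauchy--Schwarz, using the Taylor bound on $\widehat h_g-h_g^*$ and the mean-square condition on $\widehat\delta_g$; the only cosmetic difference is that the paper writes each summand as $\varepsilon_g+b_g$ and controls $\sum_g(\varepsilon_g^2-\E_D\varepsilon_g^2)$ and $\sum_g b_g\varepsilon_g$ separately, whereas you bound the variance of the sum of squares directly.
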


The common centering $\widehat\delta_g=\widehat\tau$ satisfies these
conditions with $\delta_g^*=\tau_{w,\cS}$, provided the corresponding
gap has a limit. It therefore gives a valid variance estimator for
all four estimators considered above, including the baseline and
weighted estimators with $\widehat h_g=0$, and is consistent under
homogeneous treatment effects. More generally, the gap reflects
variation in linkage-cell average treatment effects around
$\delta_g^*$ and vanishes whenever
$\theta_g=N_g^{\mathrm{cell}}\delta_g^*$ for every represented cell.

For the WLS implementation in \eqref{eq:interacted-wls}, take
\[
\widehat\delta_g=x_g'(\widehat\beta_1-\widehat\beta_0).
\]
Let $\beta_d^*$ denote the finite-population weighted projection
coefficients of $\E_D[\widetilde Y_{g(i)}(\mathbf D)\mid D_i=d]$
on $x_{g(i)}$ over all $i\in\cS$, using weights $w_{g(i)}$.
The corresponding targets are
$h_g^*=(1-p)x_g'\beta_1^*+p x_g'\beta_0^*$ and
$\delta_g^*=x_g'(\beta_1^*-\beta_0^*)$.
Suppose the conditions of Proposition~\ref{thm:general-clt}
hold, with Assumption~\ref{ass:adjustment} applied jointly to the
coefficients of the two arm-specific regressions and their
finite-population targets defined above. In particular,
$\widehat\beta_d-\beta_d^*=O_p(N^{-1/2})$ for $d\in\{0,1\}$.
Suppose also that $x_g$ includes an intercept, has fixed dimension,
and is uniformly bounded, and
\[
\mathsf A_N
=\frac1N\sum_{g\in\mathcal G_N}
w_gN_g^{\mathrm{cell}}x_gx_g'
\longrightarrow\mathsf A\succ0.
\]
Then this centering satisfies the moment and approximation conditions
of Proposition~\ref{prop:cluster-se}. Moreover,
\[
N\widehat{\operatorname{Var}}_{\mathrm{CR0}}(\widehat t)
-\widehat V_{\mathrm{cl}}=o_p(1),
\]
where $\widehat{\operatorname{Var}}_{\mathrm{CR0}}(\widehat t)$ is the
usual cluster-robust variance estimator for the treatment coefficient
in \eqref{eq:interacted-wls}, clustering by linkage cell and without
a degrees-of-freedom correction. Thus, when the corresponding gap
converges, the regression CRVE is also asymptotically conservative.
With an intercept, homogeneous treatment effects imply
$\delta_g^*=\tau_{w,\cS}$, so both variance estimators are consistent
on their respective scales.

This equivalence provides a direct implementation:
run the regression in \eqref{eq:interacted-wls} at the experimental-unit
level and use the standard error clustered by linkage cell.
The result covers both intercept-only WLS adjustment and the fully
interacted adjustment with additional covariates. In the first case,
the regression includes an intercept and treatment; in the second,
it also includes the centered covariates and their interactions with
treatment. 
When $N_g^{\mathrm{cell}}\leq1$ for every represented linkage cell,
CR0 coincides with the individual-level heteroskedasticity-robust
estimator HC0 applied to the same regression. When a cell contains
multiple experimental units, clustering accounts for the dependence
induced by their shared candidate-sum outcome.
Usual degrees-of-freedom corrections that tend to one, such as CR1,
leave these asymptotic conclusions unchanged.

Proposition~\ref{prop:cluster-se} conditions on the realized
experimental sample and therefore concerns inference for the SATE,
or for the corresponding weighted treatment-effect target. If the
experimental sample is itself treated as random and the target is
an average treatment effect in a broader population or subpopulation,
sampling variation would provide an additional source of uncertainty.

\section{Other assignment mechanisms}
\label{sec:other-designs}

Many experiments fix the number of treated units, either overall or within randomization strata. Appendix~\ref{app:fixed-count} extends our finite-sample and asymptotic results to these designs.

We show that a generalization of the baseline estimator allowing treatment probabilities to vary across strata, denoted $\widehat\tau^{\mathrm{SR}}$, can be biased for the SATE when experimental units share both a linkage cell and a randomization stratum.\footnote{When treatment probabilities are constant across strata, $\widehat\tau^{\mathrm{SR}}$ coincides numerically with $\widehat\tau^{\mathrm B}$.} Under fixed treatment counts, treating one unit reduces the probability that another unit in the same stratum is treated. When these units share a linkage cell, this dependence prevents the other unit's treatment-effect contribution from canceling in expectation. We derive a corrected estimator, $\widehat\tau^{\mathrm{SR,corr}}$, that is exactly unbiased provided no stratum is entirely contained in one linkage cell.  Fisher randomization inference for the sharp null of no individual treatment effects remains exact when assignments follow the actual design and the full statistic is recomputed after each reassignment.

We also study two asymptotic regimes. With a fixed number of growing strata and bounded linkage-cell sizes, the bias of $\widehat\tau^{\mathrm{SR}}$ vanishes and the corrected and uncorrected estimators are first-order asymptotically equivalent. With a growing number of small strata, the uncorrected estimator can remain inconsistent for the SATE, whereas the corrected estimator is consistent and asymptotically normal under the stated moment and variance conditions. 
For each regime, we provide a nonnegative variance estimator that is asymptotically conservative in general and consistent under homogeneous treatment effects. With many small strata, linkage cells crossing stratum boundaries can generate first-order cross-stratum covariances, which our variance estimator explicitly accounts for.

Regression adjustment and weighting extend to these designs, subject to the same caveats discussed under Bernoulli assignment.

\section{Discussion}
\label{sec:discussion}

Ambiguous record linkage is usually treated as a problem of recovering individual identities or individual outcomes. For treatment-effect estimation in a randomized experiment, that objective can be unnecessarily demanding. We show that the SATE can be recovered under much weaker conditions on the linkage process: the true outcome must be retained among the candidate outcomes, but the relative probabilities of candidate identities can remain unknown and the ambiguity can be arbitrarily related to potential outcomes.

The baseline estimator trades an identification problem for an efficiency problem. Retaining all candidate outcomes avoids attenuation from selecting or averaging ambiguous links, but can add substantial noise. The linkage structure itself provides useful information about that noise. Regression adjustment can remove predictable level differences while retaining the SATE as the first-order target. Precision-motivated weighting can further reduce variance, but generally changes the target from the SATE to a weighted average when treatment effects are heterogeneous. Together, these approaches characterize how precision can be recovered without returning to a probabilistic model of individual identity and when precision gains come at the cost of changing the estimand.

Several extensions are natural. We focus on candidate sets generated by a coarse identifier and therefore forming a partition. More general overlapping candidate sets can be represented by a known incidence matrix, suggesting extensions of the same design-based logic beyond partitions. A second direction is to study whether candidate-sum outcomes can be used with nonexperimental methods, such as matching, instrumental variables, difference-in-differences, or regression discontinuity.

\appendix
\numberwithin{table}{section}

\begin{center}
{\Large\bfseries Online Appendix}
\end{center}

\section{Fixed-count and stratified randomization: additional results}
\label{app:fixed-count}

Fixing treatment counts changes both identification and inference with candidate-sum outcomes. First, treatment assignments are no longer independent within a randomization stratum. If experimental units share both a stratum and a linkage cell, the other participants' treatment-effect contributions need not cancel from the expected treatment contrast. Second, even though strata are randomized independently, linkage cells crossing strata can make the estimator's contributions dependent across strata.

We first characterize the resulting bias, remove it with a correction based only on experimental counts, and identify the covariance terms entering the finite-sample variance. We then develop inference under two asymptotic regimes. The distinction matters because the correction becomes negligible with a fixed number of growing strata but can remain essential with many small strata. Throughout, linkage cells may cross randomization strata. 

For ease of exposition, the main development below considers the estimators without weighting or adjustment. Appendix~\ref{app:strat-general-results} states the corresponding assumptions and results when we consider weighting and regression adjustment. Appendix~\ref{app:proofs} provides the proofs of these general results. The unweighted, unadjusted results follow as the special case $w_g=1$ and $h_g=0$.

\subsection{Finite-sample results}

Partition the experimental sample into strata $\cS_1,\ldots,\cS_R$. Stratum $r$ has $N_r^{\mathrm{str}}=|\cS_r|$ units, of which exactly $T_r^{\mathrm{str}}$ are treated. 
Every subset of this size is equally likely, and assignments in distinct strata are independent. Write $p_r=T_r^{\mathrm{str}}/N_r^{\mathrm{str}}$ and assume $1\le T_r^{\mathrm{str}}\le N_r^{\mathrm{str}}-1$. Complete randomization is the special case $R=1$.

The relevant overlap is the intersection of a linkage cell and a randomization stratum. For each intersection, write
\begin{equation}
N_{gr}=|\cC_g\cap\cS_r|,\qquad
T_{gr}=\sum_{i\in\cC_g\cap\cS_r}D_i,\qquad
\theta_{gr}=\sum_{i\in\cC_g\cap\cS_r}\tau_i.
\label{eq:strata-cell-defs}
\end{equation}
Thus $\sum_rN_{gr}=N_g^{\mathrm{cell}}$, $\sum_gN_{gr}=N_r^{\mathrm{str}}$, and $\sum_rT_{gr}=T_g^{\mathrm{cell}}$. These counts concern experimental participants rather than all administrative candidates. Unless stated otherwise, sums over intersection contributions retain only nonempty intersections.

\subsubsection{Why fixed treatment counts create bias, and how to remove it}

Allowing the assignment probability to vary across strata gives the analogue of the baseline estimator. Define the uncorrected intersection contribution
\begin{equation}
H_{gr}
=
\frac{T_{gr}-p_rN_{gr}}{p_r(1-p_r)}
\widetilde Y_g(\mathbf D),
\qquad
\widehat\tau^{\mathrm{SR}}
=\frac1N\sum_{g,r}H_{gr}.
\label{eq:strat-naive}
\end{equation}

When $p_r=p$ for all $r$, this equals $\widehat\tau^{\mathrm B}$ numerically. Under Bernoulli assignment, the contribution of another experimental participant to \(\widetilde Y_g(\mathbf D)\) is orthogonal to the recipient's centered treatment indicator because their treatment assignments are independent. Under fixed treatment counts, treating one participant lowers the probability of treating another participant in the same stratum. This matters for the expected contrast only when the two participants also share a candidate sum.

\begin{proposition}[Bias under stratified fixed-count randomization]
\label{prop:strat-main}
Under the design above,
\begin{equation}
\E_D[\widehat\tau^{\mathrm{SR}}]
=\tau_{\cS}-\frac1N\sum_{g,r}
\frac{N_{gr}-1}{N_r^{\mathrm{str}}-1}\theta_{gr}.
\label{eq:strat-bias-main}
\end{equation}
\end{proposition}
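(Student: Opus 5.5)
The plan is to compute $\E_D[H_{gr}]$ for each nonempty intersection $\cC_g\cap\cS_r$ separately, then sum over $(g,r)$ and divide by $N$. The only inputs are linearity of expectation, the decomposition \eqref{eq:linked-cell-outcome} of the candidate sum, and the first two moments of assignment indicators under fixed-count randomization within a stratum.

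\textbf{Step 1 (expand the contribution).} Write $T_{gr}-p_rN_{gr}=\sum_{i\in\cC_g\cap\cS_r}(D_i-p_r)$. Substitute $\widetilde Y_g(\mathbf D)=B_g+\sum_{j\in\cC_g\cap\cS}D_j\tau_j$. Then
\[
p_r(1-p_r)H_{gr}=B_g\sum_{i\in\cC_g\cap\cS_r}(D_i-p_r)+\sum_{i\in\cC_g\cap\cS_r}\sum_{j\in\cC_g\cap\cS}(D_i-p_r)D_j\tau_j .
\]
Under the design, $\Pr(D_i=1)=p_r$ for every $i\in\cS_r$. Since $B_g$ is fixed, the first term has mean zero. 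The problem therefore reduces to computing $\E_D[(D_i-p_r)D_j]=\Cov_D(D_i,D_j)$ for each pair $(i,j)$.

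\textbf{Step 2 (covariances, by cases on $j$).}
\begin{itemize}
\item If $j$ lies in a different stratum $r'\neq r$, assignments in distinct strata are independent, so the covariance is zero. This is why linkage cells crossing strata create no bias.
\item If $j=i$, the covariance is $\Var(D_i)=p_r(1-p_r)$.
\item If $j\neq i$ and $j\in\cS_r$, complete randomization of $T_r^{\mathrm{str}}$ among $N_r^{\mathrm{str}}$ units gives
\[
\Pr(D_i=D_j=1)=\frac{T_r^{\mathrm{str}}(T_r^{\mathrm{str}}-1)}{N_r^{\mathrm{str}}(N_r^{\mathrm{str}}-1)}.
\]
Hence $\Cov_D(D_i,D_j)=-p_r(1-p_r)/(N_r^{\mathrm{str}}-1)$. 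The assumption $N_r^{\mathrm{str}}\ge2$, implied by $1\le T_r^{\mathrm{str}}\le N_r^{\mathrm{str}}-1$, makes this well defined.
\end{itemize}
This third case is the only step requiring care, since it is where fixed counts depart from the Bernoulli calculation behind Proposition~\ref{prop:baseline}. Verifying this covariance formula is the main (though elementary) obstacle.

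\textbf{Step 3 (assemble).} Dividing by $p_r(1-p_r)$, each $i\in\cC_g\cap\cS_r$ contributes $\tau_i$ from the diagonal term. It also contributes $-\frac{1}{N_r^{\mathrm{str}}-1}\sum_{j\in\cC_g\cap\cS_r,\,j\neq i}\tau_j$ from same-intersection partners. Summing over $i$, each $\tau_j$ with $j\in\cC_g\cap\cS_r$ appears once per other member, that is, $N_{gr}-1$ times. This gives
\[
\E_D[H_{gr}]=\theta_{gr}-\frac{N_{gr}-1}{N_r^{\mathrm{str}}-1}\theta_{gr}.
\]
Summing over nonempty intersections, $\sum_{g,r}\theta_{gr}=\sum_{i\in\cS}\tau_i=N\tau_{\cS}$, since the intersections partition $\cS$ (every $i\in\cS$ lies in $\cC_{g(i)}$ and in exactly one stratum). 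Dividing by $N$ yields \eqref{eq:strat-bias-main}.
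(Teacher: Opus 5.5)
Your proposal is correct and follows essentially the same route as the paper. Both arguments drop the fixed term using the zero mean of $T_{gr}-p_rN_{gr}$, use independence across strata, and substitute the within-stratum moments $\Var_D(D_i)=p_r(1-p_r)$ and $\Cov_D(D_i,D_j)=-p_r(1-p_r)/(N_r^{\mathrm{str}}-1)$. The only difference is that the paper first collects your pairwise sum into $\Cov_D(T_{gr},D_i)=p_r(1-p_r)(N_r^{\mathrm{str}}-N_{gr})/(N_r^{\mathrm{str}}-1)$ and then sums over $i\in\cC_g\cap\cS_r$, which is the same computation in a different order.
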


The bias depends on experimental overlap within linkage-cell--stratum intersections, not on the number of nonexperimental candidates. Intersections with $N_{gr}=1$ contribute no bias. More generally, the expected contribution from an intersection is its effect total multiplied by $(N_r^{\mathrm{str}}-N_{gr})/(N_r^{\mathrm{str}}-1)$. With heterogeneous effects of either sign, the aggregate bias need not be toward zero. The mechanism requires both overlaps: fixed treatment counts create assignment dependence only within a stratum, and the candidate sum transmits one unit's assignment to another unit's contribution only when they also share a linkage cell.

To remove this bias, define
\begin{equation}
\begin{aligned}
\widetilde H_{gr}
=\frac{N_r^{\mathrm{str}}-1}{N_r^{\mathrm{str}}-N_{gr}}
\frac{T_{gr}-p_rN_{gr}}{p_r(1-p_r)}
\widetilde Y_g(\mathbf D), \qquad
\widehat\tau^{\mathrm{SR,corr}}
&=\frac1N\sum_{g,r}\widetilde H_{gr}.
\end{aligned}
\label{eq:strat-corrected-main}
\end{equation}
The correction therefore divides each uncorrected contribution \(H_{gr}\) by the known factor multiplying \(\theta_{gr}\) in its expectation.  It equals one when $N_{gr}=1$ and is well defined provided
\begin{equation}
N_{gr}<N_r^{\mathrm{str}}\quad\text{whenever }N_{gr}>0.
\label{eq:strat-nonempty}
\end{equation}
This restriction excludes a stratum entirely contained in a single linkage cell. In that case $T_{gr}=T_r^{\mathrm{str}}$ is fixed, so the assignment multiplier for that intersection is identically zero and cannot be rescaled to recover its effect total. Set $\widetilde H_{gr}=\theta_{gr}=0$ for empty intersections when they appear in algebraic expressions.

\begin{proposition}[Exact correction]
\label{prop:strat-correction}
Under \eqref{eq:strat-nonempty},
\[
\E_D\widetilde H_{gr}=\theta_{gr},\qquad
\E_D[\widehat\tau^{\mathrm{SR,corr}}]=\tau_{\cS}.
\]
\end{proposition}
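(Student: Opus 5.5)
The plan is to compute $\E_D[H_{gr}]$ for a single nonempty intersection directly and then rescale. Proposition~\ref{prop:strat-main} already asserts the aggregate expectation, but I will redo the calculation at the intersection level: the correction is applied intersection by intersection, so I need
\[
\E_D H_{gr}=\frac{N_r^{\mathrm{str}}-N_{gr}}{N_r^{\mathrm{str}}-1}\,\theta_{gr}.
\]
The second claim, $\E_D[\widehat\tau^{\mathrm{SR,corr}}]=\tau_{\cS}$, then follows by summing over all nonempty $(g,r)$ and using $\sum_{g,r}\theta_{gr}=\sum_{i\in\cS}\tau_i=N\tau_{\cS}$.

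Write $\widetilde Y_g(\mathbf D)=B_g+\sum_{s}\sum_{j\in\cC_g\cap\cS_s}D_j\tau_j$. The multiplier $T_{gr}-p_rN_{gr}=\sum_{i\in\cC_g\cap\cS_r}(D_i-p_r)$ has mean zero, because each $D_i$ in stratum $r$ has marginal probability $p_r$. The term involving $B_g$ is fixed, so it contributes nothing. For $j$ in another stratum $s\neq r$, $D_j$ is independent of stratum-$r$ assignments, so $\E[(D_i-p_r)D_j]=0$. This leaves only pairs $i,j$ both in $\cC_g\cap\cS_r$. Under fixed-count sampling within the stratum:
\begin{itemize}
\item For $i=j$: $\E[(D_i-p_r)D_i]=p_r(1-p_r)$.
\item For $i\neq j$: $\E[(D_i-p_r)D_j]=\Pr(D_i=D_j=1)-p_r^2$. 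Since $\Pr(D_i=D_j=1)=\dfrac{T_r^{\mathrm{str}}(T_r^{\mathrm{str}}-1)}{N_r^{\mathrm{str}}(N_r^{\mathrm{str}}-1)}$, this equals $-\dfrac{p_r(1-p_r)}{N_r^{\mathrm{str}}-1}$.
\end{itemize}

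Summing over the pairs,
\[
\E_D\Big[\sum_{i}(D_i-p_r)\sum_j D_j\tau_j\Big]
=p_r(1-p_r)\sum_{j}\tau_j\left\{1-\frac{N_{gr}-1}{N_r^{\mathrm{str}}-1}\right\}
=p_r(1-p_r)\,\frac{N_r^{\mathrm{str}}-N_{gr}}{N_r^{\mathrm{str}}-1}\,\theta_{gr}.
\]
Here each $j$ receives one diagonal term and $N_{gr}-1$ off-diagonal terms. Dividing by $p_r(1-p_r)$ gives the displayed expectation of $H_{gr}$, which is consistent with Proposition~\ref{prop:strat-main}.

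Under \eqref{eq:strat-nonempty}, the factor $(N_r^{\mathrm{str}}-N_{gr})/(N_r^{\mathrm{str}}-1)$ is a strictly positive constant. It is deterministic, depending only on the design counts. Multiplying $H_{gr}$ by its reciprocal therefore yields $\E_D\widetilde H_{gr}=\theta_{gr}$ by linearity. Summing over intersections and dividing by $N$ completes the argument.

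The only delicate step is the second-order joint inclusion probability under fixed-count sampling. I also need to check that a cell spanning several strata creates no cross-stratum contribution to the mean, which holds by independence across strata. The remaining steps are bookkeeping.
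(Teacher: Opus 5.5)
Your proof is correct and follows the paper's argument. The paper also works one nonempty intersection at a time. It derives $\Cov_D(T_{gr},D_i)=p_r(1-p_r)(N_r^{\mathrm{str}}-N_{gr})/(N_r^{\mathrm{str}}-1)$ from the fixed-count covariance $-p_r(1-p_r)/(N_r^{\mathrm{str}}-1)$, drops fixed and cross-stratum terms using the mean-zero multiplier and independence across strata, and divides by the deterministic factor, which is positive under \eqref{eq:strat-nonempty}. It then sums $\theta_{gr}$ over intersections to obtain $N\tau_{\cS}$, just as you do.
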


The correction restores exact unbiasedness without resolving any individual link. It uses only the stratum sizes, treatment counts, experimental linkage-cell memberships, and observed candidate sums. In particular, it does not require observing the full administrative linkage-cell size $M_g$. Consequently, the correction remains available when zero-outcome administrative records are omitted, provided all nonzero outcomes are retained and absence from the file implies a zero outcome.

\subsubsection{Which covariances enter the variance?}

Unlike the Bernoulli case, the variance is not generally a sum of independent linkage-cell variances. Contributions can share the dependence induced by fixed treatment counts within a stratum, share a candidate sum within a linkage cell, or share both sources indirectly when linkage cells cross strata. Write
\begin{equation}
V_{gr}=\Var_D(\widetilde H_{gr}),\qquad
C_{gr,hs}=\Cov_D(\widetilde H_{gr},\widetilde H_{hs}),
\qquad C_{gr,gr}=V_{gr}.
\label{eq:strat-variance-notation}
\end{equation}
All covariance sums below use ordered pairs.

For contributions from distinct strata $r$ and $s$, a potentially nonzero covariance requires both reciprocal intersections to be nonempty: linkage cell $g$ must also contain participants in stratum $s$, and linkage cell $h$ must also contain participants in stratum $r$. If either intersection is empty, the covariance vanishes. The next result uses this restriction to separate the variance into four interpretable components.

\begin{proposition}[Finite-sample variance and covariance support]
\label{prop:strat-covariance}
If $r\ne s$ and $N_{gs}N_{hr}=0$, then $C_{gr,hs}=0$. Consequently,
\begin{equation}
\begin{aligned}
\Var_D(\widehat\tau^{\mathrm{SR,corr}})=\frac1{N^2}\Bigg[
&\sum_{g,r}V_{gr}+\sum_r\sum_{g\ne h}C_{gr,hr}\\
&+\sum_g\sum_{r\ne s}C_{gr,gs}
+\sum_{\substack{g\ne h,\ r\ne s\\N_{gs}N_{hr}>0}}C_{gr,hs}\Bigg].
\end{aligned}
\label{eq:strat-finite-variance}
\end{equation}
\end{proposition}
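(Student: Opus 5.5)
We prove the covariance-support claim first and then read off the variance decomposition by sorting the ordered pairs $((g,r),(h,s))$.

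\emph{Step 1: what each contribution depends on.}
Write $\widetilde H_{gr}=c_{gr}(T_{gr}-p_rN_{gr})\widetilde Y_g(\mathbf D)$, where $c_{gr}$ is a deterministic constant. By \eqref{eq:linked-cell-outcome},
\[
\widetilde Y_g(\mathbf D)=B_g+\sum_{t}\sum_{j\in\cC_g\cap\cS_t}D_j\tau_j .
\]
So $\widetilde H_{gr}$ is a function of two sets of assignments: those of stratum $r$, through $T_{gr}$ and the stratum-$r$ part of the candidate sum, and those of the strata $t\neq r$ with $N_{gt}>0$, through the rest of $\widetilde Y_g$. Split the candidate sum as $\widetilde Y_g=A_{g,r}+E_{g,s}+R_{g,rs}$, where $A_{g,r}$ is $B_g$ plus the stratum-$r$ part, $E_{g,s}$ is the stratum-$s$ part, and $R_{g,rs}$ is the remainder from other strata. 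Since $\cC_g\cap\cS_s$ is empty when $N_{gs}=0$, we have $E_{g,s}=0$ in that case.

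\emph{Step 2: vanishing covariance.}
Fix $r\ne s$ and suppose, without loss of generality, that $N_{gs}=0$; the case $N_{hr}=0$ is symmetric with the roles of $(g,r)$ and $(h,s)$ swapped. Then $\widetilde H_{gr}=X\cdot(A+R)$, where $X=c_{gr}(T_{gr}-p_rN_{gr})$ and $A=A_{g,r}$ depend only on stratum $r$, and $R=R_{g,rs}$ depends only on strata other than $r$ and $s$. Similarly $\widetilde H_{hs}=X'\cdot(A'+E'+R')$, where $X'$ depends only on stratum $s$, $A'=A_{h,s}$ collects $B_h$ and the stratum-$s$ part, $E'=E_{h,r}$ is the stratum-$r$ part, and $R'$ depends on the other strata.

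The key fact is $\E_D X=0$, which holds because $\E_D T_{gr}=p_rN_{gr}$ under fixed-count assignment. The same argument gives $\E_D X'=0$.

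Expand $\Cov_D(XA+XR,\;X'A'+X'E'+X'R')$ term by term, using that strata are independent and conditioning on the other strata:
\begin{itemize}
\item Every product involves $X'$, a function of stratum $s$ alone, multiplied by factors that are all independent of stratum $s$ except $A'$, which itself depends only on stratum $s$.
\item Conditioning on the other strata, $\E_D[XA\,X'A']$ factorizes as $\E_D[XA]\,\E_D[X'A']$. This is exactly the product of the corresponding means, so this pairing contributes nothing to the covariance.
\item Terms such as $\E_D[XA\,X'E']$ factorize as $\E_D[X'] \cdot \E_D[XAE']=0$. Means of the form $\E_D[X'E']$ also vanish for the same reason.
\item Terms involving $R$ and $R'$ (other strata) also factor through $\E_D X'=0$, or factorize as $\E_D[XR]\,\E_D[X'R']$, provided we track the joint dependence of $R$ and $R'$ on common third strata.
\end{itemize}

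That last case is the delicate one. $R$ and $R'$ may depend jointly on a third stratum $t$, so the factorization must be justified through $X$ and $X'$. The cleanest route is to condition on $\mathcal F$, the assignments outside strata $r$ and $s$. Given $\mathcal F$, the vector of stratum-$r$ assignments and the vector of stratum-$s$ assignments are independent. Moreover, $\widetilde H_{gr}$ is then a function of stratum $r$ only, because $N_{gs}=0$ means it has no stratum-$s$ dependence. Hence
\[
\Cov(\widetilde H_{gr},\widetilde H_{hs}\mid\mathcal F)=0 .
\]
It remains to show $\Cov\bigl(\E[\widetilde H_{gr}\mid\mathcal F],\,\E[\widetilde H_{hs}\mid\mathcal F]\bigr)=0$. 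We have
\[
\E[\widetilde H_{hs}\mid\mathcal F]=\E_D[X'A']+\E_D[X'(E'+R')\mid\mathcal F]=\text{const}+\E_D[X']\,\E[E'+R'\mid\mathcal F]=\text{const},
\]
because $E'$ depends on stratum $r$ only and $R'$ is $\mathcal F$-measurable, so both are independent of stratum $s$ given $\mathcal F$. A constant conditional mean has zero covariance with anything, so $C_{gr,hs}=0$ by the law of total covariance. We expect this conditioning argument to be the main obstacle; it is what makes the proof rigorous without explicit moment formulas.

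\emph{Step 3: the decomposition.}
Expand $\Var_D\bigl(\sum_{g,r}\widetilde H_{gr}\bigr)=\sum C_{gr,hs}$ over ordered pairs of nonempty intersections and partition the pairs into four groups:
\begin{enumerate}
\item $(g,r)=(h,s)$, which gives the variances $V_{gr}$;
\item $r=s$ with $g\ne h$;
\item $g=h$ with $r\ne s$;
\item $g\ne h$ with $r\ne s$, where Step 2 deletes every pair with $N_{gs}N_{hr}=0$.
\end{enumerate}
Dividing by $N^2$ yields \eqref{eq:strat-finite-variance}.
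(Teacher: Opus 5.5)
Your overall structure (prove the support claim, then sort the ordered pairs into four classes) matches the paper, and Step~3 is fine. The flaw is in the ``cleanest route'' of Step~2, and the missing ingredient is the centering of $X'$.

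You conclude $\Cov(\widetilde H_{gr},\widetilde H_{hs}\mid\mathcal F)=0$ from two facts: strata $r$ and $s$ are conditionally independent, and $\widetilde H_{gr}$ depends only on stratum $r$. That inference requires $\widetilde H_{hs}$ to depend only on stratum $s$ given $\mathcal F$. This fails in exactly the case that matters, $N_{gs}=0<N_{hr}$: there $\widetilde H_{hs}=X'(A'+E'+R')$, and $E'=E_{h,r}$ is a nonconstant function of stratum $r$. The same reasoning would give $\Cov_D(X,E')=0$. Under fixed counts this covariance is actually $-c_{gr}\,p_r(1-p_r)N_{gr}\theta_{hr}/(N_r^{\mathrm{str}}-1)$, which is generally nonzero.

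The conclusion you want is still true, but you must supply the reason. Given $\mathcal F$, $X'$ is independent of $(\widetilde H_{gr},E',R')$ and $\E_D X'=0$. Hence $\E_D[\widetilde H_{gr}X'(E'+R')\mid\mathcal F]=0=\E_D[X'(E'+R')\mid\mathcal F]$. Meanwhile $X'A'$ is conditionally independent of $\widetilde H_{gr}$. So independence across strata alone is not enough; it is the centering of $X'$ that neutralizes the stratum-$r$ part of $\widetilde H_{hs}$.

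Once repaired, the conditioning on $\mathcal F$ is unnecessary. The case where $R$ and $R'$ share a third stratum is not delicate: $X'$ is independent of $(X,A,R,E',R')$ and centered, so a plain bilinear expansion gives zero for every term.

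The paper does this expansion once, in a form that yields more. It writes $\widetilde H_{gr}=Q_{gr}\{\E_D\widetilde Y_g+\sum_tF_{gt}\}$, with a centered stratum-$r$ multiplier $Q_{gr}$ and centered stratum-$t$ pieces $F_{gt}=\sum_{i\in\cC_g\cap\cS_t}\tau_i(D_i-p_t)$. For $r\ne s$ only the direct and reciprocal pairings survive, giving the closed form $C_{gr,hs}=\E_D[Q_{gr}F_{hr}]\,\E_D[Q_{hs}F_{gs}]$. This vanishes whenever $N_{gs}N_{hr}=0$.

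Your argument proves only the vanishing. The paper's explicit covariance is what it reuses in Lemma~\ref{lem:strat-products} to build the cross-stratum components $\widehat V^{(3)}$ and $\widehat V^{(4)}$ of the small-strata variance estimator.
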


The first component collects intersection variances, while the second and third correspond to the two natural clustering dimensions---randomization strata and linkage cells. This resembles two-way clustering along these dimensions \citep{camerongelbachmiller2011,thompson2011}, which allows nonzero covariance between contributions that share either a randomization stratum or a linkage cell. Here, however, crossing linkage cells and strata can also generate dependence between contributions that share neither dimension. The fourth component collects exactly these additional covariance terms, which arise for distinct linkage cells and distinct strata with reciprocal intersections.

\subsection{Asymptotic analysis}

We consider two asymptotic regimes with $N\to\infty$, retaining the finite-population interpretation of the main text. With a fixed number of growing strata, each bounded linkage cell occupies a vanishing fraction of its strata, so the bias correction approaches one. With a growing number of bounded-size strata, this fraction need not vanish and the correction can remain important. We treat these regimes separately because they also require different ways of estimating the variance.

For root-$N$ inference, define
\[
V_N^{\mathrm{SR}}=N\Var_D(\widehat\tau^{\mathrm{SR,corr}}),
\]
and let $V_N^{(1)},\ldots,V_N^{(4)}$ denote the four sums inside the brackets in \eqref{eq:strat-finite-variance}, each divided by $N$. Thus $V_N^{\mathrm{SR}}=\sum_{j=1}^4V_N^{(j)}$. For a nondegenerate normal limit, we additionally impose the same type of variance-limit condition used under Bernoulli assignment.

\begin{assumption}[Stratified variance limit]
\label{ass:strat-variance}
The normalized design variance satisfies
\begin{equation}
V_N^{\mathrm{SR}}\longrightarrow V^{\mathrm{SR}}\in(0,\infty).
\label{eq:strat-variance-limit}
\end{equation}
\end{assumption}

This condition requires the combined variance and covariance contributions to stabilize at a positive value. The size and moment conditions imposed below control the magnitude of this normalized variance. However, they do not by themselves ensure that it converges or stays away from zero.

\subsubsection{A fixed number of growing strata}

We first consider sequences with a fixed number of strata, each containing an increasing number of participants in both treatment arms. Linkage cells remain bounded. In this regime, the dependence between any fixed set of participants within a stratum weakens as the stratum grows, although its aggregate effect on the variance need not disappear.

\begin{assumption}[Growing strata]
\label{ass:strat-large}
Assumption~\ref{ass:bounded-cells} and \eqref{eq:strat-nonempty} hold. The number $R$ is fixed, $N_r^{\mathrm{str}}/N\to\lambda_r>0$, and $p_r\to p_r^\infty\in(0,1)$. Along the sequence, the corrected contributions satisfy
\begin{equation}
\frac1N\sum_{g,r:N_{gr}>0}
\E_D\bigl[|\widetilde H_{gr}|^4\bigr]=O(1).
\label{eq:strat-moment}
\end{equation}
\end{assumption}

The moment condition is stated directly in terms of the estimator's contributions. Because it bounds their noncentered fourth moments on average, it also controls their means and centered moments on average, without requiring uniformly bounded potential outcomes or uniformly bounded moments in every intersection.

\paragraph{Asymptotic distribution.}
We first consider the asymptotic distributions of $\widehat\tau^{\mathrm{SR,corr}}$ and $\widehat\tau^{\mathrm{SR}}$ under this framework.

\begin{proposition}[Growing strata: equivalence and normality]
\label{prop:fixedcount-asymptotics}
Under Assumption~\ref{ass:strat-large},
\begin{equation}
\sqrt N\{\widehat\tau^{\mathrm{SR,corr}}-\widehat\tau^{\mathrm{SR}}\}=o_p(1).
\label{eq:corr-first-order}
\end{equation}
If Assumption~\ref{ass:strat-variance} also holds, both estimators satisfy
\[
\sqrt N(\widehat\tau-\tau_{\cS})\Rightarrow\mathcal N(0,V^{\mathrm{SR}}).
\]
\end{proposition}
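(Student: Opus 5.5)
The plan has three parts: control the difference between the two estimators, establish a CLT for the corrected estimator, and transfer the limit to the uncorrected one.

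\emph{Step 1: the difference.} Write
\[
\widetilde H_{gr}-H_{gr}=\kappa_{gr}H_{gr},
\qquad
\kappa_{gr}=\frac{N_r^{\mathrm{str}}-1}{N_r^{\mathrm{str}}-N_{gr}}-1=\frac{N_{gr}-1}{N_r^{\mathrm{str}}-N_{gr}}.
\]
By Assumption~\ref{ass:bounded-cells}, $N_{gr}\le\overline M$, and $N_r^{\mathrm{str}}\asymp N$ because $R$ is fixed and $\lambda_r>0$. Hence $\max_{g,r}|\kappa_{gr}|=O(N^{-1})$. Moreover $H_{gr}=\widetilde H_{gr}/(1+\kappa_{gr})$, so
\[
\sqrt N\,\bigl|\widehat\tau^{\mathrm{SR,corr}}-\widehat\tau^{\mathrm{SR}}\bigr|
\le \frac{\max|\kappa_{gr}|}{\sqrt N}\sum_{g,r}|\widetilde H_{gr}|.
\]
By \eqref{eq:strat-moment} and Jensen, $\E_D\sum_{g,r}|\widetilde H_{gr}|=O(N)$. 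The right side is therefore $O_p(N^{-1}\cdot N^{-1/2}\cdot N)=O_p(N^{-1/2})$, which gives \eqref{eq:corr-first-order} by Markov's inequality. This step is routine.

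\emph{Step 2: CLT for the corrected estimator.} By Proposition~\ref{prop:strat-correction}, $\widehat\tau^{\mathrm{SR,corr}}-\tau_{\cS}=N^{-1}\sum_{g,r}(\widetilde H_{gr}-\theta_{gr})$ is a mean-zero sum with variance $V_N^{\mathrm{SR}}/N$. The difficulty is dependence: assignments are dependent within strata because of fixed counts, and contributions are dependent across strata through shared candidate sums. I would use Stein's method with a dependency neighbourhood, or better a permutation/finite-population CLT such as H\'ajek's combinatorial CLT.

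Concretely, I would first decompose $\widetilde Y_g(\mathbf D)=B_g+\sum_{j\in\cC_g\cap\cS}D_j\tau_j$. Each $\widetilde H_{gr}$ is then a polynomial of degree at most two in the indicators $D_i$, supported on the bounded set $\cC_g\cap\cS$. The sum becomes a linear statistic $\sum_i a_iD_i$ plus a quadratic form $\sum_{i\ne j}b_{ij}D_iD_j$ with only $O(N)$ nonzero bounded-block coefficients $b_{ij}$, from pairs in the same linkage cell.

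Next I would centre the indicators, $D_i-p_r$, and take the Hoeffding-type projection onto the linear part. Under simple random sampling within a stratum, the degenerate quadratic remainder has variance $O(N)\cdot O(N^{-1})$ relative to the leading term, or is handled directly. The projection is a sum of independent-across-strata linear rank statistics. H\'ajek's CLT applies within each of the finitely many strata, using the fourth-moment bound \eqref{eq:strat-moment} to give a Lindeberg condition, since there is no dominating coefficient. Combining strata is immediate by independence. If the quadratic part does not vanish at rate $o(N)$ in variance, I would instead apply a CLT for local dependence (Chen--Shao) after coupling the fixed-count design to Bernoulli sampling, which perturbs each bounded block by $O(N^{-1})$.

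The main obstacle is showing that the dependence induced by fixed counts, which is weak but affects all pairs within a stratum, does not spoil normality, while the aggregate variance (including terms $V_N^{(2)}$–$V_N^{(4)}$) is left intact. The projection/coupling argument is where the real work lies.

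\emph{Step 3: conclusion.} Assumption~\ref{ass:strat-variance} gives $\sqrt N(\widehat\tau^{\mathrm{SR,corr}}-\tau_{\cS})\Rightarrow\mathcal N(0,V^{\mathrm{SR}})$. Slutsky's lemma together with Step 1 transfers this limit to $\widehat\tau^{\mathrm{SR}}$.
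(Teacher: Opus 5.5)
Your Step~1 is correct: Markov's inequality with the average moment bound can replace the paper's assignment-uniform envelope $L_g$. The order of transfer also does not matter. The paper proves normality for $\widehat\tau^{\mathrm{SR}}$ first and then passes it to the corrected estimator, and your reverse order works equally well.

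The gap is in Step~2, where neither of your routes works as stated. In the H\'ajek-projection route, the within-cell quadratic part is not a negligible remainder. There are order-$N$ pairs $i<j$ with $g(i)=g(j)$, and each contributes variance of order one. For example, for a cell with two participants and a common $p$, the pair coefficient is $\theta_g/\{p(1-p)\}$ and the term contributes $\theta_g^2$. So this part has variance of the same order $N$ as the linear part whenever participants share cells and effects are nonzero. A combinatorial CLT for the linear part alone therefore does not give the limit. You would need joint normality of a linear-plus-quadratic statistic under within-stratum sampling without replacement, and that is precisely the difficulty.

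Your fallback rests on the claim that coupling to Bernoulli ``perturbs each bounded block by $O(N^{-1})$'', and that claim fails in the sense you need. Take $\mathbf Z$ Bernoulli and obtain $\mathbf D$ by switching $|K_r|$ randomly chosen units in each stratum, where $K_r=\sum_{i\in\cS_r}(Z_i-p_r)=O_p(\sqrt N)$. Then each unit is switched with probability of order $N^{-1/2}$. For $F=N\widehat\tau^{\mathrm{SR}}$, the aggregate change is $F(\mathbf D)-F(\mathbf Z)=-\sum_r\bar\ell_rK_r+o_p(\sqrt N)$. This is first order, i.e.\ the same order as the fluctuation you normalize. Equivalently, the $O(N^{-1})$ within-stratum covariances of the fixed-count design sum over $O(N^2)$ pairs to a first-order variance term. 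Blockwise closeness of marginal laws cannot capture this.

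A concrete check: take singleton cells, $\tau_i\equiv0$ and $Y_i(0)=c+\varepsilon_i$. The constant $c$ drops out of $F(\mathbf D)$, because $T_r^{\mathrm{str}}=p_rN_r^{\mathrm{str}}$. It still contributes $c\sum_rK_r/\{p_r(1-p_r)\}$ to $F(\mathbf Z)$.

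The missing idea is to subtract the projection on the stratum count fluctuations. Write $F(\mathbf Z)-N\tau_{\cS}$ as the centered Bernoulli polynomial $\sum_i\ell_i(Z_i-p_{r(i)})$ plus within-cell quadratic terms. Then show that $F(\mathbf D)=F(\mathbf Z)-\sum_r\bar\ell_rK_r+R_N$ with $\E R_N^2=o(N)$. This needs simple-random-sampling bounds for the single-switch terms and a separate bound for pairs of switched units in a common cell. Finally, note that $F(\mathbf Z)-\sum_r\bar\ell_rK_r$ is a sum of independent centered linkage-cell terms under the Bernoulli law, even when cells cross strata.

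With that in place, Lyapunov's CLT suffices, and no Chen--Shao argument is needed in this regime. The $L^2$ bound also identifies the variance of this sum with $V_N^{\mathrm{SR}}$ to first order. This explains why the limit variance is $V^{\mathrm{SR}}$ rather than the Bernoulli variance.
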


The estimators are asymptotically equivalent because, under this asymptotic framework, the correction factor satisfies
\[
\frac{N_r^{\mathrm{str}}-1}{N_r^{\mathrm{str}}-N_{gr}}
=1+O(N^{-1})
\]
uniformly over nonempty intersections. Under the stated moment condition, its effect on the estimator is therefore negligible at the root-$N$ scale. Relative to $\widehat\tau^{\mathrm{SR}}$ under independent Bernoulli assignment with the same probabilities $p_r$, fixing the treatment counts removes a nonnegative projection term from the first-order variance. Thus, fixing treatment counts cannot increase the variance at first order.

\paragraph{Variance estimation.}
We consider the following estimator of the asymptotic variance \(V^{\mathrm{SR}}\):
\begin{equation}
\widehat V_a=\frac1N\sum_g
\left[
\widetilde H_g-\sum_rN_{gr}\widehat\tau_r
-\sum_r\widehat b_r(T_{gr}-p_rN_{gr})
\right]^2.
\label{eq:strat-vhat-a}
\end{equation}
where
\begin{equation}
\widetilde H_g=\sum_r\widetilde H_{gr},\qquad
\widehat\tau_r=\frac1{N_r^{\mathrm{str}}}\sum_g\widetilde H_{gr},
\label{eq:strat-large-centers}
\end{equation}
and
\begin{equation}
\widehat b_r=
\frac{\sum_g(T_{gr}-p_rN_{gr})\widetilde H_g}
{N_r^{\mathrm{str}}p_r(1-p_r)}.
\label{eq:strat-large-projection-coefficient}
\end{equation}

\begin{proposition}[Variance estimation with growing strata]
\label{prop:strat-variance-a}
Under Assumptions~\ref{ass:strat-large} and \ref{ass:strat-variance}, let
$\tau_{\cS_r}=(N_r^{\mathrm{str}})^{-1}\sum_g\theta_{gr}$. Then
\begin{equation}
\widehat V_a=V_N^{\mathrm{SR}}+
\frac1N\sum_g\left(\theta_g-\sum_rN_{gr}\tau_{\cS_r}\right)^2+o_p(1).
\label{eq:strat-vhat-a-limit}
\end{equation}
Thus $\widehat V_a$ is nonnegative in every sample, asymptotically conservative, and consistent for $V^{\mathrm{SR}}$ under homogeneous treatment effects.
\end{proposition}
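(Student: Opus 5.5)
The plan is to exhibit $\widehat V_a$ as an empirical second moment of cell-level \emph{linearized} contributions, and then compare its expectation with the true design variance. In the growing-strata regime, Proposition~\ref{prop:fixedcount-asymptotics} shows the correction factor is $1+O(N^{-1})$. Stratified fixed-count sampling within each stratum behaves like a finite-population sampling design. The estimator therefore admits the Hájek-type decomposition
\[
\sqrt N(\widehat\tau^{\mathrm{SR,corr}}-\tau_{\cS})=\frac{1}{\sqrt N}\sum_g\Bigl\{\widetilde H_g-\theta_g-\sum_r b_r^*(T_{gr}-p_rN_{gr})\Bigr\}+o_p(1),
\]
where $b_r^*=\E_D[\sum_g (T_{gr}-p_rN_{gr})\widetilde H_g]/\{N_r^{\mathrm{str}}p_r(1-p_r)\}$. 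The term in $b_r^*$ is the projection of the sum on the stratum-level treated count. Under Bernoulli sampling this count would be random, but under the fixed-count design $\sum_g(T_{gr}-p_rN_{gr})=0$, so subtracting it costs nothing. The first step is to show that $V_N^{\mathrm{SR}}$ equals $N^{-1}\sum_g\Var_D$ of the Bernoulli analogue of the bracketed summand, up to $o(1)$. Equivalently, fixing counts removes exactly the projection onto $\{T_r^{\mathrm{str}}\}$; this is the ``nonnegative projection term'' remark after Proposition~\ref{prop:fixedcount-asymptotics}. I would prove this through the standard coupling between rejective/fixed-size sampling and Bernoulli sampling conditioned on $T_r^{\mathrm{str}}$. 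The covariances from cells crossing strata are automatically absorbed, because the sum is indexed by cells $g$ rather than by intersections.

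The second step is to replace the estimated centerings $\widehat\tau_r$ and $\widehat b_r$ by their targets $\tau_{\cS_r}$ and $b_r^*$. Both estimators are stratum-level averages of contributions with bounded average fourth moments by \eqref{eq:strat-moment}. With $R$ fixed and $N_r^{\mathrm{str}}\asymp N$, a variance bound under the fixed-count design gives $\widehat\tau_r-\tau_{\cS_r}=O_p(N^{-1/2})$ and $\widehat b_r-b_r^*=O_p(N^{-1/2})$. Here the covariances among cells are controlled because the cells are bounded and within-stratum correlations are $O(1/N_r^{\mathrm{str}})$. Expanding the square then shows the cross terms are $o_p(1)$, via Cauchy--Schwarz and $N^{-1}\sum_g N_{gr}^2=O(1)$.

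The third step is a law of large numbers for the infeasible quantity
\[
N^{-1}\sum_g\Bigl\{\widetilde H_g-\sum_rN_{gr}\tau_{\cS_r}-\sum_rb_r^*(T_{gr}-p_rN_{gr})\Bigr\}^2 .
\]
Its variance is $O(N^{-1})$ under the fourth-moment condition, again because dependence across cells is weak (order $1/N$ within strata) and each cell is bounded. Its expectation splits as the variance term plus the squared-bias term $N^{-1}\sum_g(\theta_g-\sum_rN_{gr}\tau_{\cS_r})^2$, since $\E_D\widetilde H_g=\theta_g$ by Proposition~\ref{prop:strat-correction}. By step one, the variance term equals $V_N^{\mathrm{SR}}+o(1)$, which gives \eqref{eq:strat-vhat-a-limit}. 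Nonnegativity is immediate from the sum-of-squares form. Under homogeneous effects, $\theta_g=\tau N_g^{\mathrm{cell}}=\sum_rN_{gr}\tau_{\cS_r}$, so the gap term vanishes.

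The main obstacle is step one. The issue is showing that the per-cell variance of the linearized summand, under the Bernoulli analogue after removing the projection, matches the exact fixed-count variance including all four covariance blocks of Proposition~\ref{prop:strat-covariance}, up to $o(1)$. My first attempt would be a direct computation of second- and fourth-order joint inclusion probabilities under simple random sampling within strata, namely $\Pr(D_i=D_j=1)=p_r(T_r^{\mathrm{str}}-1)/(N_r^{\mathrm{str}}-1)$. Summing the resulting $O(1/N)$ pairwise corrections over within-stratum pairs yields exactly the $-\sum_r b_r^{*2}p_r(1-p_r)N_r^{\mathrm{str}}/N$ projection term, with a remainder that is negligible.
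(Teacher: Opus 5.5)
Your proposal is correct and follows essentially the paper's route. Like you, the paper uses the Bernoulli coupling from the growing-strata CLT to write $V_N^{\mathrm{SR}}$ as the sum of independent Bernoulli cell variances minus the projection on stratum treatment-count fluctuations, and then uses a law of large numbers (via the $1+O(N^{-1})$ ratio of fixed-count to Bernoulli pattern probabilities) to show that the residual sum of squares recovers this plus the heterogeneity gap; the only bookkeeping difference is that the paper expands $\widehat V_a$ as the $\widehat\tau_r$-centered sum of squares minus $\frac1N\sum_r N_r^{\mathrm{str}}p_r(1-p_r)\widehat b_r^{\,2}$, whereas you substitute the target coefficients inside the square.
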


The nonnegative gap compares each linkage cell's effect total with the total predicted by its stratum composition and the stratum average effects. It vanishes under homogeneous effects, and also when effects are constant within each stratum even if they differ across strata.

With perfect linkage, this construction reduces asymptotically to the familiar stratified Neyman variance estimate. In fact, $\widehat V_a$ is then exactly
\[
\sum_r\frac{N_r^{\mathrm{str}}}{N}
\left\{\frac{\widehat\sigma_{1r}^2}{p_r}
+\frac{\widehat\sigma_{0r}^2}{1-p_r}\right\},
\]
where $\widehat\sigma_{dr}^2$ is the empirical variance of observed outcomes in arm $d$ of stratum $r$, calculated with the arm size as divisor. 

\subsubsection{A growing number of small strata}
\label{app:many-small-strata}

We next consider sequences in which the number of strata grows while stratum sizes remain bounded. Fixed treatment counts then continue to induce substantial dependence within a stratum. The bias correction need not approach one, and candidate sums can transmit assignment dependence across strata when linkage cells cross their boundaries. The key restriction is that both strata and linkage cells have bounded size.

\begin{assumption}[Small strata and bounded contribution moments]
\label{ass:strat-small}
Assumption~\ref{ass:bounded-cells} and \eqref{eq:strat-nonempty} hold. There is a fixed $\overline N^{\mathrm{str}}<\infty$ such that $2\le N_r^{\mathrm{str}}\le\overline N^{\mathrm{str}}$, and
\begin{equation}
\frac1N\sum_{g,r:N_{gr}>0}\E_D|\widetilde H_{gr}|^4=O(1).
\label{eq:strat-small-moments}
\end{equation}
\end{assumption}

The number of strata and the number of nonempty linkage-cell--stratum intersections, $\#\{(g,r):N_{gr}>0\}$, are both of order $N$. The noncentered moment bound controls both the means and centered moments of the contributions on average. Proposition~\ref{prop:strat-covariance} leaves at most $\overline N^{\mathrm{str}}\overline M$ potentially nonzero covariances per intersection, including its own variance. Thus only order $N$ entries of an order-$N^2$ covariance matrix can be nonzero. The average moment bound and Cauchy--Schwarz bound their total absolute magnitude by $O(N)$, implying $\Var_D(\widehat\tau^{\mathrm{SR,corr}})=O(N^{-1})$. The surviving covariances can nevertheless contribute in first order to $V_N^{\mathrm{SR}}$.

\paragraph{Asymptotic distribution.}
We next consider the asymptotic distributions of $\widehat\tau^{\mathrm{SR,corr}}$ and $\widehat\tau^{\mathrm{SR}}$ under this alternative asymptotic framework.

\begin{proposition}[Small strata: bias, consistency, and normality]
\label{prop:small-strata}
Under Assumption~\ref{ass:strat-small},
\begin{align}
\widehat\tau^{\mathrm{SR}}-\tau_{\cS}
&=-\frac1N\sum_{g,r}\frac{N_{gr}-1}{N_r^{\mathrm{str}}-1}\theta_{gr}
+O_p(N^{-1/2}),\label{eq:small-strata-bias}\\
\widehat\tau^{\mathrm{SR,corr}}-\tau_{\cS}
&=O_p(N^{-1/2}).\label{eq:small-strata-rate}
\end{align}
If Assumption~\ref{ass:strat-variance} also holds,
\begin{equation}
\sqrt N(\widehat\tau^{\mathrm{SR,corr}}-\tau_{\cS})
\Rightarrow\mathcal N(0,V^{\mathrm{SR}}).
\label{eq:small-strata-clt}
\end{equation}
\end{proposition}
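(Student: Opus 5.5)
Under Assumption~\ref{ass:strat-small}, the plan is to obtain all three claims from the exact first two moments in Propositions~\ref{prop:strat-main}, \ref{prop:strat-correction} and \ref{prop:strat-covariance}, together with a central limit theorem for sums of locally dependent contributions.

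\textbf{Rate for the corrected estimator.} By Proposition~\ref{prop:strat-correction}, $\E_D\widehat\tau^{\mathrm{SR,corr}}=\tau_{\cS}$. For the variance, I will use the bound sketched after Assumption~\ref{ass:strat-small}. By Proposition~\ref{prop:strat-covariance}, each nonempty intersection $(g,r)$ has at most $\overline N^{\mathrm{str}}\overline M$ partners $(h,s)$ with possibly nonzero $C_{gr,hs}$. Cauchy--Schwarz gives $|C_{gr,hs}|\le\frac12(V_{gr}+V_{hs})$. Summing, and using that the number of partners is bounded, yields
\[
\Bigl|\sum C\Bigr|\le K\sum_{g,r}\E_D\widetilde H_{gr}^2\le K\Bigl(N\sum_{g,r}\E_D\widetilde H_{gr}^4\Bigr)^{1/2}=O(N).
\]
Here I use that the number of nonempty intersections is at most $N$, together with \eqref{eq:strat-small-moments}. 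Hence $\Var_D(\widehat\tau^{\mathrm{SR,corr}})=O(N^{-1})$, and Chebyshev gives \eqref{eq:small-strata-rate}.

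\textbf{Uncorrected estimator.} The same argument applies verbatim to $H_{gr}$, because $H_{gr}=\widetilde H_{gr}\,(N_r^{\mathrm{str}}-N_{gr})/(N_r^{\mathrm{str}}-1)$ and this factor lies in $(0,1]$. So the moments of $H_{gr}$ are dominated by those of $\widetilde H_{gr}$, and the covariance support is the same, since it depends only on which assignments enter. Thus $\widehat\tau^{\mathrm{SR}}-\E_D\widehat\tau^{\mathrm{SR}}=O_p(N^{-1/2})$. Substituting the exact mean from Proposition~\ref{prop:strat-main} gives \eqref{eq:small-strata-bias}.

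\textbf{Normality.} I will write $\sqrt N(\widehat\tau^{\mathrm{SR,corr}}-\tau_{\cS})=N^{-1/2}\sum_{g,r}X_{gr}$, where $X_{gr}=\widetilde H_{gr}-\theta_{gr}$ are centered. Each $X_{gr}$ is a function of the assignments in stratum $r$ and in the strata met by cell $g$. I will build a dependency graph on the intersections, connecting $(g,r)$ and $(h,s)$ whenever their generating sets of strata overlap. Because strata are assigned independently, non-adjacent collections are independent. The degree is bounded by a constant depending on $\overline M$ and $\overline N^{\mathrm{str}}$: cell $g$ touches at most $\overline M$ strata, each stratum contains at most $\overline N^{\mathrm{str}}$ units and hence at most that many intersections, and so on for one more step. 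I will then invoke a Stein-method CLT for dependency graphs with bounded degree $D$ (Baldi--Rinott, or Chen--Shao). Its Wasserstein bound, for the normalized sum with variance $\sigma_N^2=V_N^{\mathrm{SR}}\to V^{\mathrm{SR}}>0$, is of order
\[
\frac{D^2}{\sigma_N^3N^{3/2}}\sum\E|X|^3+\frac{D^{3/2}}{\sigma_N^2N}\Bigl(\sum\E X^4\Bigr)^{1/2}.
\]

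\textbf{Main obstacle.} The difficulty is controlling the third-moment term with only average fourth moments and no uniform bound. I will use Hölder: $\sum\E|X|^3\le(\#)^{1/4}(\sum\E X^4)^{3/4}=O(N)$, so the first term is $O(N^{-1/2})$. The second term is $O(N^{-1/2})$ directly. Centering costs nothing, since $\E X^4\le 16\,\E\widetilde H^4$. Assumption~\ref{ass:strat-variance} then gives \eqref{eq:small-strata-clt}. I also need to verify that the dependency graph's degree is genuinely uniformly bounded. This uses the support restriction in Proposition~\ref{prop:strat-covariance} and the bounds $N_g^{\mathrm{cell}}\le\overline M$ and $N_r^{\mathrm{str}}\le\overline N^{\mathrm{str}}$.
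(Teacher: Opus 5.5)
Your proposal is correct and matches the paper's proof essentially step for step. The two rates come from the bounded covariance support, $2|\Cov_D(X,Y)|\le\Var_D(X)+\Var_D(Y)$, the average fourth-moment bound and Chebyshev, with Proposition~\ref{prop:strat-main} supplying the mean of $\widehat\tau^{\mathrm{SR}}$. Normality comes from a bounded-degree dependency graph on the nonempty intersections that links cells sharing a stratum, which is the same graph the paper uses. The only difference is that the paper applies Chen--Shao's Theorem~2.4 with third moments, whereas you use a Wasserstein bound with third and fourth moments.

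One small correction concerns your justification for the uncorrected estimator. The $H_{gr}$ inherit the covariance support simply because each $H_{gr}$ is a fixed multiple of $\widetilde H_{gr}$, not because of which assignments enter. For $r\ne s$ with $N_{hr}>0$ and $N_{gs}=0$, both $\widetilde H_{gr}$ and $\widetilde H_{hs}$ depend on stratum $r$. Their covariance vanishes only through the factorization behind Proposition~\ref{prop:strat-covariance}.
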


In addition to the exact unbiasedness established in Proposition~\ref{prop:strat-correction}, the corrected estimator has root-$N$ convergence and, under the variance-limit condition, is asymptotically normal despite cross-stratum linkage. In contrast, the uncorrected estimator can converge away from the SATE because its average bias term need not vanish, as illustrated in Example~\ref{ex:alt-stratified} of Appendix~\ref{app:alternative-methods}. The correction is essential for covering this regime without additional restrictions on overlap or treatment effects.

\paragraph{Variance estimation.}
We consider the following estimator of the asymptotic variance $V^{\mathrm{SR}}$, constructed from estimates of the four components in \eqref{eq:strat-finite-variance}.

Define the centered contribution
\begin{equation}
\widehat R_{gr}=\widetilde H_{gr}-N_{gr}\widehat\tau^{\mathrm{SR,corr}}.
\label{eq:strat-residual}
\end{equation}
For the cross-stratum components below, call a pair of strata $(r,s)$ \emph{singular} when both strata are covered exclusively by the same two linkage cells, with both linkage cells represented in each stratum.

\emph{Intersection variances.} The first component is estimated by squared residuals,
\begin{equation}
\widehat V^{(1)}=\frac1N\sum_{g,r}\widehat R_{gr}^{\,2}.
\label{eq:strat-vhat-one}
\end{equation}
\emph{Covariances within a stratum.} For different linkage cells in the same stratum, use
\begin{equation}
\widehat V^{(2)}=\frac1N\sum_r\sum_{g\ne h}\widehat R_{gr}\widehat R_{hr}.
\label{eq:strat-vhat-two}
\end{equation}

\emph{Covariances within a linkage cell across strata.} Use
\begin{equation}
\widehat V^{(3)}=\frac1N\sum_g\sum_{r\ne s}
\begin{cases}
\widehat R_{gr}\widehat R_{gs},&(r,s)\text{ singular},\\
\tfrac12\widetilde H_{gr}\widetilde H_{gs},&\text{otherwise}.
\end{cases}
\label{eq:strat-vhat-three}
\end{equation}
\emph{Covariances between distinct linkage cells and strata.} For $g\ne h$ and $r\ne s$ with all four intersections nonempty, define the observable coefficients
\begin{equation}
a_{gr,hs}=\frac{N_{gr}N_{hs}}
{(N_r^{\mathrm{str}}-N_{gr})(N_s^{\mathrm{str}}-N_{hs})},\qquad
b_{gr,hs}=\frac{N_{hr}N_{gs}}
{(N_r^{\mathrm{str}}-N_{hr})(N_s^{\mathrm{str}}-N_{gs})}.
\label{eq:strat-ab}
\end{equation}
They satisfy $0<a_{gr,hs}b_{gr,hs}\le1$, with equality precisely for singular pairs. Use
\begin{equation}
\widehat V^{(4)}=\frac1N\sum_{\substack{g\ne h,\ r\ne s\\N_{gs}N_{hr}>0}}
\begin{cases}
\widehat R_{gr}\widehat R_{hs},&(r,s)\text{ singular},\\[3pt]
\displaystyle\frac{a_{gr,hs}
(\widetilde H_{hr}\widetilde H_{gs}-b_{gr,hs}\widetilde H_{gr}\widetilde H_{hs})}
{1-a_{gr,hs}b_{gr,hs}},&\text{otherwise}.
\end{cases}
\label{eq:strat-vhat-four}
\end{equation}
The different formulas for singular and nonsingular pairs in $\widehat V^{(3)}$ and $\widehat V^{(4)}$ ensure the asymptotic conservativeness of the resulting variance estimator.

Let $\widehat V_{b,0}=\sum_{j=1}^4\widehat V^{(j)}$. To guarantee nonnegativity in finite samples, use
\begin{equation}
\begin{aligned}
\widehat V_b
&=\max\{\widehat V_{b,0},\widehat V^{(1)}+\widehat V^{(2)}\}\\
&=\frac1N\sum_r\left(\sum_g\widehat R_{gr}\right)^2
+\max\{0,\widehat V^{(3)}+\widehat V^{(4)}\}.
\end{aligned}
\label{eq:strat-vhat-b}
\end{equation}

\begin{proposition}[Variance estimation with small strata]
\label{prop:strat-variance-b}
Under Assumption~\ref{ass:strat-small},
\begin{equation}
\widehat V_{b,0}=V_N^{\mathrm{SR}}+\Delta_{b,N}+O_p(N^{-1/2}),
\qquad \Delta_{b,N}\ge0.
\label{eq:strat-vhat-b-limit}
\end{equation}
The gap vanishes under homogeneous treatment effects. The estimator $\widehat V_b$ is nonnegative in every sample and asymptotically conservative. Under homogeneous effects,
\[
\widehat V_b-V_N^{\mathrm{SR}}=O_p(N^{-1/2}),
\]
so it consistently estimates $V^{\mathrm{SR}}$ when Assumption~\ref{ass:strat-variance} holds.
\end{proposition}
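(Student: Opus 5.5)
The plan is to show that each of the four component estimators $\widehat V^{(j)}$ is, up to $O_p(N^{-1/2})$, equal to its expectation computed as if $\widehat\tau^{\mathrm{SR,corr}}$ were replaced by the fixed value $\tau_{\cS}$. We then compare these expectations term by term with $V_N^{(j)}$ from Proposition~\ref{prop:strat-covariance}.

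\emph{Step 1: replacing the estimated center.} Write $R_{gr}=\widetilde H_{gr}-N_{gr}\tau_{\cS}$, so that $\widehat R_{gr}=R_{gr}-N_{gr}(\widehat\tau^{\mathrm{SR,corr}}-\tau_{\cS})$. By Proposition~\ref{prop:small-strata}, $\widehat\tau^{\mathrm{SR,corr}}-\tau_{\cS}=O_p(N^{-1/2})$. Each quadratic form has only $O(N)$ nonzero terms: each intersection has at most $\overline N^{\mathrm{str}}\overline M$ partners, and $N_{gr}$ is bounded. Using the moment bound \eqref{eq:strat-small-moments} and Cauchy--Schwarz, replacing $\widehat R$ by $R$ therefore changes each $\widehat V^{(j)}$ by $O_p(N^{-1/2})$.

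\emph{Step 2: concentration.} Each resulting statistic has the form $N^{-1}\sum_k \xi_k$, where every summand is a product of two contributions. Under Assumption~\ref{ass:strat-small}, a summand is dependent on only boundedly many others. The dependency graph is built by linking intersections that share a stratum or a linkage cell, and then taking the bounded-step neighbourhood. Its degree is bounded by a constant depending on $\overline N^{\mathrm{str}}$ and $\overline M$. The variance of the average is then $O(N^{-1})$ by the fourth-moment bound, which yields the $O_p(N^{-1/2})$ remainder.

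\emph{Step 3: expectations, the main computation.} Use $\E R_{gr}=\theta_{gr}-N_{gr}\tau_{\cS}\equiv e_{gr}$ from Proposition~\ref{prop:strat-correction}. This gives $\E[R_{gr}R_{hs}]=C_{gr,hs}+e_{gr}e_{hs}$. Consequently $\E[\widehat V^{(1)}+\widehat V^{(2)}]=V_N^{(1)}+V_N^{(2)}+N^{-1}\sum_r(\sum_g e_{gr})^2$, plus the same treatment for singular pairs.

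For non-singular pairs the key algebraic step is an identity for cross-stratum products. I expect that, writing $\widetilde H_{gr}=\kappa_{gr}(T_{gr}-p_rN_{gr})\widetilde Y_g$ and using the independence of strata, the product $\E[\widetilde H_{gr}\widetilde H_{hs}]$ can be expressed as a combination of $C_{gr,hs}$ and the effect totals. The relevant ingredient is the fixed-count moment $\E[(T_{gr}-p_rN_{gr})D_j]$ for $j$ in the same stratum, which is proportional to $N_{gr}$ for $j\in\cC_g$ and to $-N_{gr}N_{hr}/(N_r-N_{gr})$-type factors otherwise. This is what produces the coefficients $a,b$ in \eqref{eq:strat-ab}.

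Solving the resulting $2\times2$ linear system in $(\E\widetilde H_{gr}\widetilde H_{hs},\E\widetilde H_{hr}\widetilde H_{gs})$ should show the following. The displayed combination in \eqref{eq:strat-vhat-four} has expectation $C_{gr,hs}$ plus a term of the form $\theta$-products. The same holds with the factor $\tfrac12$ for $\widehat V^{(3)}$ with $g=h$. The system is invertible because $a b<1$; this explains the separate singular case, where it degenerates and residual products are used instead.

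\emph{Step 4: sign of the gap.} Collecting the leftover effect-total terms into $\Delta_{b,N}$, I would show that it is a sum of squares, plausibly $N^{-1}\sum_r(\sum_g e_{gr})^2$ plus nonnegative combinations over crossing pairs. It is also linear in products of $e_{gr}$, or of $\theta$-differences that vanish when $\tau_i\equiv\tau$, since then $e_{gr}=0$ and $\theta_{gr}=N_{gr}\tau$ make the solved cross terms exact. Establishing nonnegativity of this gap for the non-singular formulas is the step I expect to be hardest. My first attempt would be to pair the ordered terms $(gr,hs)$ and $(hr,gs)$ and complete the square.

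\emph{Step 5: conclusions.} The identity \eqref{eq:strat-vhat-b} shows that $\widehat V_b\ge0$ and $\widehat V_b\ge\widehat V_{b,0}$, which gives conservativeness. Under homogeneity $\Delta_{b,N}=0$ and $V_N^{\mathrm{SR}}\ge0$, so the max operation changes the estimator only by $O_p(N^{-1/2})$. This yields the stated rate, and consistency follows under Assumption~\ref{ass:strat-variance}.
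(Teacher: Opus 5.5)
Your Steps 1 and 2, and the residual-product part of Step 3, match the paper's proof. The two places that actually carry the argument, however, are wrong or missing.

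\emph{First gap: the nonsingular terms leave no remainder.} Your Step 4 plan expects a $\theta$-product remainder there and calls its nonnegativity the hardest step; that step is misconceived. Write $\widetilde H_{gr}=Q_{gr}\widetilde Y_g$, where $Q_{gr}$ is the corrected centered multiplier. Write $\widetilde Y_g=\E_D\widetilde Y_g+\sum_tF_{gt}$ with $F_{gt}=\sum_{i\in\cC_g\cap\cS_t}\tau_i(D_i-p_t)$. For $r\ne s$, independence across strata kills every term except those pairing each multiplier with an $F$ from its own stratum. Using $\E_D[Q_{gr}F_{gr}]=\theta_{gr}$ and, for $h\ne g$, $\E_D[Q_{gr}F_{hr}]=-N_{gr}\theta_{hr}/(N_r^{\mathrm{str}}-N_{gr})$, this gives
\[
\E_D[\widetilde H_{gr}\widetilde H_{gs}]=2\theta_{gr}\theta_{gs},\qquad
\E_D[\widetilde H_{gr}\widetilde H_{hs}]=\theta_{gr}\theta_{hs}+a_{gr,hs}\theta_{hr}\theta_{gs}\quad(g\ne h).
\]
Hence $C_{gr,gs}=\theta_{gr}\theta_{gs}$ and $C_{gr,hs}=a_{gr,hs}\theta_{hr}\theta_{gs}$. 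So $\tfrac12\widetilde H_{gr}\widetilde H_{gs}$ and the $(a,b)$-inversion in \eqref{eq:strat-vhat-four} are \emph{exactly} unbiased for their covariances, which is why uncentered products are used there.

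The whole gap therefore comes from residual products. The diagonal and within-stratum terms give $(\sum_g e_{gr})^2$. The singular pairs give cross-products $(\sum_g e_{gr})(\sum_g e_{gs})$ for strata $r\ne s$ covered exclusively by the same two cells. Grouping strata into such blocks $b$ yields $\Delta_{b,N}=N^{-1}\sum_b(\sum_{r\in b}\sum_g e_{gr})^2\ge0$.

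Your planned route would not reach this. A genuine $\theta$-product remainder from nonsingular pairs would be sign-indefinite under heterogeneous effects, so completing the square over $(gr,hs)$ and $(hr,gs)$ could not give nonnegativity. The singular cross-products are also individually sign-indefinite. They become nonnegative only when combined with the within-stratum squares of the same block, not through the ``$\sum_r(\cdot)^2$ plus nonnegative crossing terms'' structure you guess. For Step 2 you also need $a/(1-ab)$ to be uniformly bounded. This holds because, with bounded counts, $ab$ takes finitely many values, so $1-ab$ is bounded away from zero when positive.

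\emph{Second gap: the floor argument in Step 5 is invalid.} Since $\widehat V_b-\widehat V_{b,0}=(-\widehat X_N)_+$ with $\widehat X_N=\widehat V^{(3)}+\widehat V^{(4)}$, you must control the sign of the cross-stratum part, not of $V_N^{\mathrm{SR}}$. The fact $V_N^{\mathrm{SR}}\ge0$ would only justify a floor at zero. Here the floor is $\widehat V^{(1)}+\widehat V^{(2)}$, which can exceed $V_N^{\mathrm{SR}}$ by a non-vanishing amount whenever $V_N^{(3)}+V_N^{(4)}<0$.

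The missing fact is that under homogeneous effects the covariance formulas above give $C_{gr,gs}=N_{gr}N_{gs}\tau^2\ge0$ and $C_{gr,hs}=a_{gr,hs}N_{hr}N_{gs}\tau^2\ge0$. Hence $V_N^{(3)}+V_N^{(4)}\ge0$. Apply your Steps 1--3 to components (3)--(4) alone, which are gap-free under homogeneity. This gives $\widehat X_N=V_N^{(3)}+V_N^{(4)}+O_p(N^{-1/2})$, and therefore $(-\widehat X_N)_+=O_p(N^{-1/2})$, as required.
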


The gap $\Delta_{b,N}$ is a nonnegative treatment-effect-heterogeneity term, analogous to the usual conservatism in design-based variance estimation. Its expression is given in the proof. The nonnegative floor may add conservatism under heterogeneous effects but has no first-order effect under homogeneous effects.

\subsection{Inference and intermediate designs}
\label{app:strat-variance}

The two variance estimators are tailored to different asymptotic regimes and need not be valid outside their respective regimes. This distinction matters even with perfect linkage and homogeneous zero treatment effects. We consider two simple examples to illustrate this distinction.

First, consider a single stratum with $N/2$ treated units, where half the units have $Y_i(0)=Y_i(1)=0$ and half have $Y_i(0)=Y_i(1)=1$. Then
\[
V_N^{\mathrm{SR}}=\frac{N}{N-1}\longrightarrow1,
\]
while $\widehat V_b=0$ for every assignment.

Conversely, consider $R$ strata of size two, each with one treated unit and one control. In every stratum, one unit has $Y_i(0)=Y_i(1)=0$ and the other has $Y_i(0)=Y_i(1)=1$. With $N=2R$,
\[
V_N^{\mathrm{SR}}=2,
\]
while $\widehat V_a=0$ for every assignment.

In practice, a design may have neither a small number of large strata nor a large number of uniformly small strata. One possibility in this case is to report the maximum $\max\{\widehat V_a,\widehat V_b\}$. This is asymptotically conservative under either set of stated conditions because it is no smaller than the estimate valid in that regime. However, it need not be consistent under homogeneous effects, since the estimate used outside its regime can retain upward fluctuations. For intermediate sequences satisfying neither set of conditions, the maximum is a pragmatic precaution rather than a procedure with an established coverage guarantee.

\subsection{General results with weighting and adjustment}
\label{app:strat-general-results}

This subsection states the general versions of the preceding results. Let $w_g>0$ be fixed pre-treatment weights and let $h_g=h(x_g)$ be fixed with respect to assignment. For every nonempty intersection define the uncorrected contribution
\begin{equation}
\begin{aligned}
\widetilde H_{gr}^{\,0}(w,h)
&=w_g\frac{T_{gr}-p_rN_{gr}}{p_r(1-p_r)}
\{\widetilde Y_g-h_g\},\\
\widetilde H_{gr}(w,h)
&=\frac{N_r^{\mathrm{str}}-1}{N_r^{\mathrm{str}}-N_{gr}}
\widetilde H_{gr}^{\,0}(w,h).
\end{aligned}
\label{eq:strat-general-contribution}
\end{equation}
The corrected contribution in the second line is defined under \eqref{eq:strat-nonempty}. Write $\widetilde H_{gr}(h)$ for the corrected contribution with $w_g=1$, and set all contributions to zero for empty intersections. With
\begin{equation}
W_{\cS}=\sum_gw_gN_g^{\mathrm{cell}},
\qquad
\tau_{w,\cS}=\frac1{W_{\cS}}\sum_gw_g\theta_g,
\label{eq:strat-general-target}
\end{equation}
the corresponding estimators are
\begin{equation}
\widehat\tau^{\mathrm{SR,corr}}(w,h)
=\frac1{W_{\cS}}\sum_{g,r}\widetilde H_{gr}(w,h),
\qquad
\widehat\tau^{\mathrm{SR}}(w,h)
=\frac1{W_{\cS}}\sum_{g,r}\widetilde H_{gr}^{\,0}(w,h).
\label{eq:strat-general-estimators}
\end{equation}
Thus the estimators and results developed above are obtained as the special case $w_g=1$, $h_g=0$, and $W_{\cS}=N$.

\begin{proposition}[Fixed weights and adjustment: generalization of Propositions~\ref{prop:strat-main} and \ref{prop:strat-correction}]
\label{prop:strat-general-finite}
For every nonempty intersection,
\begin{align}
\E_D\{\widetilde H_{gr}^{\,0}(w,h)\}
&=w_g\frac{N_r^{\mathrm{str}}-N_{gr}}{N_r^{\mathrm{str}}-1}\theta_{gr},\notag\\
\E_D\{\widehat\tau^{\mathrm{SR}}(w,h)\}
&=\tau_{w,\cS}
-\frac1{W_{\cS}}\sum_{g,r}w_g
\frac{N_{gr}-1}{N_r^{\mathrm{str}}-1}\theta_{gr}.
\label{eq:strat-general-bias}
\end{align}
If \eqref{eq:strat-nonempty} also holds, then
\begin{equation}
\E_D\widetilde H_{gr}(w,h)=w_g\theta_{gr},
\qquad
\E_D\{\widehat\tau^{\mathrm{SR,corr}}(w,h)\}=\tau_{w,\cS}.
\label{eq:strat-general-unbiased}
\end{equation}
The fixed adjustment affects neither expectation.
\end{proposition}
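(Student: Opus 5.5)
The plan is to compute $\E_D\{\widetilde H_{gr}^{\,0}(w,h)\}$ for a single nonempty intersection directly, using only first and second moments of the assignment indicators under fixed-count randomization within stratum $r$ and independence across strata. Throughout, write $A_{gr}=T_{gr}-p_rN_{gr}=\sum_{i\in\cC_g\cap\cS_r}(D_i-p_r)$. Since $w_g$, $p_r$ and $h_g$ are fixed, we have
\[
\E_D\{\widetilde H_{gr}^{\,0}(w,h)\}=\frac{w_g}{p_r(1-p_r)}\Bigl\{\E_D[A_{gr}\widetilde Y_g]-h_g\E_D[A_{gr}]\Bigr\}.
\]
Under fixed-count randomization, $\E_D[D_i]=p_r$ for every $i\in\cS_r$, so $\E_D[A_{gr}]=0$. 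Hence the adjustment term vanishes, which already gives the final claim that $h$ affects neither expectation.

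Next, expand $\widetilde Y_g=B_g+\sum_{j\in\cC_g\cap\cS}D_j\tau_j$ using \eqref{eq:linked-cell-outcome}. The $B_g$ term drops out because $\E_D[A_{gr}]=0$. For $j$ in a stratum $s\ne r$, $D_j$ is independent of $A_{gr}$, so $\E_D[A_{gr}D_j]=\E_D[A_{gr}]\,p_s=0$. Only the units $j\in\cC_g\cap\cS_r$ therefore remain. For $i,j$ in the same stratum, the fixed-count design gives $\Cov_D(D_i,D_j)=p_r(1-p_r)$ when $i=j$ and $-p_r(1-p_r)/(N_r^{\mathrm{str}}-1)$ when $i\ne j$; this is the standard sampling-without-replacement covariance, obtained from $\Pr(D_i=D_j=1)=T(T-1)/\{N(N-1)\}$ with $T=T_r^{\mathrm{str}}$ and $N=N_r^{\mathrm{str}}$. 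Consequently, for $j\in\cC_g\cap\cS_r$,
\[
\E_D[A_{gr}D_j]=\sum_{i\in\cC_g\cap\cS_r}\Cov_D(D_i,D_j)=p_r(1-p_r)\Bigl\{1-\frac{N_{gr}-1}{N_r^{\mathrm{str}}-1}\Bigr\}=p_r(1-p_r)\frac{N_r^{\mathrm{str}}-N_{gr}}{N_r^{\mathrm{str}}-1}.
\]
Multiplying by $\tau_j$, summing over $j\in\cC_g\cap\cS_r$, and dividing by $p_r(1-p_r)$ yields the first display, $w_g\frac{N_r^{\mathrm{str}}-N_{gr}}{N_r^{\mathrm{str}}-1}\theta_{gr}$.

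The remaining claims follow by summation. For the bias formula \eqref{eq:strat-general-bias}, divide by $W_{\cS}$, sum over $(g,r)$, and use $\frac{N_r^{\mathrm{str}}-N_{gr}}{N_r^{\mathrm{str}}-1}=1-\frac{N_{gr}-1}{N_r^{\mathrm{str}}-1}$ together with $\sum_r\theta_{gr}=\theta_g$, so that $\frac{1}{W_{\cS}}\sum_{g,r}w_g\theta_{gr}=\tau_{w,\cS}$. Under \eqref{eq:strat-nonempty}, the factor $\frac{N_r^{\mathrm{str}}-N_{gr}}{N_r^{\mathrm{str}}-1}$ is positive, so the deterministic correction in \eqref{eq:strat-general-contribution} cancels it exactly. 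This gives $\E_D\widetilde H_{gr}(w,h)=w_g\theta_{gr}$, and summing gives $\E_D\{\widehat\tau^{\mathrm{SR,corr}}(w,h)\}=\tau_{w,\cS}$.

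The only step that needs real care is the second one, the cross-moment computation. It has to handle three cases: the same unit, distinct units in the same stratum, and units in the cell but in other strata. The cross-stratum case must be shown to vanish even though the cell spans several strata, and this relies on independence across strata combined with $\E_D[A_{gr}]=0$. The degenerate case $T_r^{\mathrm{str}}\in\{1,N_r^{\mathrm{str}}-1\}$ also needs checking, but the covariance formula remains valid there.
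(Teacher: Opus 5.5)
Your proof is correct and follows essentially the same route as the paper. In both, the mean-zero multiplier $T_{gr}-p_rN_{gr}$ removes $B_g$ and $h_g$, and independence across strata removes participants of cell $g$ in other strata. The fixed-count covariance then gives $\Cov_D(T_{gr},D_j)=p_r(1-p_r)(N_r^{\mathrm{str}}-N_{gr})/(N_r^{\mathrm{str}}-1)$ for $j\in\cC_g\cap\cS_r$, after which summation and the deterministic correction factor yield the remaining claims.
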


For the variance, let
\[
V_{gr}(w,h)=\Var_D\{\widetilde H_{gr}(w,h)\},
\qquad
C_{gr,ks}(w,h)=\Cov_D\{\widetilde H_{gr}(w,h),\widetilde H_{ks}(w,h)\}.
\]

\begin{proposition}[Covariance support: generalization of Proposition~\ref{prop:strat-covariance}]
\label{prop:strat-general-covariance}
If $r\ne s$ and $N_{gs}N_{kr}=0$, then $C_{gr,ks}(w,h)=0$. Hence, with all covariance sums taken over ordered pairs,
\begin{equation}
\begin{aligned}
\Var_D\{\widehat\tau^{\mathrm{SR,corr}}(w,h)\}
=\frac1{W_{\cS}^2}\Bigg[&
\sum_{g,r}V_{gr}(w,h)
+\sum_r\sum_{g\ne k}C_{gr,kr}(w,h)\\
&+\sum_g\sum_{r\ne s}C_{gr,gs}(w,h)
+\sum_{\substack{g\ne k,\ r\ne s\\N_{gs}N_{kr}>0}}C_{gr,ks}(w,h)
\Bigg].
\end{aligned}
\label{eq:strat-general-variance}
\end{equation}
\end{proposition}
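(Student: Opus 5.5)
The plan is to prove the covariance-support claim by conditioning on the assignments in the strata other than $r$ and $s$. The variance decomposition then follows by expanding $\Var_D\{\sum_{g,r}\widetilde H_{gr}(w,h)\}$ into ordered pairs and sorting them.

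Write $\widetilde Y_g=\{B_g-h_g\}+\sum_t A_{gt}$, where $A_{gt}=\sum_{j\in\cC_g\cap\cS_t}D_j\tau_j$ depends only on the assignment in stratum $t$. Then $\widetilde H_{gr}=c_{gr}\,U_{gr}\{(B_g-h_g)+\sum_t A_{gt}\}$, with $c_{gr}$ a deterministic constant and $U_{gr}=T_{gr}-p_rN_{gr}$ a function of stratum $r$ only. Under the design, $\E_D[U_{gr}]=0$ because $T_{gr}$ is hypergeometric with mean $p_rN_{gr}$. Assignments in distinct strata are independent, and $U_{gr}$, $A_{gt}$ depend only on the stratum indexed by their second subscript.

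Now fix $r\ne s$ and suppose $N_{gs}N_{kr}=0$; without loss of generality take $N_{gs}=0$, the other case being symmetric. Then $A_{gs}\equiv 0$, so $\widetilde H_{gr}$ depends on stratum $r$ and on the strata $t\notin\{r,s\}$, but not on stratum $s$. Moreover, $\widetilde H_{gr}$ is a sum of terms each containing the factor $U_{gr}$ times something independent of stratum $r$ (either a constant or some $A_{gt}$ with $t\neq r$), plus the term $U_{gr}A_{gr}$.

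Compute $\E[\widetilde H_{gr}\widetilde H_{ks}]$ by expanding both products into these elementary terms and using independence across strata.
\begin{itemize}
\item Any term of $\widetilde H_{gr}$ whose only stratum-$r$ factor is $U_{gr}$ pairs with a term of $\widetilde H_{ks}$. If that term of $\widetilde H_{ks}$ does not involve stratum $r$, the expectation factorizes and contains $\E[U_{gr}]=0$. The only terms of $\widetilde H_{ks}$ involving stratum $r$ carry $A_{kr}$, which vanishes because $N_{kr}$ is not needed; but $A_{kr}$ may be nonzero when $N_{kr}>0$.
\item So the cleaner route is to use stratum $s$ instead. Every term of $\widetilde H_{ks}$ has factor $U_{ks}$, and $\widetilde H_{gr}$ has no dependence on stratum $s$. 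Hence $\E[\widetilde H_{gr}\widetilde H_{ks}]=\E[\widetilde H_{gr}\,\E(\widetilde H_{ks}\mid \text{strata}\ne s)]$, and $\E(\widetilde H_{ks}\mid\cdot)=c_{ks}\E[U_{ks}(\text{const}+A_{ks})]$, which is a constant equal to $\E\widetilde H_{ks}$.
\end{itemize}
Therefore the covariance is $\E\widetilde H_{gr}\cdot\E\widetilde H_{ks}-\E\widetilde H_{gr}\E\widetilde H_{ks}=0$. The symmetric case $N_{kr}=0$ swaps the roles of the two contributions: condition on all strata except $r$ instead.

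The decomposition then follows routinely. Ordered pairs $(gr,ks)$ split into four groups: $(g,r)=(k,s)$ gives the variances; $r=s$ with $g\ne k$ gives within-stratum covariances; $g=k$ with $r\ne s$ gives within-cell covariances; and $g\ne k$, $r\ne s$ gives the remaining pairs, where the support result lets us restrict to $N_{gs}N_{kr}>0$. Multiplying by $W_{\cS}^{-2}$ gives \eqref{eq:strat-general-variance}. For empty intersections we set the contributions to zero, so those terms cause no difficulty.

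The main subtlety is recognizing that the conditioning must be on the stratum whose contribution is insulated. Once that is settled, one only has to verify that $\E[U_{ks}\,\cdot\mid\text{other strata}]$ is nonrandom, which holds because $A_{ks}$ and $U_{ks}$ are functions of stratum $s$ alone and the adjustment $h_g$ and weights are fixed.
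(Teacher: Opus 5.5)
Your argument is correct, but it reaches the support claim by a somewhat different route than the paper.

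The paper writes $\widetilde H_{gr}(w,h)=w_gQ_{gr}\{\E_D(\widetilde Y_g-h_g)+\sum_tF_{gt}\}$. Here the multiplier $Q_{gr}$ and the centered outcome pieces $F_{gt}=\sum_{i\in\cC_g\cap\cS_t}\tau_i(D_i-p_t)$ are mean zero, and each depends on a single stratum. It then expands the product for $r\ne s$. Independence across strata kills every term except two: the direct pairing ($Q_{gr}$ with $F_{gr}$, $Q_{ks}$ with $F_{ks}$) and the reciprocal pairing ($Q_{gr}$ with $F_{kr}$, $Q_{ks}$ with $F_{gs}$). This yields the closed form $C_{gr,ks}(w,h)=w_gw_k\E_D[Q_{gr}F_{kr}]\E_D[Q_{ks}F_{gs}]$, and the support claim follows because $F_{kr}=0$ or $F_{gs}=0$.

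You instead use the fact that $\E_D(\widetilde H_{ks}\mid\text{strata}\ne s)=c_{ks}\E_D[U_{ks}A_{ks}]$ is nonrandom. Hence $\widetilde H_{ks}$ is uncorrelated with anything determined outside stratum $s$, in particular with $\widetilde H_{gr}$ when $N_{gs}=0$. Your route is shorter and avoids tracking which pairings survive. The paper's route does slightly more work but delivers the explicit covariance \eqref{eq:strat-cross-factorization}. It reuses this in Lemma~\ref{lem:strat-products} to obtain the product identities behind the unbiased cross-stratum terms in $\widehat V^{(3)}$ and $\widehat V^{(4)}$.

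Two small clean-ups:
\begin{enumerate}
\item Delete the first bullet. Its claim that $A_{kr}$ ``vanishes because $N_{kr}$ is not needed'' is a false start, which you abandon immediately in favour of conditioning on stratum $s$.
\item Write the decomposition as $\widetilde Y_g-h_g=(B_g-h_g)+\sum_tA_{gt}$ rather than $\widetilde Y_g=\{B_g-h_g\}+\sum_tA_{gt}$.
\end{enumerate}
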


For asymptotic statements define
\begin{equation}
V_N^{\mathrm{SR}}(w,h)
=N\Var_D\{\widehat\tau^{\mathrm{SR,corr}}(w,h)\}.
\label{eq:strat-general-VN}
\end{equation}
For all asymptotic results below, assume
\[
0<\underline w\le w_g\le\overline w<\infty,
\qquad W_{\cS}/N\longrightarrow w^\ast\in(0,\infty).
\]

\begin{assumption}[Variance limit: generalization of Assumption~\ref{ass:strat-variance}]
\label{ass:strat-general-variance}
The normalized design variance satisfies
\begin{equation}
V_N^{\mathrm{SR}}(w,h)\longrightarrow
V^{\mathrm{SR}}(w,h)\in(0,\infty).
\label{eq:strat-general-variance-limit}
\end{equation}
\end{assumption}

\begin{assumption}[Growing strata: generalization of Assumption~\ref{ass:strat-large}]
\label{ass:strat-general-large}
Assumption~\ref{ass:bounded-cells} and \eqref{eq:strat-nonempty} hold. The number $R$ is fixed, $N_r^{\mathrm{str}}/N\to\lambda_r>0$, and $p_r\to p_r^\infty\in(0,1)$. Along the sequence,
\begin{equation}
\frac1N\sum_{g,r:N_{gr}>0}
\E_D\bigl[|\widetilde H_{gr}(w,h)|^4\bigr]=O(1).
\label{eq:strat-general-large-moment}
\end{equation}
\end{assumption}

\begin{proposition}[Growing strata: generalization of Proposition~\ref{prop:fixedcount-asymptotics}]
\label{prop:strat-general-large}
Under Assumption~\ref{ass:strat-general-large},
\[
\widehat\tau^{\mathrm{SR,corr}}(w,h)
-\widehat\tau^{\mathrm{SR}}(w,h)=O_p(N^{-1}).
\]
If Assumption~\ref{ass:strat-general-variance} also holds, then, for $q\in\{\mathrm{corr},0\}$,
\begin{equation}
\sqrt N\{\widehat\tau^{\mathrm{SR},q}(w,h)-\tau_{w,\cS}\}
\Rightarrow\mathcal N\{0,V^{\mathrm{SR}}(w,h)\},
\label{eq:strat-general-large-clt}
\end{equation}
where $q=0$ denotes the uncorrected estimator.
\end{proposition}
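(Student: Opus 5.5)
We will prove the two claims of Proposition~\ref{prop:strat-general-large} separately: first the $O_p(N^{-1})$ gap between the corrected and uncorrected estimators, then the CLT for the corrected estimator, which transfers to the uncorrected one by Slutsky.

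\emph{Step 1: the gap.} Write
\[
\widetilde H_{gr}(w,h)-\widetilde H^{\,0}_{gr}(w,h)=\kappa_{gr}\widetilde H_{gr}(w,h),
\qquad
\kappa_{gr}=1-\frac{N_r^{\mathrm{str}}-N_{gr}}{N_r^{\mathrm{str}}-1}=\frac{N_{gr}-1}{N_r^{\mathrm{str}}-1}.
\]
By Assumption~\ref{ass:bounded-cells}, $N_{gr}\le\overline M$. Since $N_r^{\mathrm{str}}/N\to\lambda_r>0$, we get $0\le\kappa_{gr}\le C/N$ uniformly over nonempty intersections. Then
\[
\bigl|\widehat\tau^{\mathrm{SR,corr}}-\widehat\tau^{\mathrm{SR}}\bigr|
\le\frac{C}{NW_{\cS}}\sum_{g,r}|\widetilde H_{gr}(w,h)|.
\]
By Jensen and the fourth-moment bound \eqref{eq:strat-general-large-moment}, $\E_D\sum_{g,r}|\widetilde H_{gr}|\le \#\{(g,r)\}^{3/4}\{\sum\E|\widetilde H_{gr}|^4\}^{1/4}=O(N)$, using $\#\{(g,r):N_{gr}>0\}\le N$. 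Because $W_{\cS}\asymp N$, Markov's inequality gives $O_p(N^{-1})$.

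\emph{Step 2: CLT for the corrected estimator.} Proposition~\ref{prop:strat-general-finite} gives exact unbiasedness for $\tau_{w,\cS}$, so it remains to show normality of $\sum_{g,r}\{\widetilde H_{gr}-w_g\theta_{gr}\}/\sqrt{N\,V_N^{\mathrm{SR}}}$. The difficulty is that within-stratum assignments are dependent (sampling without replacement) and linkage cells cross strata, so there is no independent-sum structure. I would handle this by coupling each stratum's fixed-count assignment to a Bernoulli($p_r$) design conditioned on $\sum_{i\in\cS_r}D_i=T_r^{\mathrm{str}}$, in the spirit of Hájek's rejective-sampling argument.

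Concretely, $\widetilde H_{gr}$ is a polynomial of degree at most $2$ in the assignments of stratum $r$ and of the other strata touching cell $g$. This is because $\widetilde Y_g$ is linear in $D$ and the multiplier is linear, so each term is a product of at most two indicators. Expanding via a Hoeffding/projection decomposition, the estimator splits into a linear part $\sum_i c_iD_i$ plus a degenerate quadratic part. The quadratic part involves only pairs of units sharing a linkage cell, and with bounded cells it has variance $o(N)$ relative to the linear part. Its negligibility follows by a direct second-moment computation, using the fact that covariances between indicators in the same stratum are $O(1/N_r^{\mathrm{str}})$ under fixed counts, together with the moment bound.

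The linear part is a sum over a fixed number $R$ of independent strata of linear statistics under simple random sampling without replacement. Hájek's finite-population CLT applies to each stratum, with the Lindeberg condition supplied by the fourth-moment bound. Independence across strata and convergence of the total variance (Assumption~\ref{ass:strat-general-variance}) then combine them. Because the quadratic remainder is negligible, the variance of the linear part is asymptotically $V_N^{\mathrm{SR}}(w,h)$, giving the limit $\mathcal N\{0,V^{\mathrm{SR}}(w,h)\}$ for $q=\mathrm{corr}$. Step~1 and Slutsky then give the $q=0$ case.

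\emph{Main obstacle.} The hard part is the quadratic remainder, namely the products $(T_{gr}-p_rN_{gr})D_j$ for $j$ in the same cell, including terms with $j$ in another stratum. The key is a careful bookkeeping of which pairs have nonzero covariance. As in Proposition~\ref{prop:strat-general-covariance}, each term has only $O(1)$ dependent partners up to $O(1/N)$ within-stratum correlations, so the total remainder variance is $O(1)$, i.e.\ $o(N)$. If that bookkeeping becomes unwieldy, the alternative is a Stein's-method (dependency-neighborhood plus exchangeable-pair) bound for sampling without replacement, applied directly to $\sum\widetilde H_{gr}$.
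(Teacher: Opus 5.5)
Your Step~1 is correct. A H\"older--Markov bound on $\sum_{g,r}|\widetilde H_{gr}(w,h)|$ works as well as the paper's deterministic envelope bound, and transferring the limit between the corrected and uncorrected estimators by Slutsky is fine.

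The gap is in Step~2. The degenerate quadratic part, i.e.\ the within-cell products $(D_i-p_r)(D_j-p_s)$ with $g(i)=g(j)$, is \emph{not} negligible. There are order-$N$ such terms, each with variance of order one and only $O(1)$ non-negligibly correlated partners. Your own bookkeeping therefore gives variance $O(N)$, not $O(1)$, which is the same order as the linear part.

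The quadratic part can even carry all of the variance. Take one stratum with $p=1/2$, every linkage cell containing two experimental units $\{i,j\}=\cC_g\cap\cS$, $Y_i(0)=0$ and $\tau_i=1$; this satisfies all the assumptions of the proposition. Then $\widetilde H_{gr}=\frac{N-1}{N-2}\,8D_iD_j$. Its linear component $4\sum_i(D_i-1/2)$ is identically zero because the treatment count is fixed, and by exchangeability so is its H\'ajek projection under the fixed-count design. Yet $V_N^{\mathrm{SR}}\to2>0$, coming entirely from $8\sum_g(D_i-1/2)(D_j-1/2)$. So the linear statistic to which you would apply H\'ajek's CLT does not have variance asymptotic to $V_N^{\mathrm{SR}}(w,h)$. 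The argument fails whenever linkage cells contain several experimental units with nonzero effects, and the products $(T_{gr}-p_rN_{gr})D_j$ you flag as the obstacle are precisely this first-order source of variance.

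What is needed is a CLT for the full local statistic, with linear and within-cell quadratic terms together. The paper gets it by a coupling rather than a linear projection:
\begin{enumerate}
\item Draw independent $Z_i\sim\mathrm{Bernoulli}(p_r)$, let $K_r=\sum_{i\in\cS_r}(Z_i-p_r)$, and switch $|K_r|$ units chosen uniformly from the arm in excess. This produces exactly the fixed-count law.
\item With $F(\mathbf d)=W_{\cS}\widehat\tau^{\mathrm{SR}}(w,h;\mathbf d)$ written under $\mathbf Z$ as a polynomial with linear coefficients $\ell_i$ and within-cell quadratic coefficients $b_{ij}$, it shows $F(\mathbf D)=F(\mathbf Z)-\sum_r\bar\ell_rK_r+R_N$ with $\E R_N^2=o(N)$. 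Only the linear coefficients need the stratum-mean correction. The quadratic terms are perturbed negligibly, because any fixed $k\le4$ units are all switched with probability $O(N^{-k/2})$.
\item The leading term is then a sum of independent, centered linkage-cell contributions under the Bernoulli law, quadratic terms included and even when cells cross strata. Lyapunov's CLT applies, with the variance matched to $V^{\mathrm{SR}}(w,h)$ through the $L^2$ bound.
\end{enumerate}
The coefficient bounds used there, which your Lindeberg step would also need, come from an envelope $L_g$ with $\sum_gL_g^4=O(N)$. This follows from the design moment condition because every within-cell assignment pattern has probability bounded below.

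Your Stein's-method fallback could in principle replace the coupling. However, it would have to be applied to the whole statistic, not used to justify discarding the quadratic part.
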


For variance estimation in this regime, define
\begin{align}
\widetilde H_g(w,h)&=\sum_r\widetilde H_{gr}(w,h),
&\widehat\tau_r(h)&=\frac1{N_r^{\mathrm{str}}}\sum_g\widetilde H_{gr}(h),
\label{eq:strat-general-large-centers}\\
\widehat b_r(w,h)&=
\frac{\sum_g(T_{gr}-p_rN_{gr})\widetilde H_g(w,h)}
{N_r^{\mathrm{str}}p_r(1-p_r)}.
\label{eq:strat-general-large-projection}
\end{align}
Set
\begin{equation}
\widehat V_a(w,h)
=\frac{N}{W_{\cS}^2}\sum_g
\left[
\widetilde H_g(w,h)
-w_g\sum_rN_{gr}\widehat\tau_r(h)
-\sum_r\widehat b_r(w,h)(T_{gr}-p_rN_{gr})
\right]^2.
\label{eq:strat-general-vhat-a}
\end{equation}

\begin{proposition}[Growing-strata variance estimator: generalization of Proposition~\ref{prop:strat-variance-a}]
\label{prop:strat-general-variance-a}
Under Assumptions~\ref{ass:strat-general-large} and \ref{ass:strat-general-variance}, with $\tau_{\cS_r}=(N_r^{\mathrm{str}})^{-1}\sum_g\theta_{gr}$,
\begin{equation}
\widehat V_a(w,h)
=V_N^{\mathrm{SR}}(w,h)
+\frac{N}{W_{\cS}^2}\sum_gw_g^2
\left(\theta_g-\sum_rN_{gr}\tau_{\cS_r}\right)^2
+o_p(1).
\label{eq:strat-general-vhat-a-limit}
\end{equation}
The gap is nonnegative and vanishes when effects are constant within each stratum, including under homogeneous treatment effects. Setting $w_g=1$ and $h_g=0$ gives Proposition~\ref{prop:strat-variance-a}.
\end{proposition}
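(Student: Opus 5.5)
We would prove this by decomposing $\widehat V_a(w,h)$ into a design-variance part and a heterogeneity part. The design-variance part is the exact variance of $\widehat\tau^{\mathrm{SR,corr}}(w,h)$ under the stratified design; the heterogeneity part is the gap term. The argument follows the structure of the proof of Proposition~\ref{prop:strat-variance-a}, which is the special case $w_g=1$, $h_g=0$.

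First, by Proposition~\ref{prop:strat-general-large}, the correction factors are $1+O(N^{-1})$ uniformly. With the fourth-moment bound \eqref{eq:strat-general-large-moment}, this lets us replace $\widetilde H_{gr}(w,h)$ by the uncorrected $\widetilde H_{gr}^{\,0}(w,h)$ inside $\widehat V_a$ at cost $o_p(1)$. Next, write the population analogue of each summand,
\[
\xi_g=\widetilde H_g(w,h)-w_g\sum_rN_{gr}\tau_{\cS_r}-\sum_r b_r^*(T_{gr}-p_rN_{gr}),
\]
where $b_r^*=\E_D\widehat b_r(w,h)$ up to negligible terms. Concretely, $b_r^*=\{N_r^{\mathrm{str}}p_r(1-p_r)\}^{-1}\sum_g\Cov_D(T_{gr}-p_rN_{gr},\widetilde H_g)$ plus a mean term, which we would compute explicitly using the hypergeometric moments of $T_{gr}$.

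The key structural step is to identify $V_N^{\mathrm{SR}}(w,h)$ with the Bernoulli-type sum of cell variances minus a projection. Because there are finitely many strata, within-stratum dependence under fixed counts is, to first order, the Bernoulli design conditioned on the $R$ linear statistics $\sum_g(T_{gr}-p_rN_{gr})=0$. Hence
\[
V_N^{\mathrm{SR}}=\frac{N}{W_{\cS}^2}\Big[\sum_g\Var^{\mathrm{Bern}}(\widetilde H_g)-\sum_r\frac{\{\sum_g\Cov^{\mathrm{Bern}}(T_{gr}-p_rN_{gr},\widetilde H_g)\}^2}{N_r^{\mathrm{str}}p_r(1-p_r)}\Big]+o(1).
\]
We would establish this by comparing the exact covariances in \eqref{eq:strat-general-variance}, whose pairwise within-stratum covariances carry $O(1/N_r^{\mathrm{str}})$ corrections, with the projection formula. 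The main obstacle is this bookkeeping: there are $O(N^2)$ weak within-stratum covariances of size $O(N^{-1})$, and one must show they aggregate exactly to the projection term plus $o(N)$. We would do the verification through the standard identity $\Cov(D_i,D_j)=-p_r(1-p_r)/(N_r^{\mathrm{str}}-1)$ and fourth-moment bounds.

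Then expand $\widehat V_a$. Since $\widehat\tau_r(h)-\tau_{\cS_r}=O_p(N^{-1/2})$ and $\widehat b_r-b_r^*=O_p(N^{-1/2})$ (their variances are $O(N^{-1})$ by the same covariance bounds), Cauchy--Schwarz with the average second moments gives
\[
\widehat V_a=\frac{N}{W_{\cS}^2}\sum_g\xi_g^2+o_p(1).
\]
A variance computation using bounded cells, finitely many strata, and fourth moments shows $\frac{N}{W_{\cS}^2}\sum_g(\xi_g^2-\E_D\xi_g^2)=o_p(1)$. Finally, split $\E_D\xi_g^2=\Var_D\xi_g+(\E_D\xi_g)^2$. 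Proposition~\ref{prop:strat-general-finite} gives $\E_D\xi_g=w_g(\theta_g-\sum_rN_{gr}\tau_{\cS_r})+O(N^{-1})$, since the regression term has mean zero. The sum $\sum_g\Var_D\xi_g$ equals the Bernoulli cell-variance sum minus the projection term, because $b_r^*$ is the projection coefficient. This matches the displayed form of $V_N^{\mathrm{SR}}$ and yields \eqref{eq:strat-general-vhat-a-limit}. The gap is manifestly a sum of squares. It vanishes when $\tau_i$ is constant within strata, since then $\theta_g=\sum_rN_{gr}\tau_{\cS_r}$.
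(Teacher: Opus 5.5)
Your architecture matches the paper's. In both, $V_N^{\mathrm{SR}}(w,h)$ is identified, up to $o(1)$, with the sum of independent-Bernoulli cell variances minus the projection on the stratum count fluctuations $K_r$. In both, the residual sum of squares in $\widehat V_a(w,h)$ is then shown to concentrate on that quantity plus the squared cell means, which give the gap.

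The main difference is how you get the key identity. The paper does no covariance bookkeeping. It reuses the coupling \eqref{eq:strat-coupling}, $F(\mathbf D)=F(\mathbf Z)-\sum_r\bar\ell_rK_r+R_N$ with $\E R_N^2=o(N)$, already built for Proposition~\ref{prop:strat-general-large}. It then notes that $\bar\ell_r$ is exactly the Bernoulli projection coefficient, because the quadratic monomials in \eqref{eq:strat-polynomial} are orthogonal to $K_r$.

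Your direct route is more self-contained, but as sketched it lacks the needed moment facts. The cross-cell covariances $C_{gr,kr}$ involve joint moments of up to four assignment indicators in one stratum, so the pairwise identity $\Cov_D(D_i,D_j)=-p_r(1-p_r)/(N_r^{\mathrm{str}}-1)$ alone does not control them. To complete it:
\begin{enumerate}[label=(\roman*)]
\item expand each cell total in the centered indicators $e_i=D_i-p_r$;
\item fold the diagonal pieces into linear coefficients via $e_i^2=(1-2p_r)e_i+p_r(1-p_r)$;
\item use that centered joint moments of three or four distinct units in one stratum are $O(N^{-2})$.
\end{enumerate}
Then only the linear coefficients generate first-order cross-cell covariance. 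These aggregate to $-\sum_rN_r^{\mathrm{str}}p_r(1-p_r)\bar\ell_r^2+O(1)$.

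For $\widehat b_r$, you substitute a deterministic $b_r^*$ and compute $\sum_g\Var_D\xi_g$ cell by cell. The paper instead keeps $\widehat b_r$ and uses, with $U_{gr}=T_{gr}-p_rN_{gr}$, the identity $\sum_gU_{gr}\widetilde H_g(w,h)=N_r^{\mathrm{str}}p_r(1-p_r)\widehat b_r(w,h)$ together with the near-diagonal Gram matrix $\sum_gU_{gr}U_{gs}$. The two are equivalent. Your extra ``mean term'' in $b_r^*$ is zero, since $\E_DU_{gr}=0$.

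Finally, two of your steps need a tool the paper makes explicit: the per-cell comparison of fixed-count and Bernoulli moments, and the law of large numbers for $\sum_g\xi_g^2$ across dependent cells. The tool has two parts:
\begin{enumerate}[label=(\roman*)]
\item a pattern-wise envelope $L_g$ with $\sum_gL_g^4=O(N)$, obtained from the average fourth-moment condition because every within-cell assignment pattern has probability bounded below;
\item the $1+O(N^{-1})$ ratio of fixed-count to Bernoulli probabilities for patterns in one or two cells, which yields the $O(N^{-1})$ cross-cell covariance bound.
\end{enumerate}
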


\begin{assumption}[Small strata: generalization of Assumption~\ref{ass:strat-small}]
\label{ass:strat-general-small}
Assumption~\ref{ass:bounded-cells} and \eqref{eq:strat-nonempty} hold. There is a fixed $\overline N^{\mathrm{str}}<\infty$ such that $2\le N_r^{\mathrm{str}}\le\overline N^{\mathrm{str}}$, and
\begin{equation}
\frac1N\sum_{g,r:N_{gr}>0}
\E_D|\widetilde H_{gr}(w,h)|^4=O(1).
\label{eq:strat-general-small-moments}
\end{equation}
\end{assumption}

\begin{proposition}[Small strata: generalization of Proposition~\ref{prop:small-strata}]
\label{prop:strat-general-small}
Under Assumption~\ref{ass:strat-general-small},
\begin{align}
\widehat\tau^{\mathrm{SR}}(w,h)-\tau_{w,\cS}
&=-\frac1{W_{\cS}}\sum_{g,r}w_g
\frac{N_{gr}-1}{N_r^{\mathrm{str}}-1}\theta_{gr}
+O_p(N^{-1/2}),\notag\\
\widehat\tau^{\mathrm{SR,corr}}(w,h)-\tau_{w,\cS}
&=O_p(N^{-1/2}).
\label{eq:strat-general-small-rate}
\end{align}
If Assumption~\ref{ass:strat-general-variance} also holds, then
\begin{equation}
\sqrt N\{\widehat\tau^{\mathrm{SR,corr}}(w,h)-\tau_{w,\cS}\}
\Rightarrow\mathcal N\{0,V^{\mathrm{SR}}(w,h)\}.
\label{eq:strat-general-small-clt}
\end{equation}
\end{proposition}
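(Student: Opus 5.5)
The proof has three parts. First, the expectation of the uncorrected estimator gives its deterministic bias. Second, a second-moment bound built on the covariance support of Proposition~\ref{prop:strat-general-covariance} gives both rate statements. Third, a central limit theorem for sums with a sparse dependency graph gives the normal limit. Throughout, $N$, the number of strata, and the number of nonempty intersections are all of the same order. This holds because $2\le N_r^{\mathrm{str}}\le\overline N^{\mathrm{str}}$ and $N_g^{\mathrm{cell}}\le\overline M$. In addition, $W_{\cS}\asymp N$ by the weight bounds.

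\emph{Rates and bias.} Proposition~\ref{prop:strat-general-finite} gives $\E_D\{\widehat\tau^{\mathrm{SR,corr}}(w,h)\}=\tau_{w,\cS}$. It also gives the exact expectation of $\widehat\tau^{\mathrm{SR}}(w,h)$, whose deviation from $\tau_{w,\cS}$ is the displayed bias term. So it suffices to show that both estimators have variance $O(N^{-1})$ and then apply Chebyshev.

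For the corrected estimator, use the decomposition in Proposition~\ref{prop:strat-general-covariance}. A nonzero covariance $C_{gr,ks}$ requires the two intersections to share a stratum, share a cell, or have reciprocal intersections. Hence each intersection $(g,r)$ has at most $\overline N^{\mathrm{str}}\overline M$ (up to a constant) covariance partners $(k,s)$. Bound each covariance by Cauchy--Schwarz and $2ab\le a^2+b^2$:
\[
|C_{gr,ks}|\le \tfrac12\{V_{gr}+V_{ks}\}.
\]
Summing gives a total of at most $c\sum_{g,r}\E_D\widetilde H_{gr}^2$. By Jensen and \eqref{eq:strat-general-small-moments}, this is $O(N)$. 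Therefore $\Var_D=O(N/W_{\cS}^2)=O(N^{-1})$.

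For the uncorrected estimator, note that $\widetilde H^{\,0}_{gr}$ equals $\widetilde H_{gr}$ times the factor $(N_r^{\mathrm{str}}-N_{gr})/(N_r^{\mathrm{str}}-1)\in(0,1]$. The factor is deterministic, so the same covariance-support argument and moment bound apply verbatim.

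\emph{Asymptotic normality.} Write
\[
\sqrt N\{\widehat\tau^{\mathrm{SR,corr}}(w,h)-\tau_{w,\cS}\}=\frac{\sqrt N}{W_{\cS}}\sum_{g,r}\{\widetilde H_{gr}-w_g\theta_{gr}\}.
\]
Then build a dependency graph on the nonempty intersections. Join $(g,r)$ and $(k,s)$ when they share a stratum, share a linkage cell, or are connected through such links within bounded distance.

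The required property is this: if two sets of intersections are not adjacent, their contributions are functions of assignments in disjoint sets of strata. Assignments in disjoint strata are independent. An intersection's contribution depends on the assignments in stratum $r$ and in every stratum that meets cell $g$. I would therefore take as the graph's ``atoms'' the union of the strata touched by cell $g$. Two intersections are adjacent if their atoms share a stratum. With bounded strata and bounded cells, every vertex has degree at most a constant $D$ independent of $N$.

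Next apply a Stein's-method CLT for dependency graphs. Baldi--Rinott, or the Chen--Shao bound, gives a Wasserstein distance of order
\[
D^2\sum \E|X|^3/\sigma^3+D^{3/2}\bigl(\textstyle\sum\E X^4\bigr)^{1/2}/\sigma^2.
\]
Here $\sigma^2=\Var\{\sum X\}\asymp N$ by Assumption~\ref{ass:strat-general-variance}, and $X$ denotes the centered scaled contributions.

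The main obstacle is that the moment condition controls only \emph{average} fourth moments, not individual ones. I would handle this directly in the fourth-moment form of the bound. Centering does not harm fourth moments beyond a constant factor, so $\sum\E X^4=O(N)$. Hence the fourth-moment term is $O(N^{1/2})/N\to0$. For the third-moment term, use Hölder to get $\sum\E|X|^3\le(\sum\E X^4)^{1/2}(\sum\E X^2)^{1/2}=O(N)$. Then $O(N)/N^{3/2}\to0$.

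Both error terms therefore vanish, so the normalized sum converges to $\mathcal N(0,1)$. Combining this with $NV/W_{\cS}^2$-scaling and Assumption~\ref{ass:strat-general-variance} yields \eqref{eq:strat-general-small-clt}.
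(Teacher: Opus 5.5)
Your proposal is correct and follows the paper's proof essentially step by step. It obtains exact expectations from Proposition~\ref{prop:strat-general-finite} and an $O(N)$ variance bound from the bounded covariance support, using $2|\Cov_D(X,Y)|\le\Var_D(X)+\Var_D(Y)$ and Chebyshev; it then builds a bounded-degree dependency graph joining intersections whose linkage cells meet a common stratum, and both error terms of your Stein/dependency-graph bound are $O(N^{-1/2})$ under the average fourth-moment condition, whereas the paper applies \citet[Theorem~2.4]{chenshao2004} using third moments only.
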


The small-strata variance estimator uses
\begin{equation}
\widehat R_{gr}(w,h)
=\widetilde H_{gr}(w,h)
-w_gN_{gr}\widehat\tau^{\mathrm{SR,corr}}(w,h),
\qquad c_N(w)=\frac{N}{W_{\cS}^2}.
\label{eq:strat-general-residual}
\end{equation}
Using the singular-pair definition and the coefficients $a_{gr,ks}$ and $b_{gr,ks}$ in \eqref{eq:strat-ab}, define
\begin{align}
\widehat V^{(1)}(w,h)
&=c_N(w)\sum_{g,r}\widehat R_{gr}(w,h)^2,\notag\\
\widehat V^{(2)}(w,h)
&=c_N(w)\sum_r\sum_{g\ne k}\widehat R_{gr}(w,h)\widehat R_{kr}(w,h),\notag\\
\widehat V^{(3)}(w,h)
&=c_N(w)\sum_g\sum_{r\ne s}
\begin{cases}
\widehat R_{gr}(w,h)\widehat R_{gs}(w,h),&(r,s)\text{ singular},\\
\tfrac12\widetilde H_{gr}(w,h)\widetilde H_{gs}(w,h),&\text{otherwise},
\end{cases}
\label{eq:strat-general-vhat-123}
\end{align}
and
\begin{equation}
\widehat V^{(4)}(w,h)
=c_N(w)\sum_{\substack{g\ne k,\ r\ne s\\N_{gs}N_{kr}>0}}
\begin{cases}
\widehat R_{gr}(w,h)\widehat R_{ks}(w,h),&(r,s)\text{ singular},\\[3pt]
\displaystyle\frac{a_{gr,ks}
\{\widetilde H_{kr}(w,h)\widetilde H_{gs}(w,h)
-b_{gr,ks}\widetilde H_{gr}(w,h)\widetilde H_{ks}(w,h)\}}
{1-a_{gr,ks}b_{gr,ks}},&\text{otherwise}.
\end{cases}
\label{eq:strat-general-vhat-four}
\end{equation}
Let $\widehat V_{b,0}(w,h)=\sum_{j=1}^4\widehat V^{(j)}(w,h)$ and
\begin{equation}
\widehat V_b(w,h)
=\max\{\widehat V_{b,0}(w,h),\widehat V^{(1)}(w,h)+\widehat V^{(2)}(w,h)\}.
\label{eq:strat-general-vhat-b}
\end{equation}

\begin{proposition}[Small-strata variance estimator: generalization of Proposition~\ref{prop:strat-variance-b}]
\label{prop:strat-general-variance-b}
Under Assumption~\ref{ass:strat-general-small},
\begin{equation}
\widehat V_{b,0}(w,h)
=V_N^{\mathrm{SR}}(w,h)+\Delta_{b,N}(w)+O_p(N^{-1/2}),
\label{eq:strat-general-vhat-b-limit}
\end{equation}
where
\begin{equation}
\Delta_{b,N}(w)
=\frac{N}{W_{\cS}^2}\sum_b
\left[
\sum_{r\in b}\sum_gw_g\{\theta_{gr}-N_{gr}\tau_{w,\cS}\}
\right]^2\ge0.
\label{eq:strat-general-gap-b}
\end{equation}
Here each group $b$ collects strata covered exclusively by the same two linkage cells, while every remaining stratum forms its own group. The final estimator $\widehat V_b(w,h)$ is nonnegative and asymptotically conservative. Under homogeneous effects, the gap vanishes and
\[
\widehat V_b(w,h)-V_N^{\mathrm{SR}}(w,h)=O_p(N^{-1/2}).
\]
\end{proposition}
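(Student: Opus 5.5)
The plan is to show that each of the four components $\widehat V^{(j)}(w,h)$ is, up to $O_p(N^{-1/2})$, equal to its design expectation. The algebra is then organized so that the sum of those expectations equals $V_N^{\mathrm{SR}}(w,h)$ from Proposition~\ref{prop:strat-general-covariance} plus the explicit gap $\Delta_{b,N}(w)$. I treat $\widehat\tau^{\mathrm{SR,corr}}(w,h)$ in the residuals as $\tau_{w,\cS}+O_p(N^{-1/2})$ by Proposition~\ref{prop:strat-general-small}.

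\textbf{Step 1 (replace the estimated center).} Write $\widehat R_{gr}=R^*_{gr}-w_gN_{gr}(\widehat\tau^{\mathrm{SR,corr}}-\tau_{w,\cS})$, with $R^*_{gr}=\widetilde H_{gr}-w_gN_{gr}\tau_{w,\cS}$. Every sum in $\widehat V^{(1)},\widehat V^{(2)}$ and the singular parts of $\widehat V^{(3)},\widehat V^{(4)}$ ranges over $O(N)$ pairs, because strata and cells are bounded and Proposition~\ref{prop:strat-general-covariance} restricts the support. Cauchy--Schwarz with the fourth-moment bound \eqref{eq:strat-general-small-moments} therefore shows that substituting $R^*$ changes each component by $O_p(N^{-1/2})$.

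\textbf{Step 2 (concentration).} Each component built from $R^*$ and $\widetilde H$ is a sum of products of contributions indexed by pairs that share a stratum or cell. Two such products are correlated only if their index sets are linked through at most a bounded chain of shared strata or cells. The dependency graph therefore has bounded degree, which follows from the bounds $\overline N^{\mathrm{str}}$ and $\overline M$. The variance of $N^{-1}\sum$ is then $O(N^{-1})$ by the average fourth-moment bound. This gives each component $=\E_D[\cdot]+O_p(N^{-1/2})$.

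\textbf{Step 3 (expectations).} Since $\E_D\widetilde H_{gr}=w_g\theta_{gr}$ by Proposition~\ref{prop:strat-general-finite}, we have $\E[R^*_{gr}R^*_{ks}]=C_{gr,ks}+\mu_{gr}\mu_{ks}$, where $\mu_{gr}=w_g(\theta_{gr}-N_{gr}\tau_{w,\cS})$. Components 1 and 2 thus give the within-stratum covariances plus $\sum_r(\sum_g\mu_{gr})^2$. For nonsingular cross-stratum pairs, the estimators in $\widehat V^{(3)},\widehat V^{(4)}$ are designed to be exactly unbiased for $C$ with no mean term, and this must be checked directly. For $g$ with $r\neq s$, independence across strata gives $\E[\widetilde H_{gr}\widetilde H_{gs}]$ as a covariance plus $w_g^2\theta_{gr}\theta_{gs}$. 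Summing over ordered pairs $(r,s),(s,r)$ with the factor $\tfrac12$ must reproduce $C_{gr,gs}$, which requires the product of means to cancel. I would verify this by expanding $\widetilde Y_g=B_g-h_g+\sum D_j\tau_j$ and using independence of $T_{gr}$ and $T_{gs}$. Similarly, for $\widehat V^{(4)}$ the products $\widetilde H_{kr}\widetilde H_{gs}$ and $\widetilde H_{gr}\widetilde H_{ks}$ have expectations linear in the pure cross term (coefficients involving $a,b$) plus mean products. The $(1-ab)^{-1}$ combination isolates $C_{gr,ks}$ exactly when $ab<1$. For singular groups the residual-product formula instead contributes $C+\mu\mu$. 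Collecting the mean terms over each group $b$ of strata then yields exactly $\sum_b[\sum_{r\in b}\sum_g\mu_{gr}]^2=\Delta_{b,N}(w)\,W_{\cS}^2/N$, giving \eqref{eq:strat-general-vhat-b-limit}.

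\textbf{Step 4 (conclusions).} Under homogeneous effects $\theta_{gr}=N_{gr}\tau$, so $\mu\equiv0$ and $\Delta_{b,N}=0$. Nonnegativity of $\widehat V_b$ follows from the displayed identity $\widehat V^{(1)}+\widehat V^{(2)}=c_N\sum_r(\sum_g\widehat R_{gr})^2\ge0$. Conservativeness follows because the max exceeds $\widehat V_{b,0}$. Under homogeneity, $\widehat V^{(1)}+\widehat V^{(2)}$ consistently estimates a possibly smaller quantity than $V_N^{\mathrm{SR}}$, so the max differs from $\widehat V_{b,0}$ by at most $O_p(N^{-1/2})$ only when $V^{(3)}+V^{(4)}$ has a nonnegative population counterpart. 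I would handle this by noting that $\max\{x,y\}-x\le (y-x)_+$, and that $y-x=-(\widehat V^{(3)}+\widehat V^{(4)})$ concentrates around $-(V_N^{(3)}+V_N^{(4)})$. \emph{This is a gap:} the argument needs $V_N^{(3)}+V_N^{(4)}\ge0$, and I would attempt to establish that, or handle its negative part, separately. The main obstacle is Step 3, namely the exact expectation bookkeeping for the nonsingular $\widehat V^{(4)}$ terms and the grouping of mean products into $\Delta_{b,N}$.
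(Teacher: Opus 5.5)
Your overall route is the paper's: replace the estimated center, concentrate via a bounded-degree dependency graph, compute expectations as $C+\mu\mu$, and collect the mean products over singular groups into $\Delta_{b,N}(w)$. But the proposal does not prove the last claim, and the gap you flag in Step 4 is real. Let $X_N=V_N^{(3)}+V_N^{(4)}$ be the normalized sum of cross-stratum covariances. If $X_N$ could be negative, then $\widehat V_b-\widehat V_{b,0}=(-\widehat X_N)_+$ would stay close to $(-X_N)_+>0$, and $\widehat V_b$ would not be consistent under homogeneity. There is no separate way to ``handle the negative part''; nonnegativity of $X_N$ is exactly what is needed.

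What is missing is the closed form of the cross-stratum covariances, which you also need, and misdescribe, in Step 3. The $\tfrac12$ in $\widehat V^{(3)}$ does not work because the product of means cancels. Summed over $(r,s)$ and $(s,r)$, the estimator targets $2C_{gr,gs}$, so what is required is $C_{gr,gs}=w_g^2\theta_{gr}\theta_{gs}$, i.e.\ the covariance \emph{equals} the product of means.

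To get this, write $\widetilde H_{gr}(w,h)=w_gQ_{gr}\{\widetilde Y_g-h_g\}$. Here $Q_{gr}$ is the corrected centered multiplier, a function of stratum $r$ only, and $\widetilde Y_g-h_g=\E_D(\widetilde Y_g-h_g)+\sum_tF_{gt}$ with $F_{gt}=\sum_{i\in\cC_g\cap\cS_t}\tau_i(D_i-p_t)$. Use independence across strata, $\E_D[Q_{gr}F_{gr}]=\theta_{gr}$, and, for $k\ne g$, $\E_D[Q_{gr}F_{kr}]=-N_{gr}\theta_{kr}/(N_r^{\mathrm{str}}-N_{gr})$. These give $\E_D[\widetilde H_{gr}\widetilde H_{gs}]=2w_g^2\theta_{gr}\theta_{gs}$, hence $C_{gr,gs}=w_g^2\theta_{gr}\theta_{gs}$. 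They also give $C_{gr,ks}=w_gw_ka_{gr,ks}\theta_{kr}\theta_{gs}$, together with the two product identities behind the $(1-ab)^{-1}$ inversion.

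Under homogeneous effects $\theta_{gr}=\tau N_{gr}$, so every cross-stratum covariance is $\tau^2$ times a nonnegative product of weights and counts. Thus $X_N\ge0$. Since $\mu\equiv0$ also gives $\widehat X_N=X_N+O_p(N^{-1/2})$, you get $(-\widehat X_N)_+\le|\widehat X_N-X_N|=O_p(N^{-1/2})$.

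A smaller omission in Step 2: concentration of the nonsingular $\widehat V^{(4)}$ terms requires the coefficients $a_{gr,ks}/(1-a_{gr,ks}b_{gr,ks})$ to be bounded. This holds because the relevant counts are bounded integers, so positive values of $1-a_{gr,ks}b_{gr,ks}$ are bounded away from zero.
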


Finally, consider an estimated adjustment $\widehat h_g=h(x_g;\widehat\beta)$ and its fixed limit $h_g^*=h(x_g;\beta^*)$.

\begin{proposition}[Estimated adjustment under stratified randomization]
\label{prop:strat-estimated-adjustment}
Suppose Assumption~\ref{ass:adjustment} holds under the actual stratified design, and that Assumption~\ref{ass:strat-general-large} or \ref{ass:strat-general-small}, as appropriate, holds with $h=h^*$. For $q\in\{\mathrm{corr},0\}$,
\begin{equation}
\widehat\tau^{\mathrm{SR},q}(w,\widehat h)
-\widehat\tau^{\mathrm{SR},q}(w,h^*)=O_p(N^{-1}).
\label{eq:strat-estimated-adjustment-rate}
\end{equation}
Consequently, whenever the fixed-$h^*$ statistic has a root-$N$ limiting distribution, estimated and fixed adjustment have the same limit. Moreover, in the growing-strata regime and the small-strata regime, respectively,
\begin{equation}
\widehat V_a(w,\widehat h)-\widehat V_a(w,h^*)=O_p(N^{-1/2}),
\qquad
\widehat V_b(w,\widehat h)-\widehat V_b(w,h^*)=O_p(N^{-1/2}).
\label{eq:strat-estimated-adjustment-variance}
\end{equation}
These are asymptotic equivalences; because $\widehat h$ depends on assignment, neither exact unbiasedness nor the fixed-$h$ covariance-support identities are asserted for the estimated-adjustment contributions.
\end{proposition}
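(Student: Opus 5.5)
The plan is to exploit linearity of every contribution in $h_g$. For fixed $h$, each (corrected or uncorrected) contribution can be written as
\[
\widetilde H_{gr}(w,h)=\widetilde H_{gr}(w,0)-w_g c_{gr}\frac{T_{gr}-p_rN_{gr}}{p_r(1-p_r)}h_g ,
\]
with $c_{gr}=1$ for $q=0$ and $c_{gr}=(N_r^{\mathrm{str}}-1)/(N_r^{\mathrm{str}}-N_{gr})$ for $q=\mathrm{corr}$. Under either regime and \eqref{eq:strat-nonempty}, $c_{gr}$ is uniformly bounded, since $N_{gr}\le \overline M$ and $N_{gr}<N_r^{\mathrm{str}}$. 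Hence
\[
\widehat\tau^{\mathrm{SR},q}(w,\widehat h)-\widehat\tau^{\mathrm{SR},q}(w,h^*)
=-\frac1{W_{\cS}}\sum_{g,r}w_gc_{gr}\frac{T_{gr}-p_rN_{gr}}{p_r(1-p_r)}(\widehat h_g-h_g^*).
\]

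\textbf{First claim.} I will Taylor-expand $\widehat h_g-h_g^*=\nabla_\beta h(x_g;\beta^*)'(\widehat\beta-\beta^*)+O(\|\widehat\beta-\beta^*\|^2)$. The remainder is uniform in $g$ by Assumption~\ref{ass:adjustment}(ii); it is valid on the event that $\widehat\beta$ lies in the neighborhood, which has probability tending to one.

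\emph{Linear term.} It equals $(\widehat\beta-\beta^*)'S_N/W_{\cS}$, where
\[
S_N=\sum_{g,r}w_gc_{gr}\frac{T_{gr}-p_rN_{gr}}{p_r(1-p_r)}\nabla_\beta h(x_g;\beta^*)
\]
is a vector of fixed dimension whose entries are a fixed-coefficient linear combination of the centered counts $T_{gr}-p_rN_{gr}$. These counts have mean zero under fixed-count randomization. $S_N$ is exactly the contribution sum one gets with $\widetilde Y_g$ replaced by a bounded deterministic quantity and without the adjustment term. Its variance therefore decomposes as in Proposition~\ref{prop:strat-general-covariance}: nonzero covariances are confined to pairs sharing a stratum or having reciprocal intersections.

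In the small-strata regime each intersection has $O(1)$ nonzero partners, so $\Var(S_N)=O(N)$. In the growing-strata regime I will not count partners. Instead I use $\sum_g(T_{gr}-p_rN_{gr})a_g=\sum_{i\in\cS_r}(D_i-p_r)a_{g(i)}$, the centered treated total of a bounded characteristic under complete randomization within stratum $r$. Its variance is at most $N_r^{\mathrm{str}}\cdot O(1)$ by the finite-population sampling variance formula, and strata are independent, so again $\Var(S_N)=O(N)$. Thus $S_N=O_p(N^{1/2})$, and with $\widehat\beta-\beta^*=O_p(N^{-1/2})$ and $W_{\cS}\asymp N$ the linear term is $O_p(N^{-1})$.

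\emph{Quadratic remainder.} It is at most $O_p(N^{-1})\cdot W_{\cS}^{-1}\sum_{g,r}|T_{gr}-p_rN_{gr}|$. Because intersection counts are bounded and there are $O(N)$ intersections, this is $O_p(N^{-1})\cdot O(1)=O_p(N^{-1})$. This gives \eqref{eq:strat-estimated-adjustment-rate}. The statement about identical limiting distributions then follows by Slutsky, since $\sqrt N\cdot O_p(N^{-1})=o_p(1)$.

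\textbf{Variance estimators.} Write every ingredient entering $\widehat V_a$ and $\widehat V_b$ as the fixed-$h^*$ ingredient plus a perturbation $\Delta_{gr}=\widetilde H_{gr}(w,\widehat h)-\widetilde H_{gr}(w,h^*)$. The ingredients are $\widetilde H_{gr}$, $\widetilde H_g$, $\widehat\tau_r$, $\widehat b_r$, $\widehat R_{gr}$ and the centered estimate. By the display above, $|\Delta_{gr}|\le C|T_{gr}-p_rN_{gr}|\max_g|\widehat h_g-h_g^*|$, and the maximum is $O_p(N^{-1/2})$ by the uniform derivative bound, so $\frac1N\sum\Delta_{gr}^2=O_p(N^{-1})$.

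Each estimator is a quadratic form $N^{-1}\sum_{(gr),(ks)}\omega_{gr,ks}A_{gr}A_{ks}$ in the contributions, possibly after subtracting fitted centers. In the small-strata case the coefficients $\omega$, including $a/(1-ab)$, are bounded on a sparse set with $O(1)$ nonzero entries per row. For $\widehat V_a$, the projection/centering operators are averages over the stratum and are contractions in $\ell^2$ norm. Expanding the form gives
\[
\widehat V(\widehat h)-\widehat V(h^*)=2N^{-1}\sum\omega A\Delta+N^{-1}\sum\omega\Delta\Delta .
\]
By Cauchy--Schwarz, the cross term is bounded by $\{N^{-1}\sum A^2\}^{1/2}\{N^{-1}\sum\Delta^2\}^{1/2}=O_p(1)\,O_p(N^{-1/2})$, and the last term is $O_p(N^{-1})$. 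The bound $N^{-1}\sum A^2=O_p(1)$ follows from the fourth-moment conditions and Markov. For the nonsingular $1/(1-ab)$ factors I need a uniform bound on $1/(1-a_{gr,ks}b_{gr,ks})$. Since these are rational functions of integers bounded by $\overline N^{\mathrm{str}}$, they take finitely many values, and excluding the singular case $ab=1$ gives $1-ab$ bounded away from zero. Finally, $\max\{\cdot,\cdot\}$ is $1$-Lipschitz, which transfers the bound to $\widehat V_b$.

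\textbf{Main obstacle.} I expect the hardest step to be the $O_p(N^{1/2})$ bound on $S_N$ in the growing-strata regime, because within-stratum dependence spans the whole stratum. The finite-population sampling variance argument is the route I will try first. A secondary concern is confirming uniform control of $\max_g|\widehat h_g-h_g^*|$ rather than only an average; Assumption~\ref{ass:adjustment}(ii) gives this directly.
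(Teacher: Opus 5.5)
Your proposal is correct and follows essentially the same route as the paper: linearity of the contributions in $h_g$, a uniform Taylor expansion, an $O_p(\sqrt N)$ bound on the linear coefficient vector from the complete-randomization variance formula (which covers both regimes at once, so your regime split is unnecessary), an $O(N)$ bound on $\sum_{g,r}|T_{gr}-p_rN_{gr}|$ for the remainder, and Cauchy--Schwarz perturbation bounds for $\widehat V_a$ and $\widehat V_b$ combined with the Lipschitz maximum. One small correction: the residual map defining $\widehat V_a$ is an oblique, not orthogonal, projection-type operator, because the centering uses unweighted stratum means and unequal $N_{gr}$, so it is generally not a contraction. It is, however, bounded uniformly in $N$ given fixed $R$ and bounded $w_g$, $N_{gr}$, and $1/\{p_r(1-p_r)\}$, and that is all your Cauchy--Schwarz step requires.
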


\section{Proofs}
\label{app:proofs}

\subsection{Proof of Proposition~\ref{prop:baseline}}

\begin{proof}
We establish unbiasedness and the variance formula by calculating the expectation and variance of each linkage cell's contribution. Independence across linkage cells then allows us to add these quantities. For each linkage cell $g$, define its contribution to $N\widehat\tau^{\mathrm B}$ as
\[
H_g
=
\frac{(T_g^{\mathrm{cell}}-pN_g^{\mathrm{cell}})\widetilde Y_g}{p(1-p)}.
\]
Then
\[
N\widehat\tau^{\mathrm B}=\sum_g H_g.
\]

To prove unbiasedness, it is enough to show that $\E_D[H_g]=\theta_g$, since $\sum_g\theta_g=N\tau_{\cS}$. Using
\[
T_g^{\mathrm{cell}}-pN_g^{\mathrm{cell}}
=
\sum_{i\in\cC_g\cap\cS}(D_i-p)
\]
and
\[
\widetilde Y_g
=
B_g+p\theta_g
+
\sum_{j\in\cC_g\cap\cS}\tau_j(D_j-p),
\]
we have
\[
\begin{aligned}
\E\!\left[(T_g^{\mathrm{cell}}-pN_g^{\mathrm{cell}})\widetilde Y_g\right]
&=
(B_g+p\theta_g)
\sum_{i\in\cC_g\cap\cS}\E[D_i-p] \\
&\quad+
\sum_{i\in\cC_g\cap\cS}\sum_{j\in\cC_g\cap\cS}
\tau_j
\E\!\left[(D_i-p)(D_j-p)\right].
\end{aligned}
\]
The first term is zero. For $i\neq j$, independence and $\E[D_i-p]=0$ imply
\[
\E\!\left[(D_i-p)(D_j-p)\right]=0,
\]
whereas for $i=j$,
\[
\E[(D_i-p)^2]=p(1-p).
\]
Therefore,
\[
\E\!\left[(T_g^{\mathrm{cell}}-pN_g^{\mathrm{cell}})\widetilde Y_g\right]
=
p(1-p)\sum_{i\in\cC_g\cap\cS}\tau_i
=
p(1-p)\theta_g,
\]
and hence
\[
\E[H_g]=\theta_g.
\]
Summing across linkage cells and dividing by $N$ proves \eqref{eq:baseline-unbiased}.

For the variance formula, the expectation just obtained gives
\[
\Var_D(H_g)
=
\frac{\E_D[(T_g^{\mathrm{cell}}-pN_g^{\mathrm{cell}})^2\widetilde Y_g^2]}{p^2(1-p)^2}
-\theta_g^2.
\]
It therefore remains to evaluate the second moment in the numerator. Expanding the candidate sum separates it into three terms:
\begin{align*}
\E\!\left[(T_g^{\mathrm{cell}}-pN_g^{\mathrm{cell}})^2\widetilde Y_g^2\right]
={}&
(B_g+p\theta_g)^2
\E\!\left[(T_g^{\mathrm{cell}}-pN_g^{\mathrm{cell}})^2\right]\\
&+
2(B_g+p\theta_g)
\E\!\left[
(T_g^{\mathrm{cell}}-pN_g^{\mathrm{cell}})^2
\sum_{i\in\cC_g\cap\cS}\tau_i(D_i-p)
\right]\\
&+
\E\!\left[
(T_g^{\mathrm{cell}}-pN_g^{\mathrm{cell}})^2
\left\{
\sum_{i\in\cC_g\cap\cS}\tau_i(D_i-p)
\right\}^2
\right].
\end{align*}

To evaluate these three terms, we use independence and the second through fourth centered Bernoulli moments:
\[
\E(D_i-p)^2=p(1-p),\qquad
\E(D_i-p)^3=p(1-p)(1-2p),
\]
and
\[
\E(D_i-p)^4
=
p(1-p)\{1-3p(1-p)\}.
\]
Using these moments and independence across units gives
\[
\E\!\left[(T_g^{\mathrm{cell}}-pN_g^{\mathrm{cell}})^2\right]
=
N_g^{\mathrm{cell}}p(1-p),
\]
\[
\E\!\left[
(T_g^{\mathrm{cell}}-pN_g^{\mathrm{cell}})^2
\sum_{i\in\cC_g\cap\cS}\tau_i(D_i-p)
\right]
=
p(1-p)(1-2p)\theta_g,
\]
and
\[
\E\!\left[
(T_g^{\mathrm{cell}}-pN_g^{\mathrm{cell}})^2
\left\{
\sum_{i\in\cC_g\cap\cS}\tau_i(D_i-p)
\right\}^2
\right]
=
p(1-p)\{1+(N_g^{\mathrm{cell}}-6)p(1-p)\}Q_g
+
2p^2(1-p)^2\theta_g^2.
\]
Therefore,
\begin{align*}
\Var(H_g)
&=
\frac{1}{p^2(1-p)^2}
\Big[
N_g^{\mathrm{cell}}p(1-p)(B_g+p\theta_g)^2\\
&\qquad\qquad
+2p(1-p)(1-2p)(B_g+p\theta_g)\theta_g\\
&\qquad\qquad
+p(1-p)\{1+(N_g^{\mathrm{cell}}-6)p(1-p)\}Q_g
+2p^2(1-p)^2\theta_g^2
\Big]
-\theta_g^2\\
&=
\theta_g^2
+
\frac{1}{p(1-p)}
\left[
\begin{aligned}
&N_g^{\mathrm{cell}}(B_g+p\theta_g)^2
+2(1-2p)(B_g+p\theta_g)\theta_g
\\[-2pt]
&\quad+\{1+(N_g^{\mathrm{cell}}-6)p(1-p)\}Q_g
\end{aligned}
\right].
\end{align*}

Treatment assignments are independent across linkage cells, so the variance of their sum is the sum of the linkage-cell-specific variances. This proves \eqref{eq:baseline-var}--\eqref{eq:vg}.
\end{proof}

\subsection{Proof of Corollary~\ref{cor:variance-gap}}

\begin{proof}
To obtain the variance gap, we compare the candidate-sum and oracle contributions unit by unit. Under $N_g^{\mathrm{cell}}\leq1$, each contribution depends only on that unit's assignment, so the aggregate gap is the sum of the individual variance differences, divided by $N^2$.

When every represented linkage cell contains at most one experimental unit, the baseline estimator can be written as
\[
\widehat\tau^{\mathrm B}
=
\frac{1}{N}\sum_{i\in\cS}
\frac{D_i-p}{p(1-p)}
\widetilde Y_i(\mathbf D),
\]
where
\[
\widetilde Y_i(\mathbf D)
=
B_{g(i)}+D_i\tau_i.
\]

The Oracle estimator can similarly be written as
\[
\widehat\tau^{\mathrm O}
=
\frac{1}{N}\sum_{i\in\cS}
\frac{D_i-p}{p(1-p)}
 Y_i(D_i),
\]
where
\[
Y_i(D_i)
=
Y_i(0)+D_i\tau_i.
\]

The two individual contributions differ only in their fixed baseline term. We can therefore evaluate both variances with one identity: for any fixed scalar $b$, the Horvitz--Thompson contribution based on $b+D_i\tau_i$ satisfies
\begin{align*}
\Var_D\!\left[
\frac{D_i-p}{p(1-p)}\{b+D_i\tau_i\}
\right]
&=
\frac{(b+\tau_i)^2}{p}
+
\frac{b^2}{1-p}
-
\tau_i^2\\
&=
\frac{\{b+(1-p)\tau_i\}^2}{p(1-p)}.
\end{align*}

Apply this identity with $b=B_{g(i)}$ for the candidate-sum contribution and with $b=Y_i(0)$ for the oracle contribution. Subtracting the latter variance from the former, summing over $i\in\cS$, and dividing by $N^2$ gives
\[
\Var_D(\widehat\tau^{\mathrm B})
-
\Var_D(\widehat\tau^{\mathrm O})
=
\frac{1}{N^2p(1-p)}
\sum_{i\in\cS}
\left[
\{B_{g(i)}+(1-p)\tau_i\}^2
-
\{Y_i(0)+(1-p)\tau_i\}^2
\right],
\]
which is \eqref{eq:variance-gap}.
\end{proof}

\subsection{Proofs of Propositions~\ref{prop:weighted} and \ref{prop:fixed-adjustment}}

Both results follow by applying the linkage-cell-level expectation and variance calculation from Proposition~\ref{prop:baseline} to the modified contribution.

For weighting, multiplying the linkage-cell contribution $H_g$ in the proof of Proposition~\ref{prop:baseline} by $w_g$ changes its expectation to $w_g\theta_g$ and its variance to $w_g^2V_g$. Summing across linkage cells and normalizing by $W_{\cS}$ proves Proposition~\ref{prop:weighted}.

For fixed adjustment $h_g$, the same calculation applies after replacing $B_g$ by $B_g-h_g$. The expectation is unchanged because
\[
\E_D[(T_g^{\mathrm{cell}}-pN_g^{\mathrm{cell}})h_g]=0,
\]
while the variance expression is obtained by replacing $B_g+p\theta_g$ with $B_g-h_g+p\theta_g$ in $V_g$. This proves Proposition~\ref{prop:fixed-adjustment}.

\subsection{Proof of Proposition~\ref{prop:ri}}

\begin{proof}
We first show that the rejection probability is no greater than the
nominal level conditional on the observed assignment being admissible.
We then show that randomization at the boundary gives equality.
Under \eqref{eq:sharp-null}, equation
\eqref{eq:linked-cell-outcome} implies
$\widetilde Y_g(\mathbf d)=B_g$ for every linkage cell $g$
and every assignment $\mathbf d$. Thus
$\widetilde{\mathbf Y}=\mathbf B=(B_g)_{g=1}^G$ is fixed,
and so is the set $\mathcal A=\mathcal A(\mathbf B)$
of assignments for which the statistic is defined.
By assumption, $\Pr_D\{\mathbf D\in\mathcal A\}>0$.
Conditional on $\mathbf D\in\mathcal A$, the observed
assignment follows the same distribution as an independent
Bernoulli draw $\mathbf D^*$ conditional on
$\mathbf D^*\in\mathcal A$.

Let $Z=\mathcal T(\mathbf B,\mathbf D)$, and let
$z_1>\cdots>z_K$ be the distinct values in its support
under this conditional distribution. Define
\[
q_j=\Pr_D\{Z=z_j\mid\mathbf D\in\mathcal A\},
\qquad Q_0=0,
\qquad Q_j=\sum_{\ell=1}^j q_\ell.
\]
On the event $Z=z_j$, the randomization $p$-value in
\eqref{eq:ri-pvalue} equals $Q_j$. Therefore,
\[
\Pr_D\{p_{\mathrm{RI}}\leq\alpha
\mid\mathbf D\in\mathcal A\}
=\sum_{j:Q_j\leq\alpha}q_j
\leq\alpha,
\]
which establishes \eqref{eq:ri-level}.

For equality, let $U$ be independent of the treatment
assignment and uniformly distributed on $(0,1)$.
On the event $Z=z_j$, define the randomized $p$-value
\[
p_{\mathrm{RI}}^{\mathrm{rand}}=Q_{j-1}+Uq_j.
\]
Conditional on $\mathbf D\in\mathcal A$, this $p$-value
is uniform on $(0,1)$: each interval $(Q_{j-1},Q_j)$
has length $q_j$, is selected with probability $q_j$,
and receives a uniform draw. Rejecting when
$p_{\mathrm{RI}}^{\mathrm{rand}}\leq\alpha$ therefore
gives conditional rejection probability exactly $\alpha$.
\end{proof}

\subsection{Proof of Proposition~\ref{thm:general-clt}}

\begin{proof}
We prove the result in two steps: establish the central limit theorem with the fixed adjustment $h^*$, and then show that estimating the adjustment is negligible at the root-$N$ scale. These two claims imply the proposition by Slutsky's theorem.

For the fixed-adjustment claim, the estimator has the representation
\begin{align*}
\sqrt N\{\widehat\tau(w,h^*)-\tau_{w,\cS}\}
&=
\frac{\sqrt N}{W_{\cS}}
\sum_g\{H_g(w,h^*)-w_g\theta_g\}\\
&=
\sqrt{V_N(w,h^*)}\,
\frac{\sum_g\{H_g(w,h^*)-w_g\theta_g\}}
{\{\sum_gw_g^2V_g(h^*)\}^{1/2}}.
\end{align*}
Since $V_N(w,h^*)\to V(w,h^*)\in(0,\infty)$, it suffices to prove that the standardized sum in the second line converges to $\mathcal N(0,1)$. We do so by verifying the independent-array Lyapunov condition.

The linkage-cell-level calculation in Proposition~\ref{prop:fixed-adjustment} gives
\[
\E_D[H_g(w,h^*)]=w_g\theta_g,
\qquad
\Var_D\{H_g(w,h^*)\}=w_g^2V_g(h^*).
\]
The summands are therefore centered, and they are independent because each depends only on assignments in its own linkage cell. To verify Lyapunov's condition, we bound the sum of fourth moments and the squared total variance separately. Bounded linkage-cell sizes and weights bound the multiplier of $\widetilde Y_g-h_g^*$ in \eqref{eq:Hg}. Thus, using Jensen's inequality to center the contributions and Assumption~\ref{ass:cells}(ii) evaluated at $h=h^*$,
\[
\sum_{g\in\mathcal G_N}
\E_D\!\left[|H_g(w,h^*)-w_g\theta_g|^4\right]
\leq C\sum_{g\in\mathcal G_N}
\left\{\E_D[|\widetilde Y_g(\mathbf D)|^4]+|h_g^*|^4\right\}
=O(N).
\]
The same bound holds for the sum of noncentered fourth moments of $H_g(w,h^*)$.
For the total variance, the same assumption gives
\[
\frac{N}{W_{\cS}^2}\sum_gw_g^2V_g(h^*)
=V_N(w,h^*)\longrightarrow V(w,h^*)\in(0,\infty),
\]
and $W_{\cS}/N$ converges to a positive finite limit. Thus $\sum_gw_g^2V_g(h^*)\asymp N$. Combining the two bounds,
\[
\frac{\sum_g\E_D[|H_g(w,h^*)-w_g\theta_g|^4]}
{\{\sum_gw_g^2V_g(h^*)\}^2}
=O(N^{-1})\longrightarrow0.
\]
Lyapunov's central limit theorem applies to the standardized sum. The representation above and Slutsky's theorem then yield
\[
\sqrt N\{\widehat\tau(w,h^*)-\tau_{w,\cS}\}
\Rightarrow\mathcal N\{0,V(w,h^*)\}.
\]

For the second claim, we need to show
\[
\sqrt N\{\widehat\tau(w,\widehat h)-\widehat\tau(w,h^*)\}=o_p(1).
\]
We expand the adjustment to separate a linear term from the Taylor remainder. Write
\[
\dot h_g^*
=
\left.\frac{\partial h(x_g;\beta)}{\partial\beta}\right|_{\beta=\beta^*}.
\]
A second-order Taylor expansion around $\beta^*$, together with
Assumption~\ref{ass:adjustment}, gives, uniformly in $g$,
\[
h(x_g;\widehat\beta)-h(x_g;\beta^*)
=
\dot h_g^{*\prime}(\widehat\beta-\beta^*)
+
O_p\!\left(\|\widehat\beta-\beta^*\|^2\right).
\]
Substituting into the estimator, using bounded weights and
$\sum_g|T_g^{\mathrm{cell}}-pN_g^{\mathrm{cell}}|\leq\sum_gN_g^{\mathrm{cell}}=N$, gives
\begin{align*}
&\sqrt N\{\widehat\tau(w,\widehat h)-\widehat\tau(w,h^*)\}\\
&\quad=
-\frac{N}{W_{\cS}p(1-p)}
\left\{\frac{1}{\sqrt N}\sum_gw_g(T_g^{\mathrm{cell}}-pN_g^{\mathrm{cell}})\dot h_g^*\right\}'
(\widehat\beta-\beta^*)\\
&\qquad\quad+
O_p\!\left(\frac{N}{W_{\cS}}\sqrt N\,
\|\widehat\beta-\beta^*\|^2\right).
\end{align*}
Because $N/W_{\cS}=O(1)$ and $\widehat\beta-\beta^*=O_p(N^{-1/2})$, it suffices to show that the normalized sum in braces is $O_p(1)$; the displayed remainder is already $O_p(N^{-1/2})=o_p(1)$.

To establish that bound on the sum, we use its zero mean and bounded second moment. Independence of Bernoulli assignments gives
\[
\E_D\!\left[\frac{1}{\sqrt N}
\sum_gw_g(T_g^{\mathrm{cell}}-pN_g^{\mathrm{cell}})\dot h_g^*\right]=0
\]
and
\[
\E_D\left\|\frac{1}{\sqrt N}
\sum_gw_g(T_g^{\mathrm{cell}}-pN_g^{\mathrm{cell}})\dot h_g^*\right\|^2
=
\frac{p(1-p)}{N}
\sum_gw_g^2N_g^{\mathrm{cell}}\|\dot h_g^*\|^2
=O(1).
\]
The last equality follows from bounded weights and derivatives and $\sum_gN_g^{\mathrm{cell}}=N$. Hence the normalized sum is $O_p(1)$, the linear term in the expansion is $O_p(N^{-1/2})$, and
\[
\sqrt N\{\widehat\tau(w,\widehat h)-\widehat\tau(w,h^*)\}=o_p(1).
\]
Combining this result with the fixed-adjustment central limit theorem proves the proposition.
\end{proof}

\subsection{Proof of Proposition~\ref{prop:optimal-adjustment}}

\begin{proof}
We first find the variance-minimizing adjustment within each category and then show that it equals $h_{w,N}^*(x)$ as defined in the proposition. Because $W_{\cS}$ does not depend on the adjustment, minimizing the variance of $\widehat\tau(w,h)$ is equivalent to minimizing
\[
\sum_g w_g^2V_g(h_g).
\]
If $h_g=h(x_g)$, the minimization separates across values of $x$. For a fixed category $x$, the terms in $V_g\{h(x)\}$ that depend on $h(x)$ are, up to the common factor $1/[p(1-p)]$,
\[
\sum_{g:x_g=x}w_g^2
\left[
N_g^{\mathrm{cell}}\{B_g-h(x)+p\theta_g\}^2
+2(1-2p)\{B_g-h(x)+p\theta_g\}\theta_g
\right].
\]
This objective is a strictly convex quadratic whenever $W_{2,x}>0$, so its unique minimizer is characterized by the first-order condition. Differentiating with respect to $h(x)$ gives
\[
\sum_{g:x_g=x}w_g^2
\left[
N_g^{\mathrm{cell}}\{B_g-h(x)+p\theta_g\}
+(1-2p)\theta_g
\right]
=0.
\]
Since
\[
W_{2,x}
=
\sum_{g:x_g=x}w_g^2N_g^{\mathrm{cell}},
\]
the first-order condition can be solved for $h(x)$ as
\begin{equation}
h(x)
=
\frac{1}{W_{2,x}}
\sum_{g:x_g=x}w_g^2
\left[
N_g^{\mathrm{cell}}(B_g+p\theta_g)
+(1-2p)\theta_g
\right].
\label{eq:optimal-adjustment-foc-proof}
\end{equation}
It remains to identify this minimizer with $(1-p)\mu_{1,w,N}(x)+p\mu_{0,w,N}(x)$. To evaluate that combination, first note that, for $i\in\cC_g\cap\cS$,
\[
\E_D\!\left[\widetilde Y_g(\mathbf D)\mid D_i=d\right]
=
B_g+p\theta_g+(d-p)\tau_i.
\]
Summing over experimental units in linkage cell $g$ yields
\begin{align*}
\sum_{i\in\cC_g\cap\cS}
\E_D\!\left[\widetilde Y_g(\mathbf D)\mid D_i=1\right]
&=
N_g^{\mathrm{cell}}(B_g+p\theta_g)+(1-p)\theta_g,\\
\sum_{i\in\cC_g\cap\cS}
\E_D\!\left[\widetilde Y_g(\mathbf D)\mid D_i=0\right]
&=
N_g^{\mathrm{cell}}(B_g+p\theta_g)-p\theta_g.
\end{align*}
Therefore, by the definition of $\mu_{d,w,N}(x)$,
\begin{align*}
&(1-p)\mu_{1,w,N}(x)+p\mu_{0,w,N}(x)
\\
&\qquad=
\frac{1}{W_{2,x}}
\sum_{g:x_g=x}w_g^2
\left[
N_g^{\mathrm{cell}}(B_g+p\theta_g)
+(1-2p)\theta_g
\right].
\end{align*}
Comparing this expression with \eqref{eq:optimal-adjustment-foc-proof} gives
\[
h(x)
=
(1-p)\mu_{1,w,N}(x)+p\mu_{0,w,N}(x)
=
h_{w,N}^*(x),
\]
which proves the proposition.
\end{proof}

\subsection{Proof of Corollary~\ref{cor:saturated-adjustment}}

\begin{proof}
We first establish the root-$N$ rate for each weighted arm mean. Because there are finitely many retained categories and the adjustment is a fixed linear combination of the two arm means, these rates will give the rate for $\widehat h_w$ and verify Assumption~\ref{ass:adjustment}.

Fix a retained category $x$ and consider $d=1$. On the event that the denominator defining $\widehat\mu_{1,w}(x)$ is positive,
\begin{align*}
&\sqrt N\{\widehat\mu_{1,w}(x)-\mu_{1,w,N}(x)\}\\
&\quad=
\frac{
N^{-1/2}\sum_{i\in\cS:x_{g(i)}=x}
w_{g(i)}^2D_i
\{\widetilde Y_{g(i)}(\mathbf D)-\mu_{1,w,N}(x)\}
}{
N^{-1}\sum_{i\in\cS:x_{g(i)}=x}w_{g(i)}^2D_i
}.
\end{align*}
It suffices to show that the numerator is $O_p(1)$ and the denominator is bounded away from zero in probability. The latter also ensures that the ratio is defined with probability approaching one.

For the denominator, it suffices to bound it below by a quantity converging in probability to a positive constant. Since $w_g\geq\underline w>0$,
\[
\frac{1}{N}\sum_{i\in\cS:x_{g(i)}=x}w_{g(i)}^2D_i
\geq
\underline w^2\frac{N_x}{N}
\left(\frac{1}{N_x}\sum_{i\in\cS:x_{g(i)}=x}D_i\right).
\]
Because $N_x/N\to\pi_x>0$, we have $N_x\to\infty$. The law of large numbers for independent Bernoulli assignments implies that the average in parentheses converges in probability to $p$. The right-hand side therefore converges in probability to $\underline w^2\pi_xp>0$, proving the required denominator bound.

For the numerator, we establish a zero mean and a bounded second moment. By the definition of $\mu_{1,w,N}(x)$,
\begin{align*}
&\E_D\!\left[
\sum_{i\in\cS:x_{g(i)}=x}w_{g(i)}^2D_i
\{\widetilde Y_{g(i)}(\mathbf D)-\mu_{1,w,N}(x)\}
\right]\\
&\quad=
p\sum_{i\in\cS:x_{g(i)}=x}w_{g(i)}^2
\left\{\E_D[\widetilde Y_{g(i)}(\mathbf D)\mid D_i=1]
-\mu_{1,w,N}(x)\right\}
=0.
\end{align*}
To bound its second moment, first note that Assumption~\ref{ass:cells}(ii) implies the required candidate-sum second-moment bound. Since $G_N\leq N$, Cauchy--Schwarz gives
\[
\frac1N\sum_g\E_D[\widetilde Y_g(\mathbf D)^2]
\leq
\left\{\frac{G_N}{N}\frac1N\sum_g
\E_D[|\widetilde Y_g(\mathbf D)|^4]\right\}^{1/2}
=O(1).
\]
Positive category shares and bounded weights then imply $\mu_{1,w,N}(x)^2=O(1)$ by Cauchy--Schwarz applied to \eqref{eq:oracle-arm-mean-weighted}.

Now group the numerator by linkage cell. Contributions from different linkage cells are independent, but their individual expectations need not be zero. Subtracting each linkage-cell contribution's own expectation leaves the total sum unchanged, since those expectations sum to zero. We can therefore add the linkage-cell-specific variances. Together with bounded linkage-cell sizes, the preceding moment bounds give
\begin{align*}
&\E_D\!\left[
\left|
\frac{1}{\sqrt N}
\sum_{i\in\cS:x_{g(i)}=x}w_{g(i)}^2D_i
\{\widetilde Y_{g(i)}(\mathbf D)-\mu_{1,w,N}(x)\}
\right|^2
\right]\\
&\quad=
\frac{1}{N}\sum_{g:x_g=x}
\Var_D\!\left[
w_g^2\sum_{i\in\cC_g\cap\cS}D_i
\{\widetilde Y_g(\mathbf D)-\mu_{1,w,N}(x)\}
\right]\\
&\quad\leq\frac{C}{N}\sum_{g:x_g=x}
\left\{\E_D[\widetilde Y_g(\mathbf D)^2]+\mu_{1,w,N}(x)^2\right\}
=O(1).
\end{align*}
Thus the numerator of the ratio is $O_p(1)$. Combining this with the denominator bound yields
\[
\widehat\mu_{1,w}(x)-\mu_{1,w,N}(x)=O_p(N^{-1/2}).
\]
Replacing $D_i$ by $1-D_i$ and $p$ by $1-p$ proves the same rate for $d=0$.

It remains to transfer these rates to the adjustment and check its parameterization. The support of $x_g$ is fixed and finite, so the arm-mean rates hold jointly over all retained categories. By \eqref{eq:hhat-saturated-weighted} and \eqref{eq:optimal-h-weighted},
\begin{align*}
\widehat h_w(x)-h_{w,N}^*(x)
&=(1-p)\{\widehat\mu_{1,w}(x)-\mu_{1,w,N}(x)\}\\
&\quad+p\{\widehat\mu_{0,w}(x)-\mu_{0,w,N}(x)\}\\
&=O_p(N^{-1/2}).
\end{align*}
Parameterizing the saturated adjustment by its finite vector of category-specific values makes $h(x;\beta)$ linear in a fixed-dimensional parameter. Its first derivatives are bounded and its second derivatives are zero. Thus both the rate and the smoothness requirements in Assumption~\ref{ass:adjustment} hold with finite-population target $h_{w,N}^*$.

Finally, to verify that this target satisfies the adjustment-moment bound in Assumption~\ref{ass:cells}(ii), apply Jensen's inequality to \eqref{eq:oracle-arm-mean-weighted}. Bounded weights and linkage-cell sizes, fixed $p\in(0,1)$, and positive category shares give
\[
|\mu_{d,w,N}(x)|^4
\leq\frac{C}{N_x}\sum_{g:x_g=x}
\E_D[|\widetilde Y_g(\mathbf D)|^4]
=O(1),\qquad d\in\{0,1\}.
\]
There are finitely many categories, so \eqref{eq:optimal-h-weighted} implies $\max_x|h_{w,N}^*(x)|^4=O(1)$. Since $G_N\leq N$, it follows that $N^{-1}\sum_g|h_{w,N}^*(x_g)|^4=O(1)$.
\end{proof}

\subsection{Proof of Proposition~\ref{prop:efficiency}}

To identify the optimal weights, we express the limiting variance as a function of $w(x)$, derive a common lower bound, and verify that inverse-variance weights attain it. Because the weights under consideration are functions of $x_g$,
\begin{align*}
V_N(w,h)
&=
\frac{N}{W_{\cS}^2}
\sum_x w(x)^2
\sum_{g\in\mathcal G_N:x_g=x}V_g(h),\\
\frac{W_{\cS}}{N}
&=
\sum_x\frac{N_x}{N}w(x).
\end{align*}
Under \eqref{eq:Nx-pix}--\eqref{eq:Vbarx},
\begin{equation}
V(w,h)
=
\frac{
\sum_x\pi_x w(x)^2\overline V_h(x)
}{
\left\{\sum_x\pi_xw(x)\right\}^2
}.
\label{eq:weight-objective-proof}
\end{equation}

We now bound \eqref{eq:weight-objective-proof} using Cauchy--Schwarz. The relevant factorization of its denominator is
\[
\sum_x\pi_xw(x)
=
\sum_x
\left\{\sqrt{\pi_x\overline V_h(x)}\,w(x)\right\}
\sqrt{\frac{\pi_x}{\overline V_h(x)}},
\]
Cauchy--Schwarz implies
\[
\left\{\sum_x\pi_xw(x)\right\}^2
\le
\left\{\sum_x\pi_xw(x)^2\overline V_h(x)\right\}
\left\{\sum_x\frac{\pi_x}{\overline V_h(x)}\right\}.
\]
Thus the asymptotic variance is bounded below by
\[
\left\{\sum_x\frac{\pi_x}{\overline V_h(x)}\right\}^{-1},
\]
and this lower bound is attained by $w(x)\propto1/\overline V_h(x)$. This proves \eqref{eq:optimal-weight}. Under \eqref{eq:homoskedastic-benchmark}, $\overline V_h(m)=cm$, so the optimal weight is proportional to $1/m$. \qed

\subsection{Proof of Proposition~\ref{prop:cluster-se}}

\begin{proof}

We first establish the general variance-estimator
limit, then verify the stated centering choices and the connection
with the WLS implementation. Let
\[
\varepsilon_g=H_g(w,h^*)-w_g\theta_g,
\qquad
b_g=w_g(\theta_g-N_g^{\mathrm{cell}}\delta_g^*),
\]
so that
\[
S_g^*\equiv H_g(w,h^*)-w_gN_g^{\mathrm{cell}}\delta_g^*
=\varepsilon_g+b_g,
\qquad \E_D\varepsilon_g=0.
\]
The fourth-moment bound for $H_g(w,h^*)$ established in the proof of
Proposition~\ref{thm:general-clt}, together with Jensen's inequality,
gives
\[
\sum_g\E_D|\varepsilon_g|^4=O(N),
\qquad
\sum_g|w_g\theta_g|^4=O(N).
\]
Bounded $N_g^{\mathrm{cell}}$ and $w_g$, together with the assumed average
fourth-moment bound on $\delta_g^*$, give $\sum_g|b_g|^4=O(N)$.
Independence across linkage cells and
Cauchy--Schwarz then give
\[
\Var_D\!\left(\sum_g\{\varepsilon_g^2-\E_D\varepsilon_g^2\}\right)
=O(N),
\qquad
\Var_D\!\left(\sum_g b_g\varepsilon_g\right)=O(N).
\]
Because $N/W_{\cS}^2=O(N^{-1})$, expanding the square yields
\begin{align*}
\frac{N}{W_{\cS}^2}\sum_g(S_g^*)^2
&=\frac{N}{W_{\cS}^2}
\sum_g\{\E_D\varepsilon_g^2+b_g^2\}+o_p(1)\\
&=V_N(w,h^*)+\Delta_N(w,\delta^*)+o_p(1).
\end{align*}
Here the deterministic term equals the gap defined in
\eqref{eq:cluster-gap-N}, since
\[
\frac{N}{W_{\cS}^2}\sum_g b_g^2
=\frac{N}{W_{\cS}^2}\sum_g w_g^2
\{\theta_g-N_g^{\mathrm{cell}}\delta_g^*\}^2
=\Delta_N(w,\delta^*)\geq0.
\]

For the feasible statistic, define
$\widehat S_g=\widehat H_g-w_gN_g^{\mathrm{cell}}\widehat\delta_g$.
The Taylor argument in the proof of
Proposition~\ref{thm:general-clt}, bounded weights, and bounded
linkage-cell sizes imply
\[
\frac{N}{W_{\cS}^2}
\sum_g\{\widehat H_g-H_g(w,h^*)\}^2=o_p(1).
\]
The assumed mean-square approximation of $\widehat\delta_g$
similarly gives
\[
\frac{N}{W_{\cS}^2}
\sum_g(w_gN_g^{\mathrm{cell}})^2(\widehat\delta_g-\delta_g^*)^2=o_p(1).
\]
Combining these bounds yields
$N W_{\cS}^{-2}\sum_g(\widehat S_g-S_g^*)^2=o_p(1)$.
The scaled sum of $(S_g^*)^2$ is $O_p(1)$ by the preceding argument.
Cauchy--Schwarz therefore shows that replacing $S_g^*$ by
$\widehat S_g$ changes the scaled sum of squares by $o_p(1)$.
It follows that
\[
\widehat V_{\mathrm{cl}}
=V_N(w,h^*)+\Delta_N(w,\delta^*)+o_p(1)
\overset p\longrightarrow V(w,h^*)+\Delta(w,\delta^*).
\]
The gap is nonnegative because it is a sum of squares. If all treatment
effects equal $\tau$ and $\delta_g^*=\tau$, then
$\theta_g=N_g^{\mathrm{cell}}\delta_g^*$ and the gap is zero.

For the common centering $\widehat\delta_g=\widehat\tau$, take
$\delta_g^*=\tau_{w,\cS}$. Proposition~\ref{thm:general-clt} gives
$\widehat\tau-\tau_{w,\cS}=O_p(N^{-1/2})$.
Jensen's inequality applied to the cell contributions and
$W_{\cS}\asymp N$ give $\tau_{w,\cS}=O(1)$.
Thus both centering conditions hold, establishing the stated
application to all four estimators.

We next consider the WLS adjustment and centering described after
the proposition. Under Bernoulli assignment, for
$i\in\cC_g\cap\cS$,
\[
\E_D[\widetilde Y_g(\mathbf D)\mid D_i=d]
=B_g+p\theta_g+(d-p)\tau_i.
\]
The finite-population WLS normal equations therefore imply
\[
\beta_1^*-\beta_0^*
=(N\mathsf A_N)^{-1}\sum_g w_gx_g\theta_g.
\]
The moment conditions, bounded covariates, and bounded inverse of
$\mathsf A_N$ imply $\|\beta_d^*\|=O(1)$, so $\delta_g^*$ is
uniformly bounded. Assumption~\ref{ass:adjustment} for the stacked
arm-specific coefficients implies
$\max_g|\widehat\delta_g-\delta_g^*|=O_p(N^{-1/2})$.
Hence this centering also satisfies the proposition's moment and
approximation conditions. Under homogeneous effects, the intercept
implies $\beta_1^*-\beta_0^*=\tau e_1$, where $e_1$ selects the
intercept. Consequently $\delta_g^*=\tau$ and the gap vanishes.

For the CRVE equivalence, let $p_1=p$, $p_0=1-p$, and define
\[
\widehat{\mathsf A}_{d,N}
=\frac1N\sum_{i\in\cS}
\1\{D_i=d\}w_{g(i)}x_{g(i)}x_{g(i)}',
\qquad
\overline x_w=\frac1{W_{\cS}}\sum_g w_gN_g^{\mathrm{cell}}x_g.
\]
Independence across linkage cells and bounded covariates imply
$\widehat{\mathsf A}_{d,N}=p_d\mathsf A_N+O_p(N^{-1/2})$.
Since $\mathsf A\succ0$, both arm-specific regressions are identified
with probability approaching one. The intercept gives the identity
\[
\overline x_w'\mathsf A_N^{-1}=\frac{N}{W_{\cS}}e_1',
\]
and therefore
\[
\overline x_w'\widehat{\mathsf A}_{d,N}^{-1}
=\frac{N}{p_dW_{\cS}}e_1'+O_p(N^{-1/2}).
\]
Let $\widehat u_{dg}=\widetilde Y_g(\mathbf D)-x_g'\widehat\beta_d$,
and write the arm-specific cluster scores as
\[
s_{1g}=w_gT_g^{\mathrm{cell}}x_g\widehat u_{1g},
\qquad
s_{0g}=w_g(N_g^{\mathrm{cell}}-T_g^{\mathrm{cell}})x_g\widehat u_{0g}.
\]
The centered specification in \eqref{eq:interacted-wls} has treatment
coefficient $\widehat t=\overline x_w'(\widehat\beta_1-\widehat\beta_0)$.
Expressing its CR0 variance in this equivalent arm-specific
parameterization gives the exact identity
\[
N\widehat{\operatorname{Var}}_{\mathrm{CR0}}(\widehat t)
=\frac1N\sum_g
\left\{
\overline x_w'\widehat{\mathsf A}_{1,N}^{-1}s_{1g}
-\overline x_w'\widehat{\mathsf A}_{0,N}^{-1}s_{0g}
\right\}^2.
\]
The moment conditions and bounded coefficients imply
$N^{-1}\sum_g(\|s_{1g}\|^2+\|s_{0g}\|^2)=O_p(1)$.
Thus, replacing the inverse-matrix factors by their preceding
approximations changes this expression by $o_p(1)$, yielding
\[
N\widehat{\operatorname{Var}}_{\mathrm{CR0}}(\widehat t)
=\frac{N}{W_{\cS}^2}\sum_g
\left\{
\frac{w_gT_g^{\mathrm{cell}}}{p}\widehat u_{1g}
-\frac{w_g(N_g^{\mathrm{cell}}-T_g^{\mathrm{cell}})}{1-p}\widehat u_{0g}
\right\}^2+o_p(1).
\]
Direct substitution of $\widehat h_g$ and $\widehat\delta_g$ shows that
\[
\frac{w_gT_g^{\mathrm{cell}}}{p}\widehat u_{1g}
-\frac{w_g(N_g^{\mathrm{cell}}-T_g^{\mathrm{cell}})}{1-p}\widehat u_{0g}
=\widehat H_g-w_gN_g^{\mathrm{cell}}\widehat\delta_g
=\widehat S_g.
\]
This proves the claimed equivalence. Since $G_N\to\infty$ and the
regression dimension is fixed, the usual CR1 correction tends to one
and has the same limit. If $N_g^{\mathrm{cell}}=1$ for every represented cell, each
cluster score consists of one observation, so CR0 and HC0 coincide.

\end{proof}

\subsection{Proofs of Propositions~\ref{prop:strat-main}, \ref{prop:strat-correction}, and \ref{prop:strat-general-finite}}

We first prove the weighted-and-adjusted formulas in Proposition~\ref{prop:strat-general-finite}. The calculation for the uncorrected contribution uses only the stratified design; \eqref{eq:strat-nonempty} is needed only for the corrected contribution. We then obtain Propositions~\ref{prop:strat-main} and \ref{prop:strat-correction} as special cases.

Fix $(g,r)$ with $N_{gr}>0$. Under fixed-count assignment within stratum $r$,
\begin{equation}
\Var_D(D_i)=p_r(1-p_r),\qquad
\Cov_D(D_i,D_j)
=-\frac{p_r(1-p_r)}{N_r^{\mathrm{str}}-1}
\qquad(i\ne j,\ i,j\in\cS_r).
\label{eq:fixedcount-cov}
\end{equation}
Hence, for $i\in\cC_g\cap\cS_r$,
\begin{equation}
\begin{aligned}
\Cov_D(T_{gr},D_i)
&=\Var_D(D_i)
+\sum_{\substack{j\in\cC_g\cap\cS_r\\j\ne i}}
\Cov_D(D_j,D_i)\\
&=p_r(1-p_r)\frac{N_r^{\mathrm{str}}-N_{gr}}
{N_r^{\mathrm{str}}-1}.
\end{aligned}
\label{eq:strat-cov}
\end{equation}
The sum in the first line is empty when $N_{gr}=1$, so the formula includes singleton intersections. Because $T_{gr}-p_rN_{gr}$ has mean zero, fixed terms in $\widetilde Y_g-h_g$ do not contribute. Assignment independence across strata and \eqref{eq:strat-cov} therefore give
\[
\begin{aligned}
&\E_D\!\left[
\frac{T_{gr}-p_rN_{gr}}{p_r(1-p_r)}
\{\widetilde Y_g-h_g\}
\right]\\
&\qquad=\frac1{p_r(1-p_r)}
\sum_{i\in\cC_g\cap\cS_r}\tau_i\Cov_D(T_{gr},D_i)
=\frac{N_r^{\mathrm{str}}-N_{gr}}
{N_r^{\mathrm{str}}-1}\theta_{gr}.
\end{aligned}
\]
Multiplying by $w_g$ proves the first line of \eqref{eq:strat-general-bias}. Summing and using $(N_r^{\mathrm{str}}-N_{gr})/(N_r^{\mathrm{str}}-1)=1-(N_{gr}-1)/(N_r^{\mathrm{str}}-1)$ proves its second line. Under \eqref{eq:strat-nonempty}, the correction factor in \eqref{eq:strat-general-contribution} is well defined and cancels the expectation factor above, proving \eqref{eq:strat-general-unbiased}. Setting $w_g=1$ and $h_g=0$ in the unrestricted calculation proves Proposition~\ref{prop:strat-main}; the same specialization of the corrected calculation proves Proposition~\ref{prop:strat-correction}. \qed

\subsection{Proofs of Propositions~\ref{prop:strat-covariance} and \ref{prop:strat-general-covariance}}

Because the corrected contributions are defined under \eqref{eq:strat-nonempty}, that condition is in force throughout this proof. We prove the general result in three steps: factor cross-stratum products, use that factorization to identify the covariance support, and then partition the variance. The unweighted result follows at the end. Separate the centered assignment multiplier and the effect contribution:
\[
Q_{gr}=\frac{N_r^{\mathrm{str}}-1}{N_r^{\mathrm{str}}-N_{gr}}
\frac{T_{gr}-p_rN_{gr}}{p_r(1-p_r)},\qquad
F_{gr}=\sum_{i\in\cC_g\cap\cS_r}\tau_i(D_i-p_r).
\]
Both have mean zero and depend only on assignments in stratum $r$. Set them to zero for empty intersections. Then
\[
\widetilde H_{gr}(w,h)=w_gQ_{gr}\{\widetilde Y_g-h_g\},\qquad
\widetilde Y_g-h_g=\E_D(\widetilde Y_g-h_g)+\sum_tF_{gt},\qquad
\E_D[Q_{gr}F_{gr}]=\theta_{gr}.
\]
Thus the fixed adjustment changes only the constant term, and fixed weights multiply products by the corresponding cell weights.

For $r\ne s$, independence across strata implies that a product survives only if each centered multiplier is paired with an outcome contribution from its own stratum. The direct and reciprocal pairings give
\begin{equation}
\begin{aligned}
\E_D[\widetilde H_{gr}(w,h)\widetilde H_{ks}(w,h)]
&=w_gw_k\{\theta_{gr}\theta_{ks}
+\E_D[Q_{gr}F_{kr}]\E_D[Q_{ks}F_{gs}]\},\\
C_{gr,ks}(w,h)
&=w_gw_k\E_D[Q_{gr}F_{kr}]\E_D[Q_{ks}F_{gs}].
\end{aligned}
\label{eq:strat-cross-factorization}
\end{equation}
If either reciprocal intersection is empty, the corresponding $F$ is zero, proving the covariance-support claim. Finally, expand the variance of $W_{\cS}^{-1}\sum_{g,r}\widetilde H_{gr}(w,h)$ over ordered pairs and partition them into the same intersection, distinct cells in the same stratum, distinct strata in the same cell, and distinct cells and strata. In the last class, only pairs with $N_{gs}N_{kr}>0$ remain. This gives \eqref{eq:strat-general-variance}. Setting $w_g=1$, $h_g=0$, and $W_{\cS}=N$ gives \eqref{eq:strat-finite-variance} and proves Proposition~\ref{prop:strat-covariance}. \qed

\subsection{Proofs of Propositions~\ref{prop:fixedcount-asymptotics} and \ref{prop:strat-general-large}}

We prove the general result in four steps: construct a contribution envelope, establish first-order equivalence, couple the fixed-count design to independent assignments, and apply an independent-array CLT. The unweighted, unadjusted result follows by setting $w_g=1$ and $h_g=0$.

\paragraph{Bounds from the moment condition.}

The growing-strata arguments require bounds that hold for every within-cell assignment pattern, whereas Assumption~\ref{ass:strat-general-large} bounds moments under the actual design. We first show that the stated moment condition supplies those bounds. For this proof only, let $\mathbf d_g$ denote a binary assignment vector for the experimental participants in linkage cell $g$, and write
\[
L_g=\max_{\mathbf d_g\in\{0,1\}^{N_g^{\mathrm{cell}}}}
\sum_{r:N_{gr}>0}|\widetilde H_{gr}(w,h;\mathbf d_g)|.
\]
We will establish $\sum_gL_g^4=O(N)$ and then use $L_g$ only as an envelope.

Every linkage cell contains at most $\overline M$ experimental participants. Since both arms of every stratum grow proportionally to $N$, all such within-cell assignment patterns are eventually feasible, with probabilities bounded below by a constant $c>0$ uniformly over linkage cells and patterns. Indeed, for a pattern with $t_{gr}$ treated participants in each intersection, its probability is
\[
\prod_{r:N_{gr}>0}
\frac{(T_r^{\mathrm{str}})_{t_{gr}}
(N_r^{\mathrm{str}}-T_r^{\mathrm{str}})_{N_{gr}-t_{gr}}}
{(N_r^{\mathrm{str}})_{N_{gr}}},
\]
where $(a)_k=a(a-1)\cdots(a-k+1)$. Bounded linkage-cell size and treatment probabilities bounded away from zero and one give the claimed lower bound. Therefore
\[
L_g^4\le
\frac{\overline M^3}{c}
\sum_{r:N_{gr}>0}\E_D|\widetilde H_{gr}(w,h)|^4,
\qquad
\sum_gL_g^4=O(N)
\]
by \eqref{eq:strat-general-large-moment}. There are at most $N$ represented linkage cells, so H\"older's inequality also gives $\sum_gL_g^j=O(N)$ for $j=1,2,3$.

The same bounds control corrected and uncorrected linkage-cell totals, under either the fixed-count design or auxiliary independent assignments: the uncorrected intersection contribution is $(N_r^{\mathrm{str}}-N_{gr})/(N_r^{\mathrm{str}}-1)$ times its corrected counterpart, with multiplier in $(0,1]$. In addition,
\[
1\le\frac{N_r^{\mathrm{str}}-1}{N_r^{\mathrm{str}}-N_{gr}}\le N_{gr}\le\overline M.
\]
Thus each relevant linkage-cell total is bounded by $CL_g$ for every within-cell assignment pattern. Constants in the following bounds may change from line to line but do not depend on $N$.

\paragraph{First-order equivalence.}
We seek a bound smaller than $N^{-1/2}$ for the difference between the estimators. Since $R$ is fixed, $N_r^{\mathrm{str}}/N\to\lambda_r>0$, and $N_{gr}$ is bounded,
\[
\max_{g,r:N_{gr}>0}
\left|\frac{N_r^{\mathrm{str}}-1}{N_r^{\mathrm{str}}-N_{gr}}-1\right|=O(N^{-1}).
\]
The moment bounds above therefore imply, uniformly over assignments,
\[
\left|
\widehat\tau^{\mathrm{SR,corr}}(w,h)
-\widehat\tau^{\mathrm{SR}}(w,h)
\right|
\le
\frac{C}{NW_{\cS}}\sum_gL_g
=O(N^{-1}).
\]
This proves the first claim of Proposition~\ref{prop:strat-general-large}. It remains to prove normality for the uncorrected statistic.

\paragraph{A coupling with independent assignments.}
To obtain the normal limit, we approximate the fixed-count estimator
by a sum of independent linkage-cell contributions under Bernoulli
assignment, after subtracting the projection on stratum treatment-count
fluctuations. We construct the two assignment vectors jointly so that
the approximation error is negligible at the root-$N$ scale.
Let $Z_i$ be independent Bernoulli variables with probability $p_r$ for $i\in\cS_r$, and let
\[
K_r=\sum_{i\in\cS_r}(Z_i-p_r).
\]
If $K_r>0$, choose $K_r$ of the treated units uniformly and switch them to control. If $K_r<0$, choose $-K_r$ controls uniformly and switch them to treatment. Make these choices independently across strata, conditional on $\mathbf Z$. The resulting vector $\mathbf D$ has exactly the required fixed-count law: its treated set has size $T_r^{\mathrm{str}}$ and its distribution is invariant to permutations within stratum $r$.

To identify the projection term, write
\[
F(\mathbf d)=W_{\cS}\widehat\tau^{\mathrm{SR}}(w,h;\mathbf d).
\]
Expanding the candidate sums and using $d_i^2=d_i$ gives the Bernoulli-centered representation
\begin{equation}
F(\mathbf Z)-W_{\cS}\tau_{w,\cS}
=
\sum_i\ell_i(Z_i-p_{r(i)})
+
\sum_{\substack{i<j\\g(i)=g(j)}}b_{ij}
(Z_i-p_{r(i)})(Z_j-p_{r(j)}),
\label{eq:strat-polynomial}
\end{equation}
where $r(i)$ denotes the stratum containing $i$. The coefficients are fixed. To bound them under the contribution-moment condition, use orthogonality of the centered Bernoulli monomials. Each coefficient is the expected product of its linkage-cell contribution with the corresponding monomial, divided by the monomial's second moment. The linkage-cell contribution is bounded by $CL_g$, and the second moments are bounded away from zero because treatment probabilities stay away from zero and one. Each coefficient is therefore bounded in absolute value by $CL_{g(i)}$. Bounded linkage-cell size and $\sum_gL_g^4=O(N)$ imply
\[
\sum_i|\ell_i|^4+
\sum_{\substack{i<j\\g(i)=g(j)}}|b_{ij}|^4=O(N).
\]
There are $O(N)$ quadratic terms, and each index belongs to a bounded number of them. Put $\bar\ell_r=(N_r^{\mathrm{str}})^{-1}\sum_{i\in\cS_r}\ell_i$.

\paragraph{Controlling the coupling error.}
We establish the $L^2$ approximation
\begin{equation}
F(\mathbf D)
=
F(\mathbf Z)-\sum_r\bar\ell_rK_r+R_N,
\qquad
\E[R_N^2]=O(N^{1/2})=o(N),
\label{eq:strat-coupling}
\end{equation}
where the expectation includes both stages of the coupling. We control separately the linear part and the two changes in the quadratic part.

For the linear part, the switched units form a simple random sample of size $|K_r|$ from the arm with excess units. This arm contains at least $\min(p_r,1-p_r)N_r^{\mathrm{str}}$ units. Conditional sampling variation of the sum of $\ell_i$ over switched units is therefore bounded by
\[
C\frac{|K_r|}{N_r^{\mathrm{str}}}\sum_{i\in\cS_r}\ell_i^2.
\]
Its expectation is $O(N^{1/2})$, because $\E|K_r|=O(N^{1/2})$ and $\sum_i\ell_i^2=O(N)$. The conditional mean differs from $-K_r\bar\ell_r$ by at most
\[
C\frac{|K_r|}{N_r^{\mathrm{str}}}
\left|\sum_{i\in\cS_r}(\ell_i-\bar\ell_r)(Z_i-p_r)\right|.
\]
The fourth moments of $K_r$ and of the centered sum in this display are $O(N^2)$. Cauchy--Schwarz thus bounds the second moment of the displayed error by $O(1)$. Summing over fixed $R$ proves the required bound for the linear part.

For the quadratic part, write $\delta_i=D_i-Z_i$. Its change is
\[
\sum_i\delta_i
\sum_{\substack{j\ne i\\g(j)=g(i)}}b_{ij}(Z_j-p_{r(j)})
+
\sum_{\substack{i<j\\g(i)=g(j)}}b_{ij}\delta_i\delta_j,
\]
where $b_{ji}=b_{ij}$. For the first sum, use the same conditional simple-random-sampling argument. The sum of squares of its inner coefficients is bounded by $C\sum_{i<j}b_{ij}^2=O(N)$. Its conditional sampling variance therefore has expectation $O(N^{1/2})$. To bound its conditional mean, note that the treated-arm and control-arm sums are each centered under the Bernoulli law: their indicators are $Z_i$ and $1-Z_i$, and each inner coefficient excludes $Z_i$. Each of these two sums has fourth moment $O(N^2)$, so the selected excess-arm sum has the same fourth-moment bound. This follows by expanding the fourth power, using independence across linkage cells and the bounded number of terms within each linkage cell. Multiplication by $K_r/N_r^{\mathrm{str}}$ again gives a second-moment bound of $O(1)$.

Finally, any specified $k\le4$ distinct units are all switched with probability $O(N^{-k/2})$. Conditional inclusion probabilities are bounded by a constant times products of $(|K_r|/N_r^{\mathrm{str}})^{k_r}$, and binomial moments give this unconditional bound. On expanding the square of the double-switch sum, identical edges contribute $O(N^{-1})\sum b_{ij}^2=O(1)$. Edges sharing one vertex contribute at most $O(N^{-3/2})$ times a sum of absolute coefficient products of order $N$. Disjoint edges contribute at most $O(N^{-2})(\sum|b_{ij}|)^2=O(1)$. Thus the double-switch sum has bounded second moment. These bounds establish \eqref{eq:strat-coupling}.

\paragraph{The normal limit.}
Subtracting $\sum_r\bar\ell_rK_r$ from \eqref{eq:strat-polynomial} replaces $\ell_i$ by $\ell_i-\bar\ell_{r(i)}$. The resulting expression is a sum of independent, centered linkage-cell contributions under the Bernoulli law, even when linkage cells cross strata. Its summed fourth moments are $O(N)$, since $\bar\ell_r=O(1)$ and linkage-cell sizes are bounded.

By the variance-limit condition evaluated at $(w,h)$, first-order equivalence, and the $L^2$ bound in \eqref{eq:strat-coupling}, the variance of this independent sum is asymptotic to $W_{\cS}^2V^{\mathrm{SR}}(w,h)/N$, which is of order $N$. Its Lyapunov ratio is therefore $O(N^{-1})$. The independent-array central limit theorem and \eqref{eq:strat-coupling} yield
\[
\sqrt N\{\widehat\tau^{\mathrm{SR}}(w,h)-\tau_{w,\cS}\}
\Rightarrow\mathcal N(0,V^{\mathrm{SR}}(w,h)).
\]
The already established equivalence transfers the limit to the corrected estimator. Setting $w_g=1$, $h_g=0$, and $W_{\cS}=N$ gives \eqref{eq:corr-first-order} and the normality conclusion in Proposition~\ref{prop:fixedcount-asymptotics}. \qed

\subsection{Proofs of Propositions~\ref{prop:strat-variance-a} and \ref{prop:strat-general-variance-a}}

We prove the general result in three steps. First, we express the fixed-count variance as the auxiliary Bernoulli variance net of its projection on the stratum treatment-count fluctuations. Second, we show that the proposed residual sum of squares reproduces this expression plus the heterogeneity gap. We then combine the pieces and obtain the unweighted and perfect-linkage cases.

For any assignment vector $\mathbf d$, write
\[
U_{gr}(\mathbf d)=\sum_{i\in\cC_g\cap\cS_r}(d_i-p_r),
\qquad U_{gr}=U_{gr}(\mathbf D).
\]
Use the auxiliary independent Bernoulli assignments $\mathbf Z$ from the proof of Proposition~\ref{prop:strat-general-large}, so $K_r=\sum_gU_{gr}(\mathbf Z)$. Define
\[
Q_g^w(\mathbf Z)=\sum_s\widetilde H_{gs}^{\,0}(w,h;\mathbf Z),
\qquad F(\mathbf Z)=\sum_gQ_g^w(\mathbf Z).
\]
Then $\E_ZQ_g^w=w_g\theta_g$. Define the coefficient in the linear projection of $F(\mathbf Z)$ on $K_r$ by
\begin{equation}
\beta_r^{\mathrm{pr}}
=\frac{\Cov_Z(F(\mathbf Z),K_r)}{N_r^{\mathrm{str}}p_r(1-p_r)}
=\frac{\sum_g\E_Z\{U_{gr}(\mathbf Z)Q_g^w(\mathbf Z)\}}
{N_r^{\mathrm{str}}p_r(1-p_r)}.
\label{eq:strat-projection-moment}
\end{equation}
The $K_r$ are mutually orthogonal. In \eqref{eq:strat-polynomial}, the quadratic terms are orthogonal to $K_r$, so
\[
\Cov_Z(F(\mathbf Z),K_r)=p_r(1-p_r)\sum_{i\in\cS_r}\ell_i,
\qquad \beta_r^{\mathrm{pr}}=\bar\ell_r.
\]
Thus the coupling in \eqref{eq:strat-coupling} removes precisely $\sum_r\beta_r^{\mathrm{pr}}K_r$ with $o(N)$ mean-squared error. Therefore
\begin{equation}
V_N^{\mathrm{SR}}(w,h)
=\frac{N}{W_{\cS}^2}\left[
\sum_g\Var_Z(Q_g^w)
-\sum_rN_r^{\mathrm{str}}p_r(1-p_r)(\beta_r^{\mathrm{pr}})^2
\right]+o(1).
\label{eq:strat-projection-variance}
\end{equation}

The fixed-count and Bernoulli probabilities of every bounded within-cell assignment pattern have ratio $1+O(N^{-1})$; the same is true for joint patterns from two distinct cells. Thus, for cell functions satisfying $|X_g|\le CL_g^q$, $q\in\{1,2\}$, the envelope bounds give
\[
\Var_D\left(\sum_gX_g\right)
\le C\left\{\sum_gL_g^{2q}+\frac1N\left(\sum_gL_g^q\right)^2\right\}=O(N).
\]
This supplies the laws of large numbers used below for cell functions and their products. In particular, exact unbiasedness gives
\[
\widehat\tau_r(h)-\tau_{\cS_r}=O_p(N^{-1/2})
\]
for each of the fixed number of strata. The decomposition
\[
Q_g^w-w_g\sum_rN_{gr}\tau_{\cS_r}
=(Q_g^w-w_g\theta_g)
+w_g\left(\theta_g-\sum_rN_{gr}\tau_{\cS_r}\right)
\]
separates a mean-zero random term from the deterministic heterogeneity term. Using independence of the $Q_g^w$ under the auxiliary design and then replacing the true centers by their estimates gives
\begin{equation}
\begin{aligned}
&\frac{N}{W_{\cS}^2}\sum_g
\left\{\widetilde H_g(w,h)-w_g\sum_rN_{gr}\widehat\tau_r(h)\right\}^2\\
&\quad=\frac{N}{W_{\cS}^2}\left[
\sum_g\Var_Z(Q_g^w)
+\sum_gw_g^2\left(\theta_g-\sum_rN_{gr}\tau_{\cS_r}\right)^2
\right]+o_p(1).
\end{aligned}
\label{eq:strat-centered-square-limit}
\end{equation}

The same laws of large numbers applied to $U_{gr}\widetilde H_g(w,h)$ imply
\[
\widehat b_r(w,h)=\beta_r^{\mathrm{pr}}+O_p(N^{-1/2}).
\]
They also give, for fixed $r,s$,
\[
\sum_gU_{gr}U_{gs}
=\1\{r=s\}N_r^{\mathrm{str}}p_r(1-p_r)+O_p(N^{1/2}),
\qquad
\sum_gU_{gr}w_gN_{gs}=O_p(N^{1/2}).
\]
By definition,
\[
\sum_gU_{gr}\widetilde H_g(w,h)
=N_r^{\mathrm{str}}p_r(1-p_r)\widehat b_r(w,h).
\]
Thus, on expanding \eqref{eq:strat-general-vhat-a}, the cross-product with $\widetilde H_g(w,h)$ removes two copies of the projection sum and the squared projection restores one. The preceding bounds make the products with the estimated stratum centers and the off-diagonal terms in the Gram matrix of the $U_{gr}$ negligible. Hence
\begin{equation}
\begin{aligned}
\widehat V_a(w,h)
={}&\frac{N}{W_{\cS}^2}\sum_g
\left\{\widetilde H_g(w,h)-w_g\sum_rN_{gr}\widehat\tau_r(h)\right\}^2\\
&-\frac{N}{W_{\cS}^2}\sum_rN_r^{\mathrm{str}}p_r(1-p_r)
\widehat b_r(w,h)^2+o_p(1).
\end{aligned}
\label{eq:strat-residual-expansion}
\end{equation}
Combining \eqref{eq:strat-projection-variance}--\eqref{eq:strat-residual-expansion} proves \eqref{eq:strat-general-vhat-a-limit}. The gap is a sum of squares. If effects are constant within strata, then $\theta_g=\sum_rN_{gr}\tau_{\cS_r}$, so it vanishes.

\paragraph{Special cases.}
Setting $w_g=1$ and $h_g=0$ gives \eqref{eq:strat-vhat-a-limit} and the estimator in \eqref{eq:strat-vhat-a}.
With perfect linkage in that unweighted special case, direct substitution gives $\widehat\tau_r=\overline Y_{1r}-\overline Y_{0r}$ and $\widehat b_r=\overline Y_{1r}/p_r+\overline Y_{0r}/(1-p_r)$. The residuals are the arm-centered outcomes divided by $p_r$ in treatment and by $-(1-p_r)$ in control, yielding the stratified Neyman expression stated in Appendix~\ref{app:fixed-count}. \qed

\subsection{Proofs of Propositions~\ref{prop:small-strata} and \ref{prop:strat-general-small}}

We prove the general result in two steps: use the covariance support to obtain the bias and root-$N$ rate, and then apply a bounded-degree dependency-graph CLT. The unweighted result follows at the end.

\paragraph{Consistency and the bias expansion.}
For a fixed intersection $(g,r)$, a potentially nonzero covariance with $(k,s)$ requires $N_{kr}>0$ and $N_{gs}>0$. There are at most $\overline N^{\mathrm{str}}$ choices of $k$ and $\overline M$ choices of $s$. The number of nonempty intersections is between $N/\min\{\overline N^{\mathrm{str}},\overline M\}$ and $N$. Therefore the covariance matrix has order $N$ potentially nonzero entries among order $N^2$ entries. To bound their sum without a uniform moment bound, use $2|\Cov_D(X,Y)|\leq\Var_D(X)+\Var_D(Y)$. Each contribution appears in only a bounded number of potentially nonzero entries, and Assumption~\ref{ass:strat-general-small} implies $\sum_{g,r}\E_D|\widetilde H_{gr}(w,h)|^2=O(N)$. Consequently,
\[
\Var_D\left\{\sum_{g,r}\widetilde H_{gr}(w,h)\right\}=O(N).
\]
Exact unbiasedness, $W_{\cS}\asymp N$, and Chebyshev's inequality prove the second line of \eqref{eq:strat-general-small-rate}. The uncorrected contributions are fixed bounded multiples of the corrected contributions, so they inherit the fourth-moment bound; the same covariance-support argument gives variance $O(N)$ for their sum. Combining this fact with \eqref{eq:strat-general-bias} proves the first line.

\paragraph{A normal limit with local dependence.}
Take the nonempty intersections as vertices and connect $(g,r)$ and $(k,s)$ whenever linkage cells $g$ and $k$ intersect a common stratum. Collections with no connecting edge depend on disjoint sets of independently randomized strata and are therefore independent. Each linkage cell intersects at most $\overline M$ strata; each such stratum meets at most $\overline N^{\mathrm{str}}$ linkage cells; and each of those cells supplies at most $\overline M$ contributions. The resulting dependency graph thus has degree bounded by $\overline N^{\mathrm{str}}\overline M^2$, independently of $N$.

It remains to verify a moment condition for a normal limit. Jensen's inequality and the average fourth-moment bound give
\[
\begin{aligned}
\sum_{g,r}\E_D|\widetilde H_{gr}(w,h)-w_g\theta_{gr}|^4&=O(N),\\
\sum_{g,r}\E_D|\widetilde H_{gr}(w,h)-w_g\theta_{gr}|^3&=O(N).
\end{aligned}
\]
The total variance is $W_{\cS}^2V_N^{\mathrm{SR}}(w,h)/N\asymp N$. Apply \citet[Theorem~2.4]{chenshao2004} with moment order three: closed graph neighborhoods of radii one, two, and three satisfy its local-dependence condition, and bounded graph degree bounds all required neighborhood sizes independently of $N$. Its normal-approximation error is therefore bounded by a constant times the sum of absolute third moments divided by the total variance to power $3/2$, which is $O(N^{-1/2})$. Hence
\[
\frac{\sum_{g,r}\{\widetilde H_{gr}(w,h)-w_g\theta_{gr}\}}
{\{W_{\cS}^2V_N^{\mathrm{SR}}(w,h)/N\}^{1/2}}
\Rightarrow\mathcal N(0,1).
\]
This proves Proposition~\ref{prop:strat-general-small}. Setting $w_g=1$, $h_g=0$, and $W_{\cS}=N$ gives \eqref{eq:small-strata-bias}--\eqref{eq:small-strata-rate} and \eqref{eq:small-strata-clt}, proving Proposition~\ref{prop:small-strata}. \qed

\subsection{Proofs of Propositions~\ref{prop:strat-variance-b} and \ref{prop:strat-general-variance-b}}

We prove the general result in four steps: derive the cross-stratum product identities, expose the heterogeneity gap under true centering, replace the true center by its estimate, and handle the nonnegative floor. The unweighted result follows at the end.

\begin{lemma}[Cross-stratum products with fixed weights and adjustment]
\label{lem:strat-products}
Under \eqref{eq:strat-nonempty}, for $r\ne s$,
\begin{equation}
\E_D\{\widetilde H_{gr}(w,h)\widetilde H_{gs}(w,h)\}
=2w_g^2\theta_{gr}\theta_{gs}.
\label{eq:strat-general-product-same}
\end{equation}
For $g\ne k$ and $r\ne s$, when all four intersections $(g,r),(k,r),(g,s),(k,s)$ are nonempty,
\begin{align}
\E_D\{\widetilde H_{gr}(w,h)\widetilde H_{ks}(w,h)\}
&=w_gw_k\{\theta_{gr}\theta_{ks}+a_{gr,ks}\theta_{kr}\theta_{gs}\},\notag\\
\E_D\{\widetilde H_{kr}(w,h)\widetilde H_{gs}(w,h)\}
&=w_gw_k\{\theta_{kr}\theta_{gs}+b_{gr,ks}\theta_{gr}\theta_{ks}\}.
\label{eq:strat-general-product-distinct}
\end{align}
In particular, under these respective conditions,
\[
C_{gr,gs}(w,h)=w_g^2\theta_{gr}\theta_{gs},
\qquad
C_{gr,ks}(w,h)=w_gw_ka_{gr,ks}\theta_{kr}\theta_{gs}.
\]
If $a_{gr,ks}b_{gr,ks}<1$, the nonsingular expression in \eqref{eq:strat-general-vhat-four} is unbiased for $C_{gr,ks}(w,h)$. Moreover, $a_{gr,ks}b_{gr,ks}=1$ exactly when both strata are covered exclusively by linkage cells $g$ and $k$.
\end{lemma}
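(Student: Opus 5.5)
We will reuse the factorization from the proof of Proposition~\ref{prop:strat-general-covariance}. Write $\widetilde H_{gr}(w,h)=w_gQ_{gr}\{\widetilde Y_g-h_g\}$ with $\widetilde Y_g-h_g=\E_D(\widetilde Y_g-h_g)+\sum_tF_{gt}$. Here $Q_{gr}$ and $F_{gr}$ are mean zero and depend only on stratum-$r$ assignments, and $\E_D[Q_{gr}F_{gr}]=\theta_{gr}$ by Proposition~\ref{prop:strat-general-finite}. Take $r\ne s$ and any cells $g,k$. Expand the product $\widetilde H_{gr}\widetilde H_{ks}$; strata are randomized independently, so a term survives only if $Q_{gr}$ is paired with an outcome term from stratum $r$ and $Q_{ks}$ with one from stratum $s$. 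The constant terms vanish against mean-zero multipliers. Terms pairing $Q_{gr}$ with an $s$-stratum $F$ also vanish, because they contain the factor $\E Q_{gr}=0$. This gives \eqref{eq:strat-cross-factorization}:
\[
\E_D[\widetilde H_{gr}\widetilde H_{ks}]=w_gw_k\{\theta_{gr}\theta_{ks}+\E_D[Q_{gr}F_{kr}]\E_D[Q_{ks}F_{gs}]\}.
\]
For $k=g$, both cross factors equal $\theta_{gr}$ and $\theta_{gs}$. This immediately gives \eqref{eq:strat-general-product-same} and $C_{gr,gs}=w_g^2\theta_{gr}\theta_{gs}$.

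For $g\ne k$, the key step is to compute $\E_D[Q_{gr}F_{kr}]$ when cells $g$ and $k$ are disjoint within stratum $r$. By \eqref{eq:fixedcount-cov}, for $j\in\cC_k\cap\cS_r$,
\[
\Cov_D(T_{gr},D_j)=-N_{gr}\,p_r(1-p_r)/(N_r^{\mathrm{str}}-1).
\]
Multiplying by the correction factor and dividing by $p_r(1-p_r)$ gives
\[
\E_D[Q_{gr}F_{kr}]=-\frac{N_{gr}}{N_r^{\mathrm{str}}-N_{gr}}\theta_{kr}.
\]
In the same way, $\E_D[Q_{ks}F_{gs}]=-N_{ks}\theta_{gs}/(N_s^{\mathrm{str}}-N_{ks})$. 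Their product is $a_{gr,ks}\theta_{kr}\theta_{gs}$, which yields the first line of \eqref{eq:strat-general-product-distinct}. Swapping the roles of $g$ and $k$ in stratum $r$ gives the coefficient $b_{gr,ks}$ and the second line. Subtracting the product of means $w_g\theta_{gr}\cdot w_k\theta_{ks}$ then gives $C_{gr,ks}$.

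For the estimator, define $X=\widetilde H_{kr}\widetilde H_{gs}$ and $Y=\widetilde H_{gr}\widetilde H_{ks}$, and set $u=\theta_{kr}\theta_{gs}$, $v=\theta_{gr}\theta_{ks}$. The two product identities say $\E X=w_gw_k(u+bv)$ and $\E Y=w_gw_k(v+au)$. Then $\E(X-bY)=w_gw_k(1-ab)u$. So whenever $ab<1$, the statistic $a(X-bY)/(1-ab)$ has mean $w_gw_ka\,u=C_{gr,ks}$.

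The remaining step, which is the main bookkeeping obstacle, is to characterize when $ab=1$. Set $x=N_{gr}/(N_r^{\mathrm{str}}-N_{gr})$ and $y=N_{kr}/(N_r^{\mathrm{str}}-N_{kr})$, and define the analogous quantities $x',y'$ for stratum $s$, so that $ab=(xy)(x'y')$. Because cells $g$ and $k$ are disjoint, $N_{gr}+N_{kr}\le N_r^{\mathrm{str}}$. Writing $A=N_{gr}$, $B=N_{kr}$ and $n=N_r^{\mathrm{str}}$, we have $n-A\ge B$ and $n-B\ge A$, hence $xy=AB/\{(n-A)(n-B)\}\le1$. Equality holds iff $A+B=n$, that is, iff stratum $r$ is covered exclusively by cells $g$ and $k$. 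The same argument applies to stratum $s$. Since both factors are positive and at most one, $ab=1$ exactly when both strata are covered exclusively by $g$ and $k$. The same inequalities give the bound $0<ab\le1$ stated after \eqref{eq:strat-ab}.
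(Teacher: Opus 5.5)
Your proof is correct and follows the paper's argument essentially step for step. It uses the cross-stratum factorization via $Q_{gr}$ and $F_{gr}$, computes $\E_D(Q_{gr}F_{kr})=-N_{gr}\theta_{kr}/(N_r^{\mathrm{str}}-N_{gr})$ from the fixed-count covariances, applies the linear inversion $a(X-bY)/(1-ab)$, and regroups $a_{gr,ks}b_{gr,ks}$ stratum by stratum, just as the paper does. Your explicit use of $N_{gr}+N_{kr}\le N_r^{\mathrm{str}}$ to show each stratum factor is at most one, with equality exactly when the two cells exhaust the stratum, spells out a step the paper only asserts.
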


\begin{proof}
We first compute the expected products of contributions from distinct
strata. For distinct cells, we then combine the two product identities
to recover the covariance whenever the resulting system is nonsingular,
and characterize when it is singular.
Use $Q_{gr}$ and $F_{gr}$ from the proof of Proposition~\ref{prop:strat-general-covariance}. In $\widetilde Y_g-h_g=\E_D(\widetilde Y_g-h_g)+\sum_tF_{gt}$, fixed $h_g$ changes only the constant. For distinct strata in the same cell, the only surviving term in the product is $2F_{gr}F_{gs}$, which gives \eqref{eq:strat-general-product-same}.
For distinct cells,
\[
\E_D(Q_{gr}F_{kr})
=-\frac{N_{gr}}{N_r^{\mathrm{str}}-N_{gr}}\theta_{kr}.
\]
Substitution into \eqref{eq:strat-cross-factorization} gives the first identity in \eqref{eq:strat-general-product-distinct}; interchanging the cells gives the second. Taking the second product minus $b_{gr,ks}$ times the first and multiplying by $a_{gr,ks}/(1-a_{gr,ks}b_{gr,ks})$ proves the inversion formula. Finally,
\[
a_{gr,ks}b_{gr,ks}
=\frac{N_{gr}N_{kr}}
{(N_r^{\mathrm{str}}-N_{gr})(N_r^{\mathrm{str}}-N_{kr})}
\frac{N_{gs}N_{ks}}
{(N_s^{\mathrm{str}}-N_{gs})(N_s^{\mathrm{str}}-N_{ks})}.
\]
Each factor is at most one, and equality holds precisely when the two cells exhaust the corresponding stratum.
\end{proof}

Use the groups $b$ defined in \eqref{eq:strat-general-gap-b}: strata covered exclusively by the same two linkage cells belong to the same group, and every remaining stratum forms its own group. Define the true-centered residual
\[
R_{gr}(w,h)=\widetilde H_{gr}(w,h)-w_gN_{gr}\tau_{w,\cS},
\qquad
\delta_{gr}(w)=w_g\{\theta_{gr}-N_{gr}\tau_{w,\cS}\},
\]
and let $\overline V_{b,0}(w,h)$ be $\widehat V_{b,0}(w,h)$ with each $\widehat R_{gr}(w,h)$ replaced by $R_{gr}(w,h)$. Since
\[
\E_D\{R_{gr}(w,h)R_{ks}(w,h)\}
=C_{gr,ks}(w,h)+\delta_{gr}(w)\delta_{ks}(w),
\]
the diagonal and within-stratum residual products contribute the squared stratum sums of the $\delta_{gr}(w)$. The singular cross-stratum residual products supply the ordered cross-products within each group $b$, while Lemma~\ref{lem:strat-products} makes every nonsingular uncentered product unbiased for its covariance. Consequently,
\begin{align}
\E_D\{\overline V_{b,0}(w,h)\}-V_N^{\mathrm{SR}}(w,h)
&=\frac{N}{W_{\cS}^2}\sum_b
\left\{\sum_{r\in b}\sum_g\delta_{gr}(w)\right\}^2\notag\\
&=\Delta_{b,N}(w)\ge0.
\label{eq:strat-small-gap-squares}
\end{align}
Thus the ordered-pair convention makes the relevant diagonal and cross-products form a sum of squares.

Each component contains only $O(N)$ products with uniformly bounded coefficients; positive values of $1-a_{gr,ks}b_{gr,ks}$ are bounded away from zero because the relevant counts are bounded integers. Joining two product terms when their operand-stratum sets overlap gives a bounded-degree dependency graph. Every contribution enters only a bounded number of products. Using $2X^2Y^2\leq X^4+Y^4$, the average fourth-moment condition bounds the sum of their second moments by $O(N)$; the same is true for residual products because the condition bounds the average fourth moments of the means and the overall target. The bounded-degree covariance bound then yields
\[
\overline V_{b,0}(w,h)-\E_D\{\overline V_{b,0}(w,h)\}=O_p(N^{-1/2}).
\]
It remains to replace the true center by its estimate. Let $e_N=\widehat\tau^{\mathrm{SR,corr}}(w,h)-\tau_{w,\cS}$. Every residual occurs in a bounded number of products, so, for a constant $C$ independent of $N$,
\[
\left|\widehat V_{b,0}(w,h)-\overline V_{b,0}(w,h)\right|
\le C\frac{N}{W_{\cS}^2}
\left\{|e_N|\sum_{g,r}|R_{gr}(w,h)|+Ne_N^2\right\}
=O_p(N^{-1/2}),
\]
because $N/W_{\cS}^2=O(N^{-1})$, $N^{-1}\sum_{g,r}|R_{gr}(w,h)|=O_p(1)$, and Proposition~\ref{prop:strat-general-small} gives $e_N=O_p(N^{-1/2})$. Combining these bounds with \eqref{eq:strat-small-gap-squares} proves \eqref{eq:strat-general-vhat-b-limit}.

Finally, write $\widehat X_N=\widehat V^{(3)}(w,h)+\widehat V^{(4)}(w,h)$. Since
\[
\widehat V^{(1)}(w,h)+\widehat V^{(2)}(w,h)
=\frac{N}{W_{\cS}^2}\sum_r
\left\{\sum_g\widehat R_{gr}(w,h)\right\}^2\ge0,
\]
$\widehat V_b(w,h)$ is nonnegative and
\[
\widehat V_b(w,h)-\widehat V_{b,0}(w,h)=(-\widehat X_N)_+.
\]
Because it cannot be smaller than $\widehat V_{b,0}(w,h)$, the preceding expansion also establishes asymptotic conservativeness.
Under homogeneous effects, $\delta_{gr}(w)=0$ and $\widehat X_N=X_N+O_p(N^{-1/2})$, where $X_N$ is the corresponding normalized sum of cross-stratum covariances. Those covariances are nonnegative because $w_g>0$, so $X_N\ge0$ and $(-\widehat X_N)_+=O_p(N^{-1/2})$. This proves homogeneous-effect consistency. Setting $w_g=1$, $h_g=0$, and $W_{\cS}=N$ gives Proposition~\ref{prop:strat-variance-b}. \qed

\subsection{Proof of Proposition~\ref{prop:strat-estimated-adjustment}}

We show that replacing the fixed adjustment $h^*$ by its estimate changes
the point estimators by $O_p(N^{-1})$ and the variance estimators by
$O_p(N^{-1/2})$. These bounds preserve the root-$N$ limits of the point
estimators and the probability limits of the variance estimators.
We first establish the point-estimator bound and then treat the two
variance estimators.

\paragraph{Point estimators.}
Write
\[
\dot h_g^*=\left.\frac{\partial h(x_g;\beta)}{\partial\beta}\right|_{\beta=\beta^*}.
\]
The smoothness condition gives the uniform expansion
\begin{equation}
\widehat h_g-h_g^*=\dot h_g^{*\prime}(\widehat\beta-\beta^*)+r_g,
\qquad
\sup_g|r_g|\le C\|\widehat\beta-\beta^*\|^2=O_p(N^{-1}).
\label{eq:strat-adjustment-taylor}
\end{equation}
Recall that $U_{gr}=T_{gr}-p_rN_{gr}$. Let
\[
d_{gr}^{\mathrm{corr}}
=\frac{N_r^{\mathrm{str}}-1}{N_r^{\mathrm{str}}-N_{gr}},
\qquad d_{gr}^{0}=1,
\]
and define
\[
G_N^q=\sum_{g,r}w_gd_{gr}^q
\frac{U_{gr}}{p_r(1-p_r)}\dot h_g^*,
\qquad q\in\{\mathrm{corr},0\}.
\]
For $i\in\cS_r$, the coefficient on $D_i-p_r$ in this sum is
\[
\alpha_{ir}^q=\frac{w_{g(i)}d_{g(i)r}^q}{p_r(1-p_r)}\dot h_{g(i)}^*.
\]
These coefficients are uniformly bounded under either asymptotic regime. The complete-randomization covariance formula therefore gives, with $\bar\alpha_r^q=(N_r^{\mathrm{str}})^{-1}\sum_{i\in\cS_r}\alpha_{ir}^q$,
\[
\E_D\|G_N^q\|^2
=\sum_rp_r(1-p_r)\frac{N_r^{\mathrm{str}}}{N_r^{\mathrm{str}}-1}
\sum_{i\in\cS_r}\|\alpha_{ir}^q-\bar\alpha_r^q\|^2
=O(N).
\]
Thus $G_N^q=O_p(\sqrt N)$. Substitution of \eqref{eq:strat-adjustment-taylor} into \eqref{eq:strat-general-estimators} yields
\[
\widehat\tau^{\mathrm{SR},q}(w,\widehat h)
-\widehat\tau^{\mathrm{SR},q}(w,h^*)
=-\frac1{W_{\cS}}(G_N^q)'(\widehat\beta-\beta^*)
+O_p(\|\widehat\beta-\beta^*\|^2)
=O_p(N^{-1}).
\]
For the Taylor remainder, we used $W_{\cS}\asymp N$ and $\sum_{g,r}|w_gd_{gr}^qU_{gr}/\{p_r(1-p_r)\}|=O(N)$. This proves \eqref{eq:strat-estimated-adjustment-rate} and transfers the fixed-$h^*$ root-$N$ limits by Slutsky's theorem.

\paragraph{Variance estimators.}
Let
\[
\Gamma_{gr}=\widetilde H_{gr}(w,\widehat h)-\widetilde H_{gr}(w,h^*).
\]
The uniform Taylor expansion and bounded contribution coefficients imply
\[
\sup_{g,r}|\Gamma_{gr}|=O_p(N^{-1/2}),
\qquad
\sum_{g,r}\Gamma_{gr}^2=O_p(1).
\]
For $\widehat V_a$, let $A_g(h)$ denote the bracketed cell residual in \eqref{eq:strat-general-vhat-a}. Since the number of strata is fixed, the definitions in \eqref{eq:strat-general-large-centers}--\eqref{eq:strat-general-large-projection} and the preceding bounds give
\[
\max_r|\widehat\tau_r(\widehat h)-\widehat\tau_r(h^*)|=O_p(N^{-1/2}),
\qquad
\max_r|\widehat b_r(w,\widehat h)-\widehat b_r(w,h^*)|=O_p(N^{-1/2}).
\]
Bounded cell size then implies
\[
\sum_g\{A_g(\widehat h)-A_g(h^*)\}^2=O_p(1),
\qquad
\sum_gA_g(h^*)^2=O_p(N).
\]
Cauchy--Schwarz and $N/W_{\cS}^2=O(N^{-1})$ therefore give
\[
\widehat V_a(w,\widehat h)-\widehat V_a(w,h^*)=O_p(N^{-1/2}).
\]
For $\widehat V_b$, the residual difference equals $\Gamma_{gr}$ minus $w_gN_{gr}$ times the point-estimator difference in \eqref{eq:strat-estimated-adjustment-rate}. Hence
\[
\sum_{g,r}\{\widehat R_{gr}(w,\widehat h)-\widehat R_{gr}(w,h^*)\}^2=O_p(1).
\]
Each component contains only $O(N)$ products with bounded coefficients, while the fixed-$h^*$ operands have summed squares $O_p(N)$. Another application of Cauchy--Schwarz gives an $O_p(N^{-1/2})$ change in each component of $\widehat V_{b,0}$ and in $\widehat V^{(1)}+\widehat V^{(2)}$. Since the maximum map is Lipschitz, the same rate holds for $\widehat V_b$. This proves \eqref{eq:strat-estimated-adjustment-variance}. \qed

\section{Alternative methods under ambiguous linkage}
\label{app:alternative-methods}

We use three settings to compare our estimators with selected linkage and post-linkage procedures. The examples are deliberately constructed to satisfy our assignment and candidate-set conditions while departing from particular data-structure or modeling restrictions maintained by some existing methods. We distinguish three layers of comparison: (i) direct implementations based on specifications presented in the original papers; (ii) alternative specifications within the original model classes that use additional analyst choices not prescribed in those papers; and (iii) extensions that would require broadening the original frameworks or developing guarantees outside their stated conditions. We focus on the first two layers and do not attempt to exhaust all admissible specifications within those frameworks or  possible extensions of the third type.

\subsection{Data-generating processes and simulations}

For all examples, the initial marginal probability of assigning experimental unit \(i\) its true outcome record is \(q_i=1/M_{g(i)}\). Outcome-informed methods may subsequently update these probabilities using outcomes, as specified by the method. In all three settings, \(\varepsilon_i\overset{\mathrm{iid}}{\sim}\mathcal N(0,1)\) is drawn before treatment assignment and held fixed across unit \(i\)'s two potential outcomes and across Monte Carlo replications. The same realized potential outcomes are used for every method. With bounded linkage-cell sizes and the fixed type-specific means and effects below, these Normal draws satisfy the average moment conditions almost surely along the sequence of finite populations; a uniform bound on each realized outcome is not required.

\setcounter{linkageexample}{0}
\refstepcounter{linkageexample}\label{ex:alt-main}
\paragraph{Example \thelinkageexample: effects vary with linkage ambiguity.}
All administrative records belong to experimental units. There are
$2J$ singleton linkage cells and $J$ doubleton linkage cells, with
independent Bernoulli$(1/2)$ treatment assignment:
\begin{center}
\small
\begin{tabular}{@{}lrrrrr@{}}
\toprule
Type & Linkage cells & $M_g=N_g^{\mathrm{cell}}$
& Experimental units & $Y_i(0)$ & $\tau_i$ \\
\midrule
$L$ & $2J$ & $1$ & $2J$ & $1+\varepsilon_i$ & $0$ \\
$H$ & $J$ & $2$ & $2J$ & $1+\varepsilon_i$ & $1$ \\
\bottomrule
\end{tabular}
\end{center}
Here $M=N=4J$ and $\sate=1/2$. Uniform allocation gives $q_i=1$
in type $L$ and $q_i=1/2$ in type $H$. Treatment effects therefore vary
with linkage ambiguity, and unambiguous matches contain only zero-effect
units.

\refstepcounter{linkageexample}\label{ex:alt-outsiders}
\paragraph{Example \thelinkageexample: nonparticipants among the candidates.}
Each of $J$ linkage cells contains two experimental units and one
nonparticipant:
\begin{center}
\small
\begin{tabular}{@{}lrrr@{}}
\toprule
Record & $Y_i(0)$ & $Y_i(1)$ & $\tau_i$ \\
\midrule
Participant $A$ & $\varepsilon_i$ & $\varepsilon_i$ & $0$ \\
Participant $B$ & $2+\varepsilon_i$ & $3+\varepsilon_i$ & $1$ \\
Nonparticipant & $10+\varepsilon_i$ & $10+\varepsilon_i$ & $0$ \\
\bottomrule
\end{tabular}
\end{center}
The record types are not observed by the analyst. The two participants
receive independent Bernoulli$(1/2)$ assignments, while the
nonparticipant is never treated. Thus $M=3J$, $N=2J$, $M_g=3$,
$N_g^{\mathrm{cell}}=2$, $q_i=1/3$, and $\sate=1/2$. False candidates
can now come from outside the experimental population and have a
different outcome distribution. Since all linkage cells have the same
size, conditioning on cell size does not distinguish the participant
types. 

\Needspace{13\baselineskip}
\refstepcounter{linkageexample}\label{ex:alt-stratified}
\paragraph{Example \thelinkageexample: fixed treatment counts in small strata.}
All administrative records belong to experimental units. There are
$J$ strata of each type below, and every linkage cell lies within a
stratum. Exactly half the units in each stratum receive treatment,
with independent randomization across strata:
\begin{center}
\small
\begin{tabular}{@{}lrrrlrr@{}}
\toprule
Type & Strata & $N_r^{\mathrm{str}}$ & $T_r^{\mathrm{str}}$
& Linkage cells per stratum & $Y_i(0)$ & $\tau_i$ \\
\midrule
$A$ & $J$ & $2$ & $1$ & Two singletons & $1+\varepsilon_i$ & $0$ \\
$B$ & $J$ & $4$ & $2$ & Two doubletons & $1+\varepsilon_i$ & $1$ \\
\bottomrule
\end{tabular}
\end{center}
Here $M=N=6J$ and $\sate=2/3$. Fixing treatment counts makes a
false candidate's assignment, and hence its outcome, depend on the
recipient's assignment. Our fixed-count corrected estimator is
unbiased and consistent for $2/3$.

\paragraph{Monte Carlo comparison.} We run Monte Carlo simulations with $J=5{,}000$ and $10{,}000$ replications for each example. We draw the potential outcomes once per example and hold that finite population fixed throughout the experiment. Each replication draws a new treatment assignment, shared by all methods, and the linkage or imputation draws required by each method. Table~\ref{tab:alt-methods} reports bias relative to the fixed-population SATE and the empirical standard error across replications. For methods that supply links rather than treatment-effect estimates, we use completed-file mean contrasts. 
We evaluate Bayesian linkage procedures using posterior effect summaries. 
Detailed descriptions of the implemented estimators, including specification choices and computational settings, are provided alongside the code in the replication archive \href{https://www.dropbox.com/scl/fi/rhwdelrw0z4t9fpay33jj/Replication.zip?rlkey=p7q6sssjygu2odyu392grz84z&dl=0}{(link)}.

\begin{table}[p]
\centering
\scriptsize
\setlength{\tabcolsep}{3pt}
\renewcommand{\arraystretch}{1.08}
\begin{tabular}{@{}>{\raggedright\arraybackslash}p{3.15in}*{6}{c}@{}}
\toprule
& \multicolumn{2}{c}{Example 1} & \multicolumn{2}{c}{Example 2} & \multicolumn{2}{c}{Example 3} \\
\cmidrule(lr){2-3}\cmidrule(lr){4-5}\cmidrule(l){6-7}
Estimator & Bias & \shortstack{Standard\\error} & Bias & \shortstack{Standard\\error} & Bias & \shortstack{Standard\\error} \\
\midrule
\multicolumn{7}{@{}l}{\textbf{Panel 1: Our estimators (fixed-count corrected in Example 3) }}  \\
Baseline estimator 
  & 0.0002 & 0.0369 & 0.0042 & 0.2469 & 0.0000 & 0.0221 \\
Weighted estimator
  & -0.1664 & 0.0295 & 0.0042 & 0.2469 & -0.1667 & 0.0186 \\
Adjusted estimator: intercept only
  & -0.0003 & 0.0236 & 0.0001 & 0.0351 & 0.0000 & 0.0221 \\
Adjusted estimator: intercept and $M_g$
  & -0.0001 & 0.0186 & 0.0001 & 0.0351 & 0.0000 & 0.0221 \\
Weighted-and-adjusted estimator: intercept and $M_g$
  & -0.1667 & 0.0170 & 0.0001 & 0.0351 & -0.1667 & 0.0186 \\
\midrule
\multicolumn{7}{@{}l}{\textbf{Panel 2: Direct implementations of procedures proposed in other papers}} \\
\citet{lahiri2005}: transformed regression
  & -0.1667 & 0.0173 & \multicolumn{2}{c}{N/A} & -0.2667 & 0.0160 \\
\citet{chambers2009}: ratio-adjusted regression
  & -0.1668 & 0.0211 & -0.1648 & 0.3152 & -0.2669 & 0.0236 \\
\citet{chipperfield2019}: exchangeable working matrix
  & -0.1667 & 0.0209 & \multicolumn{2}{c}{N/A} & -0.3336 & 0.0203 \\
\citet{kimchambers2012a}: corrected estimating equation
  & -0.1668 & 0.0209 & -0.1651 & 0.1739 & -0.2669 & 0.0233 \\
\citet{kimchambers2012b}: corrected estimating equation
  & -0.1669 & 0.0208 & -0.1660 & 0.1732 & -0.2667 & 0.0234 \\
\citet{hanlahiri2019}: linear estimating-equation framework
  & -0.1667 & 0.0173 & \multicolumn{2}{c}{N/A} & -0.2667 & 0.0160 \\
\citet{chamberssilva2020}: pooled linear equation
  & -0.1666 & 0.0208 & \multicolumn{2}{c}{N/A} & -0.2669 & 0.0233 \\
\citet{hofzwinderman2012}: unfiltered all-pairs WLS
  & -0.2501 & 0.0133 & -0.3333 & 0.0117 & -0.4445 & 0.0096 \\
\citet{hofzwinderman2015}: pair-mixture pseudo-likelihood
  & -0.1616 & 0.0178 & \multicolumn{2}{c}{No finite fit} & -0.3294 & 0.0144 \\
\citet{sadinle2017}: linkage and completed-file contrast
  & -0.2501 & 0.0133 & -0.3333 & 0.0117 & -0.4445 & 0.0096 \\
\citet{steorts2015}: linkage and completed-file contrast
  & -0.2501 & 0.0133 & -0.3333 & 0.0117 & -0.4445 & 0.0096 \\
\citet{shan2021}: treatment-label imputation
  & -0.2501 & 0.0133 & -0.3750 & 0.0088 & -0.4445 & 0.0096 \\
\citet{zhangtuoto2021}: pseudo-OLS correction
  & -0.1668 & 0.0210 & 0.0020 & 0.2589 & -0.3333 & 0.0202 \\
\citet{slawskietal2025}: response mixture
  & -0.2910 & 0.0198 & \multicolumn{2}{c}{N/A} & -0.4361 & 0.0181 \\
\citet{bukkeslawski2025}: donor mixture
  & -0.2911 & 0.0201 & \multicolumn{2}{c}{N/A} & -0.4363 & 0.0182 \\
\citet{hureiter2027}: post-linkage multiple imputation
  & -0.2500 & 0.0140 & \multicolumn{2}{c}{No finite fit} & -0.4428 & 0.0106 \\
\citet{goldstein2012}: standard multiple imputation
  & -0.4999 & 0.0210 & \multicolumn{2}{c}{N/A} & -0.6667 & 0.0210 \\
\citet{goldstein2012}: prior-informed multiple imputation
  & -0.3711 & 0.0167 & \multicolumn{2}{c}{N/A} & -0.5521 & 0.0147 \\
\citet{wortman2018}: MEV/MEDOV linkage selection
  & -0.2501 & 0.0154 & \multicolumn{2}{c}{N/A} & -0.4445 & 0.0122 \\
\midrule
\multicolumn{7}{@{}l}{\textbf{Panel 3: Within-framework specifications using appropriate analyst choices}} \\
Linkage-adjusted regression with control means and treatment effects varying by linkage-cell type
  & -0.0002 & 0.0244 & -0.1660 & 0.1732 & 0.0000 & 0.0333 \\
Post-linkage mixture with type-specific response and false-match models
  & -0.0021 & 0.0231 & \multicolumn{2}{c}{N/A} & -0.2143 & 0.0250 \\
\bottomrule
\end{tabular}
\caption{Monte Carlo bias and empirical standard errors.}
\label{tab:alt-methods}
\begin{minipage}{\linewidth}
\footnotesize
\textit{Note:} Results use $J=5{,}000$ and $R=10{,}000$ assignments to one fixed potential-outcome population per example. Bias is the mean estimate minus the SATE ($1/2$, $1/2$, and $2/3$); the empirical standard error is the sample standard deviation across replications. Panel 1 uses fixed-count corrections in Example 3. Panel 3 uses the interacted and standardized Kim--Chambers (2012b) regression ($G=X'$) and the type-specific integrated Slawski et al.\ mixture. N/A denotes inapplicability of the selected specification. Entries labeled ``No finite fit'' indicate that the specified likelihood is unbounded and therefore has no finite maximizer.
\citet{sadinle2017} and \citet{steorts2015} develop Bayesian linkage models rather than treatment-effect estimators. For these methods, we report an analyst-specified downstream analysis: the posterior average of the treated-minus-control mean difference in completed datasets. 
\end{minipage}
\end{table}

\subsection{Results}

Panel 1 of Table~\ref{tab:alt-methods} presents our proposed estimators. The unweighted options recover the SATE in all examples. The weighted ones fail to recover the SATE in Examples 1 and 3.\footnote{In Example 2, there is no variation in $M_g$, so the weighted and unweighted estimators are the same.} Consistent with our theoretical results from Section~\ref{sec:asymptotics}, those estimators may be consistent for a different weighted average of treatment effects in exchange for greater precision. Panel 2 of Table~\ref{tab:alt-methods} presents direct implementations of procedures proposed in other papers.  Overall, those implementations fail  to recover the SATE in at least one example. The reasons include (i) common treatment slopes and
pooled linkage corrections that incorrectly weight heterogeneous
effects, (ii) extrapolation from unambiguous matches that are not
representative of the experimental sample, (iii) mixing treatment
identities through incorrect matches, (iv) false candidates drawn
from a nonexperimental population with a different outcome
distribution, and (v) dependence between false-candidate outcomes
and the recipient's assignment under fixed treatment
counts.\footnote{We omit \citet{zhangtuoto2024} from the table
because their method addresses categorical outcomes. In the
binary version of Example 1, with $Y_i(0)=0$ and
$Y_i(1)=\mathbf 1\{M_g=2\}$, their pooled linkage-rate correction
yields an estimated risk difference converging to $1/3$, rather
than the SATE of $1/2$.}

\subsection{Alternative specifications within the original frameworks}

Some specifications within the frameworks considered above can recover the SATE when the analyst makes appropriate choices of regressors, response models, linkage-error components, or standardization. Panel 3 of  Table~\ref{tab:alt-methods} presents two such specifications. In each case, the particular choices that recover the SATE in some of our examples are permitted by the original framework but are not explicitly recommended or prescribed in the corresponding papers.

\paragraph{Linkage-adjusted linear regression methods.}
In Examples 1 and 3, the linkage-adjusted linear regression
approaches can recover the SATE when both the conditional mean
under control and the treatment effect are allowed to vary by
linkage-cell type
\citep{lahiri2005,chambers2009,kimchambers2012a,
kimchambers2012b,hanlahiri2019,chamberssilva2020}.
For these examples, consider
\[
Y_i=\alpha+\beta D_i+\gamma M_{g(i)}
+\delta D_i\bigl(M_{g(i)}-\bar M\bigr)+u_i,
\qquad
\bar M=\frac1N\sum_{i\in\cS}M_{g(i)}.
\]
The specification must be fitted using the appropriate linkage
transformation or corrected estimating equation, with the
linkage adjustment applied to all regressors, including the
interaction. Centering the interaction makes $\beta$ the average
fitted treatment effect across experimental units, so the
treatment coefficient itself recovers the SATE in both examples.
Merely controlling for linkage-cell size without the interaction
is insufficient. Although this specification lies within the
original regression frameworks, the choice of the relevant
interaction and the standardization needed to target the SATE
are not prescribed by those papers.

Those alternatives, however, do not generally solve the problem in Example 2. Some methods do not apply to settings with $N<M$.
The original formulation of \citet{lahiri2005} assumes that the
outcome and covariate files cover the same individuals, requiring
$N=M$ in our notation, and therefore does not apply directly.
The framework of \citet{hanlahiri2019} allows unequal file sizes,
but assumes that the units represented in the outcome file are
contained in the covariate register. This condition also fails
for the two files in Example 2, where some candidate outcomes
belong to nonparticipants without counterparts in the
experimental covariate file.

Other methods allow for $N<M$. However, their direct
implementations do not recover the SATE in Example~2.
The estimating equation in \citet{chambers2009} separates the expected outcome
contribution of nonparticipants from that of participants, but
the former contribution is unobserved. Section 6.2 proposes
approximating it using either the mean outcome of records left
unmatched or a probability-weighted combination of linked and
unmatched outcomes. In Example 2, both approximations, when used
with the linear estimator in equation (86) of that paper, yield
a treatment coefficient converging to $1/3$, rather than the
SATE of $1/2$. Records left unmatched are not necessarily
nonparticipants, and these approximations retain variation
induced by participants' treatment assignments.

The sample-to-register formulations of
\citet{kimchambers2012a,kimchambers2012b} use weighted sample
quantities to approximate population totals. In Example 2,
applying their sample-expansion corrections with weights
$M_g/N_g^{\mathrm{cell}}=3/2$ likewise yields a treatment
coefficient converging to $1/3$. The required representativeness
does not hold for the experimental participants relative to all
candidate records. Interactions with linkage-cell size alone
cannot resolve this issue, since all linkage cells have the
same size.

\paragraph{Post-linkage mixture models.}
The mixture framework of \citet{slawskietal2025}, particularly
its informative-linkage extension in \citet{bukkeslawski2025},
can accommodate the heterogeneity in Example 1 through correctly
specified type-specific response and false-match models. However, this
modeling flexibility does not extend automatically to the other
settings considered here. In Example 3, fixed treatment counts
make a false candidate's outcome depend on the recipient's
assignment, so validity would require a design-aware false-match
component. In Example 2, nonparticipant candidates may have a
different outcome distribution from the experimental population,
requiring additional information or modeling of that population.
Neither extension is supplied by the original papers.

\section{Primitive conditions for weighting by inverse linkage-cell size}
\label{app:weighting-primitives}

This section provides a simple stochastic benchmark under which the
asymptotic proportionality condition in
\eqref{eq:homoskedastic-benchmark} holds almost surely. The purpose is
not to impose a sampling model in the main analysis, but to illustrate
primitive conditions that can rationalize weighting by inverse linkage-cell size.

Consider an i.i.d. sequence of linkage cells indexed by $g$. Let $M_g$
have fixed finite support, with
\[
\Pr(M_g=m)>0
\]
for every $m$ in its support. Suppose each represented linkage cell
contains exactly one experimental unit, so that $N_g^{\mathrm{cell}}=1$. Conditional on
$M_g=m$, let the baseline potential outcomes
\[
Y_{g1}(0),\ldots,Y_{gm}(0)
\]
be i.i.d. draws from a common distribution that does not depend on $m$,
with
\[
\E[Y_{gj}(0)]=\mu,
\qquad
\Var\{Y_{gj}(0)\}=\sigma^2\in(0,\infty).
\]
Finally, suppose treatment effects are homogeneous:
\[
Y_{gi}(1)-Y_{gi}(0)=\tau
\]
for every experimental unit.

For a linkage cell of size $m$, define
\[
B_g=\sum_{j=1}^{m}Y_{gj}(0).
\]
When $N_g^{\mathrm{cell}}=1$, the linkage-cell-level variance contribution under a fixed
adjustment $h(m)$ is
\[
V_g\{h(m)\}
=
\frac{
\{B_g-h(m)+(1-p)\tau\}^2
}{
p(1-p)
}.
\]
The variance-minimizing adjustment conditional on linkage-cell size is
\[
h^*(m)
=
\E[B_g\mid M_g=m]+(1-p)\tau.
\]
Hence
\[
V_g\{h^*(m)\}
=
\frac{
\{B_g-\E[B_g\mid M_g=m]\}^2
}{
p(1-p)
}.
\]

\begin{proposition}[A primitive benchmark for weighting by inverse linkage-cell size]
\label{prop:inverse-k-primitives}
Under the conditions above, for every $m$ in the support of $M_g$,
\[
\frac{1}{N_m}
\sum_{g:M_g=m}
V_g\{h^*(m)\}
\overset{a.s.}{\longrightarrow}
\frac{m\sigma^2}{p(1-p)},
\]
where $N_m$ is the number of represented linkage cells with $M_g=m$.
Consequently, the asymptotic proportionality condition in
\eqref{eq:homoskedastic-benchmark} holds almost surely.
\end{proposition}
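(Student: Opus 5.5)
Under $h^*(m)$, each summand is $V_g\{h^*(m)\}=\{B_g-\E[B_g\mid M_g=m]\}^2/\{p(1-p)\}$. Within the category $M_g=m$, these summands are i.i.d. So the plan is to apply the strong law of large numbers to the subsequence of cells with $M_g=m$.

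\textbf{Step 1 (the conditional variance).} Conditional on $M_g=m$, the vector $Y_{g1}(0),\ldots,Y_{gm}(0)$ is i.i.d. with mean $\mu$ and variance $\sigma^2$. Hence $\E[B_g\mid M_g=m]=m\mu$ and
\[
\E\bigl[\{B_g-m\mu\}^2\mid M_g=m\bigr]=\Var(B_g\mid M_g=m)=m\sigma^2 .
\]
This expectation is finite because $\sigma^2<\infty$.

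\textbf{Step 2 (i.i.d. structure within a category).} Since the cells are i.i.d., the sequence of cells with $M_g=m$, taken in their original order, is an i.i.d. sequence. Its common law is the conditional law given $M_g=m$. This follows from the standard fact that selecting the indices where an i.i.d. sequence falls in a fixed set of positive probability yields an i.i.d. sequence with the conditioned law. The variables $X_g=\{B_g-m\mu\}^2/\{p(1-p)\}$ along this subsequence are therefore i.i.d., nonnegative, and integrable, with mean $m\sigma^2/\{p(1-p)\}$. Because $N_g^{\mathrm{cell}}=1$, $N_m$ equals the number of such cells among the first $G_N$ represented cells.

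\textbf{Step 3 (the limit).} By the strong law, $\Pr(M_g=m)>0$ implies $N_m\to\infty$ almost surely. Kolmogorov's SLLN applied to the subsequence, evaluated along the random but a.s.\ divergent indices $N_m$, then gives $N_m^{-1}\sum_{g:M_g=m}X_g\to m\sigma^2/\{p(1-p)\}$ almost surely. Since the support is finite, the convergence holds simultaneously for all $m$ on one probability-one event. This is exactly \eqref{eq:homoskedastic-benchmark} with $c=\sigma^2/\{p(1-p)\}>0$. Proposition~\ref{prop:efficiency} then gives $w^*(m)\propto1/m$.

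The only point needing care is Step 2, the i.i.d. property of the selected subsequence. I would justify it either by the strong Markov property for i.i.d. sequences or, more simply, by working directly with $\sum_{g\le G}X_g\1\{M_g=m\}/\sum_{g\le G}\1\{M_g=m\}$. In that form the SLLN applies to the numerator and the denominator separately, with limits $\Pr(M_g=m)\,m\sigma^2/\{p(1-p)\}$ and $\Pr(M_g=m)$. The ratio argument avoids the subsequence issue entirely, so I would present that version.
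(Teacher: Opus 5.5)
Your proposal is correct and follows the paper's proof. Both compute $\Var(B_g\mid M_g=m)=m\sigma^2$, so that $V_g\{h^*(m)\}$ has conditional mean $m\sigma^2/\{p(1-p)\}$, and then apply the strong law within each size category using $N_m\to\infty$ almost surely. Your ratio form $\sum_{g\le G}X_g\1\{M_g=m\}/\sum_{g\le G}\1\{M_g=m\}$ simply spells out more carefully the within-category SLLN step that the paper states directly.
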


\begin{proof}
We compute the expected variance contribution within each linkage-cell-size
category and then apply the strong law of large numbers to obtain the
stated limit.
Conditional on $M_g=m$,
\[
\Var(B_g\mid M_g=m)
=
\Var\left\{\sum_{j=1}^{m}Y_{gj}(0)\right\}
=
m\sigma^2.
\]
Therefore,
\[
\E\left[
V_g\{h^*(m)\}
\mid M_g=m
\right]
=
\frac{m\sigma^2}{p(1-p)}.
\]
Because linkage cells are i.i.d. and $\Pr(M_g=m)>0$, the number of linkage cells with
$M_g=m$ diverges almost surely. The strong law of large numbers applied
within each linkage-cell-size category then gives
\[
\frac{1}{N_m}
\sum_{g:M_g=m}
V_g\{h^*(m)\}
\overset{a.s.}{\longrightarrow}
\frac{m\sigma^2}{p(1-p)}.
\]
\end{proof}

The proportionality result relies on the benchmark restrictions above.
If multiple experimental units can occupy the same linkage cell,
$V_g(h)$ generally contains additional terms depending on $N_g^{\mathrm{cell}}$. Likewise,
heterogeneous treatment effects introduce additional variation that need
not scale proportionally with $M_g$. The main results of the paper do not
require these restrictions; they are used here only to provide primitive
conditions under which the $1/M_g$ weighting rule emerges naturally.


\begin{thebibliography}{99}

\bibitem[Bukke and Slawski(2025)]{bukkeslawski2025}
Bukke, Priyanjali, and Martin Slawski. 2025. ``Relaxing the Assumption of Strongly Non-Informative Linkage Error in Secondary Regression Analysis of Linked Files.'' \emph{arXiv preprint arXiv:2510.17553}. \url{https://doi.org/10.48550/arXiv.2510.17553}.

\bibitem[Cameron et al.(2011)Cameron, Gelbach, and Miller]{camerongelbachmiller2011}
Cameron, A. Colin, Jonah B. Gelbach, and Douglas L. Miller. 2011. ``Robust Inference with Multiway Clustering.'' \emph{Journal of Business \& Economic Statistics} 29(2): 238--249. \url{https://doi.org/10.1198/jbes.2010.07136}.

\bibitem[Chambers(2009)]{chambers2009}
Chambers, Ray. 2009. ``Regression Analysis of Probability-Linked Data.'' \emph{Official Statistics Research Series} 4. Statistics New Zealand. \url{https://natlib.govt.nz/records/20918954}.

\bibitem[Chambers and Diniz da Silva(2020)]{chamberssilva2020}
Chambers, Ray, and Andrea Diniz da Silva. 2020. ``Improved Secondary Analysis of Linked Data: A Framework and an Illustration.'' \emph{Journal of the Royal Statistical Society Series A: Statistics in Society} 183(1): 37--59. \url{https://doi.org/10.1111/rssa.12477}.

\bibitem[Chen and Shao(2004)]{chenshao2004}
Chen, Louis H. Y., and Qi-Man Shao. 2004. ``Normal Approximation under Local Dependence.'' \emph{The Annals of Probability} 32(3A): 1985--2028. \url{https://doi.org/10.1214/009117904000000450}.

\bibitem[Chipperfield(2019)]{chipperfield2019}
Chipperfield, James. 2019. ``A Weighting Approach to Making Inference with Probabilistically Linked Data.'' \emph{Statistica Neerlandica} 73(3): 333--350. \url{https://doi.org/10.1111/stan.12172}.

\bibitem[Dasylva(2018)]{dasylva2018}
Dasylva, Abel. 2018. ``Design-Based Estimation with Record-Linked Administrative Files and a Clerical Review Sample.'' \emph{Journal of Official Statistics} 34(1): 41--54. \url{https://doi.org/10.1515/jos-2018-0003}.

\bibitem[Goldstein et al.(2012)Goldstein, Harron, and Wade]{goldstein2012}
Goldstein, Harvey, Katie Harron, and Angie Wade. 2012. ``The Analysis of Record-Linked Data Using Multiple Imputation with Data Value Priors.'' \emph{Statistics in Medicine} 31(28): 3481--3493. \url{https://doi.org/10.1002/sim.5508}.

\bibitem[Guha et al.(2022)Guha, Reiter, and Mercatanti]{guha2022}
Guha, Sharmistha, Jerome P. Reiter, and Andrea Mercatanti. 2022. ``Bayesian Causal Inference with Bipartite Record Linkage.'' \emph{Bayesian Analysis} 17(4): 1275--1299. \url{https://doi.org/10.1214/21-BA1297}.

\bibitem[Gutman et al.(2013)Gutman, Afendulis, and Zaslavsky]{gutman2013}
Gutman, Roee, Christopher C. Afendulis, and Alan M. Zaslavsky. 2013. ``A Bayesian Procedure for File Linking to Analyze End-of-Life Medical Costs.'' \emph{Journal of the American Statistical Association} 108(501): 34--47. \url{https://doi.org/10.1080/01621459.2012.726889}.

\bibitem[Gutman et al.(2016)Gutman, Sammartino, Green, and Montague]{gutman2016}
Gutman, R., C. J. Sammartino, T. C. Green, and B. T. Montague. 2016. ``Error Adjustments for File Linking Methods Using Encrypted Unique Client Identifier (eUCI) with Application to Recently Released Prisoners Who Are HIV+.'' \emph{Statistics in Medicine} 35(1): 115--129. \url{https://doi.org/10.1002/sim.6586}.

\bibitem[Han and Lahiri(2019)]{hanlahiri2019}
Han, Ying, and Partha Lahiri. 2019. ``Statistical Analysis with Linked Data.'' \emph{International Statistical Review} 87(S1): S139--S157. \url{https://doi.org/10.1111/insr.12295}.

\bibitem[Hof and Zwinderman(2012)]{hofzwinderman2012}
Hof, M. H. P., and A. H. Zwinderman. 2012. ``Methods for Analyzing Data from Probabilistic Linkage Strategies Based on Partially Identifying Variables.'' \emph{Statistics in Medicine} 31(30): 4231--4242. \url{https://doi.org/10.1002/sim.5498}.

\bibitem[Hof and Zwinderman(2015)]{hofzwinderman2015}
Hof, M. H. P., and A. H. Zwinderman. 2015. ``A Mixture Model for the Analysis of Data Derived from Record Linkage.'' \emph{Statistics in Medicine} 34(1): 74--92. \url{https://doi.org/10.1002/sim.6315}.

\bibitem[Horvitz and Thompson(1952)]{horvitz1952}
Horvitz, D. G., and D. J. Thompson. 1952. ``A Generalization of Sampling Without Replacement from a Finite Universe.'' \emph{Journal of the American Statistical Association} 47(260): 663--685. \url{https://doi.org/10.1080/01621459.1952.10483446}.

\bibitem[Hu and Reiter(2027)]{hureiter2027}
Hu, Xueyan, and Jerome P. Reiter. 2027. ``Regression Analysis after Bipartite Bayesian Record Linkage.'' \emph{Computational Statistics \& Data Analysis} 225: 108461. \url{https://doi.org/10.1016/j.csda.2026.108461}.

\bibitem[Kamat and Gutman(2026)]{kamatgutman2026}
Kamat, Gauri, and Roee Gutman. 2026. ``Analysis of Linked Files: A Missing Data Perspective.'' \emph{Statistical Science} 41(1): 28--48. \url{https://doi.org/10.1214/24-STS939}.

\bibitem[Kamat et al.(2023)Kamat, Shan, and Gutman]{kamatshan2023}
Kamat, Gauri, Mingyang Shan, and Roee Gutman. 2023. ``Bayesian Record Linkage with Variables in One File.'' \emph{Statistics in Medicine} 42(27): 4931--4951. \url{https://doi.org/10.1002/sim.9894}.

\bibitem[Kim and Chambers(2012a)]{kimchambers2012a}
Kim, Gunky, and Raymond Chambers. 2012a. ``Regression Analysis under Probabilistic Multi-Linkage.'' \emph{Statistica Neerlandica} 66(1): 64--79. \url{https://doi.org/10.1111/j.1467-9574.2011.00509.x}.

\bibitem[Kim and Chambers(2012b)]{kimchambers2012b}
Kim, Gunky, and Raymond Chambers. 2012b. ``Regression Analysis under Incomplete Linkage.'' \emph{Computational Statistics \& Data Analysis} 56(9): 2756--2770. \url{https://doi.org/10.1016/j.csda.2012.02.026}.

\bibitem[Lahiri and Larsen(2005)]{lahiri2005}
Lahiri, P., and Michael D. Larsen. 2005. ``Regression Analysis with Linked Data.'' \emph{Journal of the American Statistical Association} 100(469): 222--230. \url{https://doi.org/10.1198/016214504000001277}.

\bibitem[Lin(2013)]{lin2013}
Lin, Winston. 2013. ``Agnostic Notes on Regression Adjustments to Experimental Data: Reexamining Freedman's Critique.'' \emph{The Annals of Applied Statistics} 7(1): 295--318. \url{https://doi.org/10.1214/12-AOAS583}.

\bibitem[Sadinle(2017)]{sadinle2017}
Sadinle, Mauricio. 2017. ``Bayesian Estimation of Bipartite Matchings for Record Linkage.'' \emph{Journal of the American Statistical Association} 112(518): 600--612. \url{https://doi.org/10.1080/01621459.2016.1148612}.

\bibitem[Shan et al.(2021)Shan, Thomas, and Gutman]{shan2021}
Shan, Mingyang, Kali S. Thomas, and Roee Gutman. 2021. ``A Multiple Imputation Procedure for Record Linkage and Causal Inference to Estimate the Effects of Home-Delivered Meals.'' \emph{The Annals of Applied Statistics} 15(1): 412--436. \url{https://doi.org/10.1214/20-AOAS1397}.

\bibitem[Slawski et al.(2025)Slawski, West, Bukke, Wang, Diao, and Ben-David]{slawskietal2025}
Slawski, Martin, Brady T. West, Priyanjali Bukke, Zhenbang Wang, Guoqing Diao, and Emanuel Ben-David. 2025. ``A General Framework for Regression with Mismatched Data Based on Mixture Modelling.'' \emph{Journal of the Royal Statistical Society Series A: Statistics in Society} 188(3): 896--919. \url{https://doi.org/10.1093/jrsssa/qnae083}.

\bibitem[Steorts(2015)]{steorts2015}
Steorts, Rebecca C. 2015. ``Entity Resolution with Empirically Motivated Priors.'' \emph{Bayesian Analysis} 10(4): 849--875. \url{https://doi.org/10.1214/15-BA965SI}.

\bibitem[Tahamont et al.(2021)Tahamont, Jelveh, Chalfin, Yan, and Hansen]{tahamont2021}
Tahamont, Sarah, Zubin Jelveh, Aaron Chalfin, Shi Yan, and Benjamin Hansen. 2021. ``Dude, Where's My Treatment Effect? Errors in Administrative Data Linking and the Destruction of Statistical Power in Randomized Experiments.'' \emph{Journal of Quantitative Criminology} 37(3): 715--749. \url{https://doi.org/10.1007/s10940-020-09461-x}.

\bibitem[Thompson(2011)]{thompson2011}
Thompson, Samuel B. 2011. ``Simple Formulas for Standard Errors That Cluster by Both Firm and Time.'' \emph{Journal of Financial Economics} 99(1): 1--10. \url{https://doi.org/10.1016/j.jfineco.2010.08.016}.

\bibitem[Wortman and Reiter(2018)]{wortman2018}
Wortman, Joan Heck, and Jerome P. Reiter. 2018. ``Simultaneous Record Linkage and Causal Inference with Propensity Score Subclassification.'' \emph{Statistics in Medicine} 37(24): 3533--3546. \url{https://doi.org/10.1002/sim.7911}.

\bibitem[Zhang and Tuoto(2021)]{zhangtuoto2021}
Zhang, Li-Chun, and Tiziana Tuoto. 2021. ``Linkage-Data Linear Regression.'' \emph{Journal of the Royal Statistical Society Series A: Statistics in Society} 184(2): 522--547. \url{https://doi.org/10.1111/rssa.12630}.

\bibitem[Zhang and Tuoto(2024)]{zhangtuoto2024}
Zhang, Li-Chun, and Tiziana Tuoto. 2024. ``Categorical Linkage-Data Analysis.'' \emph{Statistics in Medicine} 43(18): 3463--3483. \url{https://doi.org/10.1002/sim.10134}.

\end{thebibliography}
\end{document}